\documentclass[11pt]{article}

\usepackage[margin=1in]{geometry}
\usepackage{float}
\usepackage{adjustbox}
\usepackage{booktabs}
\usepackage{xcolor}
\usepackage{amsmath}
\usepackage{amssymb}
\usepackage{caption}
\usepackage{amsthm}
\usepackage{subcaption}
\usepackage{mathtools}
\usepackage{comment}
\usepackage{soul}
\usepackage{hyperref}
\usepackage{authblk}

\theoremstyle{definition}
\newtheorem{assumption}{Assumption}[section]

\newtheorem{remark}{Remark}[section]

\newtheorem{corollary}{Corollary}[section]

\newtheorem{proposition}{Proposition}[section]

\newcommand{\ct}[2]{#1\,{\scriptsize(#2)}}
\newcommand{\na}{\multicolumn{1}{c}{--}}
\newcommand{\grp}[1]{\multicolumn{3}{@{}l}{\textit{#1}}}

\usepackage[
    backend=biber,
    style=apa,
    natbib=true,
    uniquename=false,
    url=false,
    isbn=false,
    doi=true,
    eprint=false,
]{biblatex}
\title{The fine structure of electricity price volatility}
\author[1,2,3]{Thomas K. Kloster}
\author[2]{Fred Espen Benth}

\affil[1]{Department of Economics and Business Economics, Aarhus University}
\affil[2]{Department of Data Science and Analytics, BI Norwegian Business School}
\affil[3]{CoRE, Center for Research in Energy: Economics and Markets}

\begin{document}
\maketitle

\begin{abstract}
    We conduct the first rigorous study of electricity price volatility for the full panel of electricity prices across three European generation zones. By interpreting the observed day-ahead prices as local averages of a latent price process governed by a stochastic partial differential equation, we develop estimators of the weekly integrated variance. The inherently infinite dimensional setting introduce several complications that are not relevant in the conventional finite dimensional semimartingale setting, and we spend considerable effort in dealing with these. In particular, we must account for both mean-reversion in prices and semigroup-smoothing in the estimated variance. We provide a detailed decomposition and interpretation of the empirical estimates across three vastly different European generation zones, namely Germany, Norway, and Spain. Our findings indicate that each zone has very different drivers of volatility, and that the impact of generation variables differs considerably. We document that leverage effects appear to be present at first sight, but disappear once we condition on suitable state variables, thereby showing that electricity price volatility does not generally exhibit asymmetric responses to price shocks. 
\end{abstract}

\noindent\textbf{Keywords:} Electricity markets, realized volatility, functional data analysis, quadratic variation, mean-reversion, leverage effect

\section{Introduction}
Volatility is a ubiquitous term in finance and typically refers to the second order structure of the conditional return distribution. It is generally regarded as being the most dominant time-varying feature of the distribution and plays a critical role in, e.g., portfolio management, derivatives pricing, and trading impact \citep{AndersenBollerslev1998,AndersenBollerslevDieboldEbens2001,AndersenBollerslevDieboldLabys2003,BNS2002_RV,ChristensenKinnebrockPodolskij2010,ChristensenOomenPodolskij2010,ChristensenPodolskij2012,ChristoffersenJacobsMimouni2010,FrenchSchwertStambaugh1987,RealizedRange2008}. Depending on the horizon and data availability, different proxies for the second order structure can be considered, but most modern measures rely on quadratic variation theory.

In this work, the goal is to investigate the conditional second order structure in European day-ahead (spot) electricity markets in a systematic and rigorous way, down to the individual delivery period. While there is a large literature on stylized facts and forecasting of spot price \emph{levels}, there seems to be less focus on the \emph{volatility} of the panel of prices. The market for electricity is very different from classical equity markets and it therefore requires different interpretations and modeling tools \citep[see][for comprehensive introductions]{BenthMonograph,Weron2014}. The most notable feature of electricity spot prices is that they evolve as a high-dimensional multivariate time series $P_t\in\mathbb{R}^d$, where each component $P_t^{(i)}$ represents the price of consuming 1MWh of electricity during the $i$'th \emph{delivery period}. Typically, it is the case that $d=24$ or $d=96$, which, respectively, corresponds to hourly and quarter-hourly delivery periods throughout the day. Unlike stock prices, electricity spot prices can also be negative, which means that log-transformations are not generally feasible, as is otherwise common in the literature on volatility. Other unique features of electricity spot prices that distinguish them from traditional asset prices are the presence of mean-reversion, short-lived spikes, and seasonality effects. Furthermore, the cross-sectional dependence structure of electricity prices exhibits a very particular diurnal pattern as shown in \citet{Kloster2026}; in particular, electricity spot prices exhibit ``cyclicality'' in the sense that the earliest and the latest delivery period of the day are more highly correlated than, e.g., the earliest and the mid-day delivery periods. This is generally ascribed to the strong correlation of the underlying supply and demand at these periods, which is largely agnostic to when the price is determined. 

These characteristics can be modeled by imposing suitable assumptions within a multivariate model, but it is illustrated in \citet{Kloster2026} that one obtains a flexible and parsimonious framework by letting $P_t$ evolve as a function-valued process, such that the daily cross section of prices corresponds to observations along a latent price curve. More precisely, it is shown that when $P_t$ takes values in the separable Hilbert space of square-integrable functions on the unit circle, denoted by $L^2(\mathbb{S}^1)$ where $\mathbb{S}^1$ is the unit circle, then cyclicality of prices is essentially a model-free implication. The functional time series perspective is not new in the context of electricity spot markets \citep[see, e.g.,][]{EPF_functional1,EPF_functional2,EPF_functional3}, but the choice of $\mathbb{S}^1$ as an index for the cross sectional dimension introduced in \citet{Kloster2026} yields a particularly tractable framework in which we shall operate. The infinite dimensional perspective have several convenient implications that cannot be modelled by conventional methods, in addition to the diurnal periodic structure induced in the prices from the embedding into $L^2(\mathbb{S}^1)$. It allows for modeling arbitrary and changing partitions of the day into delivery periods, which is relevant as these are not fixed throughout time or generation zone in practice. It also allows for the interpretation of spot prices as functionals of the latent daily price curve, which is a perspective we adopt throughout. Furthermore, the interpretation of both time and cross section as a continuum is very tractable for the purpose of derivatives pricing, which is an area where infinite dimensional models have generally been used to great effect \citep[][]{BenthParaschiv2018,AmbitFutures2014,CuchieroEnergy2024}, and an area where volatility modeling is particularly important.

To illustrate our framework, suppose that $\lbrace [h_{i-1},h_{i})\rbrace_{i=1}^{d}$ is the partition of delivery periods where $h_{0}=0<h_{1}<\cdots <h_{d}=2\pi$. Each $h_i$ represents a unique time of the day via the canonical mapping onto the circle $h_{i}\mapsto(\cos(h_i),\sin(h_i))=(\cos (i/(2\pi d)),\sin(i/(2\pi d)))$. Let $X_t(h)$ be the instantaneous latent price of electricity on time $h$ of day $t$ and suppose that the function-valued process $X_t(\cdot)$  
takes values in $L^2(\mathbb{S}^1)$. Since electricity can be consumed at any point during the delivery period, it is economically reasonable that the \emph{observed} price, $P_t^{(i)}$, corresponds to the average of the true, but unobserved, price over that period. More precisely, we find it natural to impose that
\begin{equation}\label{eq:price_local_average}
P_{t}^{(i)}=\frac{1}{(h_i-h_{i-1})}\int_{h_{i-1}}^{h_i}X_t(h)dh.
\end{equation}
In this view, each observed price $P_{t}^{(i)}$ is a bounded linear functional of $X_t$, which has several convenient statistical implications. In particular, we do not a priori require the latent process $X_t$ to take values in a ``nice'' subspace of $L^2(\mathbb{S}^1)$, which is often required to make the evaluation functional $\delta_x f=f(x)$ continuous. By formulating the evolution of $X_t(\cdot)$ as a stochastic partial differential equation (SPDE) in $L^2(\mathbb{S}^1)$, we may construct an estimate of the integrated variance -- which is an operator on $L^2(\mathbb{S}^1)$ -- via methods related to quadratic variation theory. Our methods and results are similar to those of \citet{BenthSchroersVeraart2022,BenthSchroersVeraart2024}, who study realized covariation for a class of semilinear SPDE's in a high frequency setting. They introduce the so-called semigroup-adjusted realized covariation, which generalizes the multivariate realized covariation of \citet{BNS_2004_RCV}. Their semigroup adjustment is a natural consequence of the functional setting, and must be accounted for to ensure that certain propagation effects disappear in the infill limit. A similar adjustment turns out to play a role in our long-span setting, and we discuss the relation to the high frequency case as well as classical realized covariation in detail.

To the best of our knowledge, the literature on volatility estimation in electricity markets is rather sparse, likely due to the complexity of the task. An approach for volatility estimation at the level of individual delivery periods is presented in \citet{Erdogu2016}, where the price in each delivery period is treated as a separate and independent market. Based on our integrated variance estimator, we construct estimates of the weekly integrated variance in several European markets that, in contrast, explicitly and non-parametrically account for the strong dependence between individual prices and we study how the daily ``term structures'' of volatility behave via principal components analysis. We document that so-called propagation effects such as price mean-reversion make up more than $40\%$ of the total price variation and it is thus important to handle these effects in a robust manner. We show that our realized covariation estimator correlates highly and positively with the price level throughout several generation zones, but that the correlation with generation variables (such as wind or solar power generation) varies substantially with the generation zone. We also consider the leverage effect, which has occasionally been described as an \emph{inverse} leverage effect in electricity markets, since high spot prices are associated to high volatility. We show that there is evidence of an unconditional \emph{weak form} of inverse leverage, but that the asymmetric effect between increases and decreases in the price are largely explained by the price level itself and mean-reversion in volatility. 

The rest of the paper is structured as follows. In Section~\ref{sec:model}, we set up the model and construct the realized covariation estimator. Section~\ref{sec:rcv_of_prices} considers the resulting volatility estimates from data on three generation zones and how they relate to several structural variables. Section~\ref{sec:decomposing_rcv} contains more detailed analyses of some aspects of the realized covariation, Section~\ref{sec:inverse_leverage} studies the inverse leverage effect in detail, and Section~\ref{sec:conclusion} concludes.

\section{Model formulation}\label{sec:model}
We shall model the daily cross section of latent prices $X_t$ as the mild solution to an SPDE of the form \citep[see, e.g., ][for a comprehensive treatment of such SPDE's]{DaPratoZabczyk2014}
\begin{equation}\label{eq:model}
dX_t = \mu_t dt + \mathcal{A}X_tdt + \sigma_t dW_t,
\end{equation}
where $W_t$ is a cylindrical Wiener process on $L^{2}(\mathbb{S}^1)$ generating a filtration $\mathcal{F}_t=\sigma(W_{t}:t\geq 0)$ and $\mathcal{A}$ is an unbounded operator on $\mathbb{S}^1$ generating a $C_0$-semigroup $\mathcal{S}$. The process $\mu_t$ represents a stochastic drift with state-space $L^2(\mathbb{S}^1)$ and we impose that this is $\mathcal{F}_t$-predictable and Bochner integrable. The process $\sigma_t$ represents the instantaneous volatility process and we impose that this is $\mathcal{F}_t$-predictable and takes values in the space of Hilbert-Schmidt operators on $L^2(\mathbb{S}^1)$. As we are interested in characterizing the second order structure of prices, we impose in all of the following that $\sup_{t\in [0,T]}\mathbb{E}\left[\lVert X_t \rVert^2_{L^2(\mathbb{S}^1)}\right]<\infty$, where $T>0$ is the maximum time horizon of interest. Ignoring the drift for simplicity, a sufficient condition is that 
\begin{equation}\label{eq:sigma_second_moment}
\mathbb{E}\left[ \int_{0}^{T}\lVert \mathcal{S}(t-s)\sigma_s\rVert_{HS}^2\;ds\right]<\infty,
\end{equation}
where $\lVert \cdot \rVert_{HS}$ denotes the Hilbert-Schmidt norm. The mild solution to the SPDE \eqref{eq:model} can be written in terms of the semigroup $\mathcal{S}$ as 
\begin{equation}\label{eq:mild_formulation}
    X_t = \mathcal{S}(t)X_{0} + \int_{0}^{t}\mathcal{S}(t-s)\mu_sds + \int_{0}^{t}\mathcal{S}(t-s)\sigma_s dW_s.
\end{equation}
If $\mathcal{S}$ admits an integral kernel $p(t-s,x,y)$ such that $(\mathcal{S}(t)f)(x)=\int p(t,x,y)f(y)dy$, then the price process $X_t$ has the martingale measure representation
\begin{equation}\label{eq:martingale_formulation}
    X_{t}(h) = \int_{\mathbb{S}^1}p(t,x,y)X_{0}(y)dy + \int_{0}^{t}\int_{\mathbb{S}^1}p(t-s,x,y)\mu_s(y)dyds + \int_{0}^{t}\int_{\mathbb{S}^1}p(t-s,x,y)\sigma_s (y)W(ds,dy),
\end{equation}
where $W$ is a standard white noise on $\mathbb{R}_+\times \mathbb{S}^1$ of \citep[in the sense of][]{Walsh1986}. Note that $W$ corresponds to a Gaussian Lévy basis on $\mathbb{R}_+\times \mathbb{S}^1$ with zero mean, unit variance and control measure equal to the canonical volume measure on $\mathbb{R}_+\times \mathbb{S}^1$. The martingale measure formulation \eqref{eq:martingale_formulation} therefore shows that the model is a special case of the time-stationary ambit field proposed in \citet{Kloster2026}, with a particular choice of kernel function. In contrast to the ambit field framework, we have much less freedom to choose $p(t,x,y)$, but we note that $p$ can still exhibit temporal singularities, which have been argued for in electricity spot prices in \citet{Bennedsen2017}.

The semigroup $\mathcal{S}$ is a model ingredient which should be specified. It may be known in certain cases, such as in the modeling of forward rates or electricity futures, where the shift semigroup arises naturally. In our case of electricity spot prices, it is not obvious what $\mathcal{S}$ should be and even a classical and tractable choice such as the heat semigroup is only known up to a scaling constant. In the following, we shall see that our measure of realized covariation is, in the best case, only identifiable up to smoothing by $\mathcal{S}$. This does not necessarily constitute a problem, but it changes the interpretation of the measured volatility proxy. In future applications, it will be interesting to study appropriate choices of $\mathcal{S}$ in more detail. 

\begin{remark}\label{rem:OU_1dim}
    The model \eqref{eq:model} generalizes Ornstein-Uhlenbeck type models with stochastic volatility. Indeed, if the spatial dimension is collapsed to a single point, then we are in a standard one-dimensional setting and the only possible choice of $C_0$-semigroup is of the form $\mathcal{S}(t)x=e^{\lambda t}x$ with generator $\mathcal{A}x=\lambda x$ for some $\lambda\in\mathbb{R}$. Hence the model takes the familiar form
    \[
    dX_t = \mu_t dt + \lambda X_tdt + \sigma_t dW_t,
    \]
    where $W_t$ is a standard Brownian motion. 
\end{remark}

\subsection{Volatility estimation}\label{sec:vol_estimation}
Since electricity prices are determined and revealed at a daily frequency, this imposes a fixed upper bound on the sampling frequency. We can therefore not rely on high-frequency infill type results as developed in \citet{BenthSchroersVeraart2022,BenthSchroersVeraart2024}, but are instead constrained to a long-span setting where we cannot expect the discretization error of the SPDE \eqref{eq:model} to vanish. In the following, whenever $A$ is a finite dimensional vector or matrix, we denote by $\lVert A \rVert$ the Euclidean norm on $\mathbb{R}^d$ or the Frobenius norm of $A$, and we denote by $\Sigma_t=\sigma_t\sigma_t^\ast$ the instantaneous variance operator. 

Under the local average interpretation of prices in \eqref{eq:price_local_average}, the individual daily observations are of the form $\widetilde{X}_t^{(i)}$, where
\[
\widetilde{X}^{(i)}_t = \langle X_t,g_i\rangle_{L^2(\mathbb{S}^1)}, \quad g_{i}=\frac{1}{(h_{i}-h_{i-1})}\mathbf{1}_{[h_{i-1},h_i)}.
\]
Suppose that we have $0=t_{0}<t_{1}<\cdots t_{N}$ daily cross sectional observations and collect these in vectors $\widetilde{X}_{t_n}=(\widetilde{X}_{t_n}^{(1)},\ldots ,\widetilde{X}_{t_n}^{(d)})^\top$ and define
\[
\Delta_{n}\widetilde{X} := \widetilde{X}_{t_{n}}-\widetilde{X}_{t_{n-1}}\in \mathbb{R}^d.
\]
\begin{proposition}\label{prop:increment_decomp}
    Define the observation operator
    $A:L^2(\mathbb{S}^1)\to \mathbb{R}^d$ by
    \[
    (A x)_i := \langle x,g_i\rangle_{L^2(\mathbb{S}^1)},
    \]
    such that the observed vector of local averages is $\widetilde{X}_t=AX_t$ with $X_t$ as in \eqref{eq:model}. Then it holds that
    \begin{equation}\label{eq:final-matrix}
    \begin{aligned}
    \mathbb E\left[(\Delta_n\widetilde X)(\Delta_n\widetilde X)^\top\mid \mathcal{F}_{t_{n-1}}\right]
    &=
    \mathbb E\big[A(B_n+D_n)(B_n+D_n)^\ast A^\ast\mid \mathcal{F}_{t_{n-1}}\big] \\
    &\quad \quad +
    A\mathbb E\left[\int_{t_{n-1}}^{t_{n}} \mathcal{S}(t_n-s)\Sigma_s \mathcal{S}(t_n-s)^\ast ds\mid\mathcal{F}_{t_{n-1}}\right]A^\ast, 
    \end{aligned}
    \end{equation}
    where $B_n,D_n$ are, respectively, the propagation and the drift terms
    \[
        B_n = (S(\delta)-I)X_{t_{n-1}}, \quad D_n = \int_{t_{n-1}}^{t_{n}}\mathcal{S}(t_{n} -s)\mu_sds.
    \]
    Equivalently, for $i,j\in\{1,\dots,d\}$, we have
    \begin{align*}
    \mathbb E\big[\Delta_n\widetilde X^{(i)}\Delta_n\widetilde X^{(j)}\big]
    &=
    \mathbb E\big[\langle B_n+D_n,g_i\rangle_{L^2(\mathbb{S}^1)}\,\langle B_n+D_n,g_j\rangle_{L^2(\mathbb{S}^1)}\big]
    \\
    &\quad+
    \mathbb E\left[\int_{t_{n-1}}^{t_n}
    \Big\langle \Sigma_s\,\mathcal{S}(t_n-s)^\ast g_j,\; \mathcal{S}(t_n-s)^\ast g_i\Big\rangle_{L^2(\mathbb{S}^1)}\,ds\right].
    \end{align*}
\end{proposition}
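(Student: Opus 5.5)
The plan is to use the semigroup property of the mild solution \eqref{eq:mild_formulation} to rewrite the increment over $[t_{n-1},t_n]$ (with $\delta:=t_n-t_{n-1}$) as a sum of an $\mathcal{F}_{t_{n-1}}$-measurable part, a drift part and a stochastic convolution. Then apply the bounded linear map $A$, expand the outer product, and identify the conditional second moment of the stochastic integral term via the Itô isometry for Hilbert-space valued stochastic integrals. Concretely, since $\mathcal{S}(t_n)=\mathcal{S}(\delta)\mathcal{S}(t_{n-1})$ and $\mathcal{S}(t_n-s)=\mathcal{S}(\delta)\mathcal{S}(t_{n-1}-s)$ for $s\le t_{n-1}$, we have
\[
X_{t_n}=\mathcal{S}(\delta)X_{t_{n-1}}+\int_{t_{n-1}}^{t_n}\mathcal{S}(t_n-s)\mu_s\,ds+\int_{t_{n-1}}^{t_n}\mathcal{S}(t_n-s)\sigma_s\,dW_s .
\]
Hence $X_{t_n}-X_{t_{n-1}}=B_n+D_n+M_n$, where $M_n:=\int_{t_{n-1}}^{t_n}\mathcal{S}(t_n-s)\sigma_s\,dW_s$. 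Since $A$ is bounded (each $g_i\in L^2(\mathbb{S}^1)$), it follows that $\Delta_n\widetilde X=A(B_n+D_n)+AM_n$.

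Next we expand
\[
(\Delta_n\widetilde X)(\Delta_n\widetilde X)^\top=A(B_n+D_n)(B_n+D_n)^\ast A^\ast+A(B_n+D_n)(AM_n)^\top+AM_n(A(B_n+D_n))^\top+AM_n(AM_n)^\top,
\]
interpreting $x x^\ast$ as the rank-one operator $x\otimes x$, and take conditional expectations given $\mathcal{F}_{t_{n-1}}$.

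For the last term, note that $(AM_n)_i=\int_{t_{n-1}}^{t_n}\langle \mathcal{S}(t_n-s)\sigma_s\,dW_s,g_i\rangle$ is a real stochastic integral. Its integrand $\sigma_s^\ast\mathcal{S}(t_n-s)^\ast g_i$ is $L^2(\mathbb{S}^1)$-valued and predictable, and it is square integrable by \eqref{eq:sigma_second_moment}. The conditional Itô isometry (polarized) therefore gives
\[
\mathbb E[(AM_n)_i(AM_n)_j\mid\mathcal{F}_{t_{n-1}}]=\mathbb E\Big[\int_{t_{n-1}}^{t_n}\langle \Sigma_s\mathcal{S}(t_n-s)^\ast g_j,\mathcal{S}(t_n-s)^\ast g_i\rangle ds\,\Big|\,\mathcal{F}_{t_{n-1}}\Big].
\]
In operator form this is exactly $A\,\mathbb E[\int \mathcal{S}\Sigma_s\mathcal{S}^\ast ds\mid\mathcal{F}_{t_{n-1}}]A^\ast$. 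The entrywise version in the statement then follows by taking unconditional expectations via the tower property.

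The main obstacle is the cross terms. $B_n$ is $\mathcal{F}_{t_{n-1}}$-measurable, and since $M_n$ has zero conditional mean, $\mathbb E[A B_n (AM_n)^\top\mid\mathcal{F}_{t_{n-1}}]=0$ immediately. The drift term $D_n$, however, depends on $\mu_s$ for $s>t_{n-1}$, which is only predictable, so $\mathbb E[\langle D_n,g_i\rangle (AM_n)_j\mid\mathcal{F}_{t_{n-1}}]$ need not vanish in general. I would first try to handle it by conditioning: if $\mu$ is, say, $\mathcal{F}_{t_{n-1}}$-measurable on $[t_{n-1},t_n)$ (or independent of the increments of $W$), the term vanishes like the $B_n$ cross term. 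Otherwise I would state explicitly the assumption that the drift–martingale covariation is negligible or zero, which is implicitly what \eqref{eq:final-matrix} requires. I would flag this as the point where an additional condition on $\mu$ is being used.

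Finally, I would verify integrability, namely that all products are integrable, using $\sup_t\mathbb E\|X_t\|^2<\infty$, Bochner integrability of $\mu$ and Cauchy–Schwarz. This justifies taking the conditional expectations termwise.
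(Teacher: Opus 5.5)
Your route is the paper's route. Its appendix proof also writes $\Delta_n\widetilde X=A(B_n+D_n+M_n)$ with $M_n$ the stochastic convolution over $(t_{n-1},t_n]$, discards the cross terms because $\mathbb{E}[M_n\mid\mathcal{F}_{t_{n-1}}]=0$, and identifies the remaining term by the It\^o isometry. The paper's displayed definition of $M_n$ as a conditional expectation is a slip; yours is the intended object.

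Where you part ways is the drift, and there your caution is justified. The paper simply asserts that $D_n$ is $\mathcal{F}_{t_{n-1}}$-measurable ``(or predictable over $[t_{n-1},t_n]$)'' and concludes that all cross terms vanish, but predictability does not suffice. Take the scalar reduction of Remark~\ref{rem:OU_1dim} with $\lambda=0$, $\sigma\equiv 1$ and $\mu_s=W_s-W_{t_{n-1}}$ for $s\in[t_{n-1},t_n]$. This embeds into the $L^2(\mathbb{S}^1)$ model by letting $\mu$ and $\sigma$ act only on constants and using a semigroup that fixes constants, such as the heat semigroup. Then $\mathbb{E}[D_nM_n\mid\mathcal{F}_{t_{n-1}}]=\int_{t_{n-1}}^{t_n}(s-t_{n-1})\,ds=\delta^2/2\neq0$, so \eqref{eq:final-matrix} is off by $\delta^2$.

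So the additional hypothesis you flag is genuinely needed for the identity as stated: either the drift on $[t_{n-1},t_n)$ is known at $t_{n-1}$, or it is independent of the subsequent noise increments. The alternative is to keep the drift--martingale cross term explicitly. Under bounded second moments of $\mu$ and $\sigma$, Cauchy--Schwarz gives only $O(\delta^{3/2})$ for it.

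The rest of your argument is sound: the semigroup splitting of the mild solution, the polarized conditional isometry, and the tower property for the entrywise unconditional version.
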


Since observations are fixed at a daily frequency, we can set $\delta=(t_n-t_{n-1})$ and readily define the annualized realized covariation over some window of $w$ days as
\begin{equation}\label{eq:RCV_estimator}
\widehat{\Sigma}_j^w = \frac{1}{\delta w}\sum_{\ell = 1}^{w}(\Delta_{(j-1)w+\ell}\widetilde{X})(\Delta_{(j-1)w+\ell}\widetilde{X})^\top.
\end{equation}
When the period length $w$ is implicit, we shall simply refer to the estimator $\widehat{\Sigma}_j^w$ in \eqref{eq:RCV_estimator} as the RCV as shorthand. An application of Proposition~\ref{prop:increment_decomp}, the tower property, and a change of variables yields that
\begin{equation}\label{eq:RCV_interpretation}
\begin{aligned}
\mathbb{E}\left[ \widehat{\Sigma}_j^w\mid \mathcal{F}_{t_{w(j-1)}} \right] &= \frac{1}{\delta w}\sum_{\ell = 1}^{w}\mathbb{E}\left[ (\Delta_{w(j-1)+\ell}\widetilde{X})(\Delta_{w(j-1)+\ell} \widetilde{X})^\top\mid \mathcal{F}_{t_{w(j-1)}}\right] \\
&= \frac{1}{\delta w}\sum_{\ell=1}^{w}A\mathbb{E}\big[ ((B_n+D_n)(B_n+D_n)^\ast)\mid\mathcal{F}_{t_{w(j-1)}}\big]A^\ast \\
&\quad\quad + \frac{1}{\delta w}A \mathbb{E}\left[ \int_{0}^{\delta}\mathcal{S}(u)\left(\sum_{\ell=1}^{w}\Sigma_{w(t_j-1)+\ell-u}\right)\mathcal{S}(u)^\ast du \mid  \mathcal{F}_{t_{w(j-1)}}\right]A^\ast.
\end{aligned}
\end{equation}
This has the intuitive interpretation that the annualized realized covariation over the next $w$ days is the average (conditionally) expected propagation and drift bias, plus the expected semigroup-weighted average variance over the period. As such, the realized covariation \eqref{eq:RCV_estimator} is a noisy proxy of the conditional average variance, in complete analogy to the finite dimensional setting, but with added bias coming from the semigroup. Supposing that all drift and propagation bias can be removed (i.e., $B_n=D_n=0$), we end up with an estimate of the form
\[
\mathbb{E}\left[ \widehat{\Sigma}_j^w\mid \mathcal{F}_{t_{w(j-1)}} \right] = \frac{1}{\delta w}A \mathbb{E}\left[ \int_{0}^{\delta}\mathcal{S}(u)\left(\sum_{\ell=1}^{w}\Sigma_{w(t_j-1)+\ell-u}\right)\mathcal{S}(u)^\ast du \mid  \mathcal{F}_{t_{w(j-1)}}\right]A^\ast.
\]
This estimate does not generally allow us to extract the raw expected average of $\Sigma_t$ due to the semigroup-weighting. We may, however, still think of the semigroup-weighted average variance as an ``innovation term'', which is only related to the stochastic forcing $\sigma_tdW_t$ in \eqref{eq:model} and thus uniquely contains information on the shocks to prices. We note that in the high-frequency infill asymptotic setting of \citet{BenthSchroersVeraart2022,BenthSchroersVeraart2024}, the so-called semigroup adjusted realized covariation is used to adjust the naive realized covariation estimator, such that the propagation bias $B_n$ disappears in the limit when the mesh between observations tends to zero.

In order to ensure convergence of the average sample realized covariation to a fixed limit, we require stationarity and ergodicity of the sampling sequence. For the series of local averages $(\Delta_n\widetilde{X})$, this is inherited from stationarity and ergodicity of the underlying price differences $(\Delta_n X)$ or the latent process $(X_t,\sigma_t)$ itself. Standard criteria for stationarity of semilinear SPDE's can be found in \citet{DaPratoGatarekZabczyk1992}, which typically boil down to integrability conditions of the stochastic convolution appearing in \eqref{eq:mild_formulation} and stability of the semigroup $\mathcal{S}$. Ergodicity typically requires stronger stability conditions on the semigroup $\mathcal{S}$, and some rather technical criteria on the SPDE dynamics are given in, e.g., \citet{HairerMattingly2011}. A corresponding central limit theorem can be derived by imposing suitable mixing conditions on the SPDE, but we will not pursue this here. 

\begin{assumption}\label{ass:main_stat_assumption}
    The sequence of differenced local averages $(\Delta_n \widetilde{X})_{n\in\mathbb{N}}$ is strictly stationary and ergodic.
\end{assumption}

\begin{proposition}\label{prop:long_span_limit}
    Let $\delta = t_{n}-t_{n-1}$ and let $w\in\mathbb{N}$ denote a window length and suppose that we can partition the observations $(\widetilde{X}_{t_n})_{n=1}^{N}$ into $N_w$ disjoint windows of length $w$. Suppose that the model \eqref{eq:model} satisfies Assumption~\ref{ass:main_stat_assumption} and that $t_0=0$. Then    \begin{equation}\label{eq:long_span_limit}
        \frac{1}{N_w}\sum_{j=1}^{N_w}\widehat{\Sigma}_{j}^{w} \xrightarrow[N_w\to\infty]{a.s.} \mathbb{E}\left[ A(B_1+D_1)(B_1+D_1)^\ast A^\ast  \right] + A\mathbb{E}\left[ \int_{0}^{\delta}\mathcal{S}(\delta -s)\Sigma_s\mathcal{S}(\delta -s)^\ast ds \right]A^\ast,
    \end{equation}
    where $\widehat{\Sigma}_j^w$ is defined by \eqref{eq:RCV_estimator} and $B_1,D_1$ in Proposition~\ref{prop:increment_decomp}.
\end{proposition}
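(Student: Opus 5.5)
The plan is to reduce the statement to Birkhoff's ergodic theorem applied to the stationary ergodic sequence $(\Delta_n\widetilde{X})_{n\in\mathbb{N}}$ of Assumption~\ref{ass:main_stat_assumption}, and then to identify the resulting limit using Proposition~\ref{prop:increment_decomp} at $n=1$. Because the $N_w$ windows are disjoint and consecutive and each $\widehat{\Sigma}_j^w$ is a normalized sum of outer products, the average over windows collapses into a single time average:
\[
\frac{1}{N_w}\sum_{j=1}^{N_w}\widehat{\Sigma}_j^w=\frac{1}{\delta}\cdot\frac{1}{N_w w}\sum_{n=1}^{N_w w}(\Delta_n\widetilde{X})(\Delta_n\widetilde{X})^\top .
\]
With this identity the window length $w$ no longer matters. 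The problem becomes the almost sure convergence of the empirical second moment matrix of a stationary ergodic $\mathbb{R}^d$-valued sequence.

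Next I apply the ergodic theorem to each entry. The map $x\mapsto xx^\top$ is measurable, so $(\Delta_n\widetilde{X})(\Delta_n\widetilde{X})^\top$ is again strictly stationary and ergodic, being a measurable function of the coordinate process applied shift by shift. It is also integrable, since
\[
\mathbb{E}\|\Delta_n\widetilde{X}\|^2\le 2\|A\|^2\Bigl(\mathbb{E}\|X_{t_n}\|^2_{L^2(\mathbb{S}^1)}+\mathbb{E}\|X_{t_{n-1}}\|^2_{L^2(\mathbb{S}^1)}\Bigr)<\infty .
\]
Here $\|A\|<\infty$ because each $g_i\in L^2(\mathbb{S}^1)$, and the moment bound is the standing assumption $\sup_{t\le T}\mathbb{E}\|X_t\|^2<\infty$. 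I apply it on $[0,\infty)$, which is implicit once the sequence is stationary. Birkhoff's theorem then gives
\[
\frac{1}{N_w w}\sum_{n=1}^{N_w w}(\Delta_n\widetilde{X})(\Delta_n\widetilde{X})^\top\longrightarrow\mathbb{E}\bigl[(\Delta_1\widetilde{X})(\Delta_1\widetilde{X})^\top\bigr]\quad\text{a.s.},
\]
where stationarity lets me evaluate the expectation at $n=1$.

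To identify the limit, I take unconditional expectations in Proposition~\ref{prop:increment_decomp} with $n=1$ and $t_0=0$, using the tower property; this is the indexed form stated there. It gives
\[
\mathbb{E}\bigl[(\Delta_1\widetilde{X})(\Delta_1\widetilde{X})^\top\bigr]=\mathbb{E}\bigl[A(B_1+D_1)(B_1+D_1)^\ast A^\ast\bigr]+A\,\mathbb{E}\Bigl[\int_0^{\delta}\mathcal{S}(\delta-s)\Sigma_s\mathcal{S}(\delta-s)^\ast\,ds\Bigr]A^\ast .
\]
The interchange of $A$ with the expectation is justified because $A$ is a bounded operator with finite rank. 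Multiplying by $1/\delta$ and matching to \eqref{eq:long_span_limit}, the stated right-hand side appears to omit this normalization. I would either read $1/\delta$ as absorbed into the annualization convention or insert it explicitly.

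The main obstacle is checking that the ergodic theorem really applies to the quadratic functional and that the expectations are finite. Finiteness of the cross terms requires $B_1$ and $D_1$ to be square integrable. For $B_1$ this follows from $\|\mathcal{S}(\delta)-I\|<\infty$ together with the second moment of $X_0$. For $D_1$ I would use Bochner integrability of $\mu$, strengthened if necessary to $\mathbb{E}\bigl(\int_0^\delta\|\mu_s\|\,ds\bigr)^2<\infty$. The stochastic-convolution term is controlled by \eqref{eq:sigma_second_moment}.
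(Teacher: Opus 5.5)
Your proposal is correct and follows essentially the same route as the paper. You rewrite the average over disjoint windows as a single time average of $\frac{1}{\delta}(\Delta_n\widetilde X)(\Delta_n\widetilde X)^\top$, apply Birkhoff's ergodic theorem entrywise after checking second-moment integrability, and identify $\mathbb{E}[(\Delta_1\widetilde X)(\Delta_1\widetilde X)^\top]$ through Proposition~\ref{prop:increment_decomp} at $n=1$. Your remark about the normalization is also right: the paper's own argument ends at $\frac{1}{\delta}\mathbb{E}[(\Delta_1\widetilde X)(\Delta_1\widetilde X)^\top]$, so the displayed limit should carry the factor $1/\delta$, consistent with the factor $\delta$ in Corollary~\ref{cor:analytic_semigroup}.
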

The estimator \eqref{eq:RCV_estimator} and the long-span limit of Proposition~\ref{prop:long_span_limit} are standard constructions, but obscured by extra terms, which are not present in the finite dimensional setting. In particular, the propagation terms $B_n$ and $B_1$ induced by the semigroup are normally not present. In Corollary~\ref{cor:analytic_semigroup} we show that this contributes with more bias than we expect from the drift alone and in Proposition~\ref{prop:drift_bound} we show how this bias can be corrected when the semigroup is known. Proposition~\ref{prop:plugin_estimator} gives a feasible plug-in estimator of the one-step semigroup on the observation space, and hence a feasible correction of the propagation bias when the semigroup is unknown. In particular, we obtain a plugin estimator such that the adjusted RCV estimator $\widehat{\Sigma}_{j,\mathrm{pl}}^{w}$ estimates the conditional second moment of the semigroup-adjusted increment of $\widetilde{X}_{t_j}$, which in general does not eliminate the propagation bias entirely, since we must first estimate the semigroup. 
\begin{proposition}\label{prop:drift_bound}
    Let the setting be as in Proposition~\ref{prop:long_span_limit} and suppose that the drift has bounded second moment, i.e., $\sup_{s\geq 0}\mathbb{E}\left[ \lVert \mu_s \rVert^2_{L^2(\mathbb{S}^1)}\right]<C_\mu<\infty$. Then
    \[
        \frac{1}{N_w}\sum_{j=1}^{N_w}\widehat{\Sigma}_{j}^{w} - \frac{1}{\delta}\mathbb{E}\left[ A B_1 B_1^\ast A^\ast \right] \xrightarrow[N_w\to\infty]{a.s.} A\mathbb{E}\left[ \int_{0}^{\delta}\mathcal{S}(\delta-s)\Sigma_s\mathcal{S}(\delta-s)^\ast ds \right]A^\ast + O(\delta^2),
    \]
\end{proposition}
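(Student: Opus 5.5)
We start from Proposition~\ref{prop:long_span_limit}, which identifies the almost sure limit of the averaged RCV. Reading that limit with the $1/\delta$ normalisation of \eqref{eq:RCV_estimator}, it is $\frac{1}{\delta}\mathbb{E}[A(B_1+D_1)(B_1+D_1)^\ast A^\ast]$ plus the innovation term. Subtracting the deterministic matrix $\frac{1}{\delta}\mathbb{E}[AB_1B_1^\ast A^\ast]$ from both sides, the left-hand side converges a.s. to the innovation term plus
\[
R:=\frac{1}{\delta}\mathbb{E}\big[A(B_1D_1^\ast + D_1B_1^\ast + D_1D_1^\ast)A^\ast\big].
\]
So the whole proof reduces to showing $\lVert R\rVert = O(\delta^2)$. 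Since $A$ is a bounded operator into $\mathbb{R}^d$, with $\lVert A\rVert\le \max_i\lVert g_i\rVert_{L^2(\mathbb{S}^1)}$, it suffices to bound the operator-valued expectations in trace or Hilbert--Schmidt norm. For this we use $\lVert A x y^\ast A^\ast\rVert\le \lVert A\rVert^2\lVert x\rVert\lVert y\rVert$.

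The drift term is controlled first. By Bochner's inequality and Cauchy--Schwarz in time,
\[
\lVert D_1\rVert^2\le \Big(\int_0^\delta \lVert \mathcal{S}(\delta-s)\rVert\,\lVert\mu_s\rVert ds\Big)^2\le M^2\delta\int_0^\delta\lVert\mu_s\rVert^2ds,
\]
where $M=\sup_{u\le\delta}\lVert\mathcal{S}(u)\rVert<\infty$ by the $C_0$ property. Taking expectations and using $\sup_s\mathbb{E}\lVert\mu_s\rVert^2<C_\mu$ gives $\mathbb{E}\lVert D_1\rVert^2\le M^2C_\mu\delta^2$. The pure drift contribution to $R$ is therefore $O(\delta)$ after the $1/\delta$ normalisation.

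The cross terms are treated by Cauchy--Schwarz, which gives $\mathbb{E}[\lVert B_1\rVert\lVert D_1\rVert]\le (\mathbb{E}\lVert B_1\rVert^2)^{1/2}(\mathbb{E}\lVert D_1\rVert^2)^{1/2}$. This needs a bound on $\mathbb{E}\lVert(\mathcal{S}(\delta)-I)X_0\rVert^2$. The bound $2\big(\sup_u\lVert \mathcal{S}(u)\rVert+1\big)^2\sup_t\mathbb{E}\lVert X_t\rVert^2$ is only $O(1)$. To gain powers of $\delta$ we need $X_0\in\mathrm{dom}(\mathcal{A})$ in mean square, so that $\lVert(\mathcal{S}(\delta)-I)x\rVert\le M\delta\lVert\mathcal{A}x\rVert$ from $\mathcal{S}(\delta)x-x=\int_0^\delta\mathcal{S}(u)\mathcal{A}x\,du$.

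The main obstacle is reconciling the stated rate $O(\delta^2)$ with these estimates. The naive bounds give only $\delta^{-1}\cdot\delta\cdot\delta=O(\delta)$ for the cross terms and $O(\delta)$ for $D_1D_1^\ast/\delta$. I would first check whether the intended normalisation makes the remainder $\mathbb{E}[A(B_1D_1^\ast+D_1B_1^\ast+D_1D_1^\ast)A^\ast]$ without the $1/\delta$ factor. With $\mathbb{E}\lVert B_1\rVert^2=O(\delta^2)$ under the domain condition, every piece is then $O(\delta^2)$, matching the statement. I would state explicitly the regularity assumption used, $\mathbb{E}\lVert\mathcal{A}X_0\rVert^2<\infty$ (justified by stationarity if needed), and the normalisation convention adopted. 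The convergence part itself is immediate from Proposition~\ref{prop:long_span_limit} and needs no further ergodic argument.
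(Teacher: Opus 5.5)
Your route is essentially the paper's. Its proof also multiplies through by $\delta$ and works with $\frac{\delta}{N_w}\sum_{j}\widehat{\Sigma}_j^w$, so your reading of the normalisation is the intended one; if the $1/\delta$ is kept, the best possible remainder is $O(\delta)$, as in \eqref{eq:RCV_univariate}. It then bounds $\mathbb{E}\lVert AD_1\rVert^2\le C\delta^2$ exactly as you do, and handles the cross terms by Cauchy--Schwarz.

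Where you differ is that you see the cross terms need regularity of $X_0$. The paper imposes none and obtains only $\lVert\mathbb{E}[AB_1D_1^\ast A^\ast]\rVert=O(\delta)$. It then asserts $R_\delta=O(\delta^2)$ anyway. It also subtracts $A(\mathcal{S}(\delta)-I)A^\ast$ where $\mathbb{E}[AB_1B_1^\ast A^\ast]$ is meant, which leaves a term it never controls.

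Your extra hypothesis $\mathbb{E}\lVert\mathcal{A}X_0\rVert^2<\infty$ is genuinely needed for the stated rate. The reason is that $B_1$ is $\mathcal{F}_{t_0}$-measurable while $D_1$ is built from $\mu$ on $[t_0,t_1]$, so no orthogonality removes the cross term. Under the weaker mean-square condition $X_0\in\mathcal{D}((-\mathcal{A})^\alpha)$ with an analytic semigroup, as in Corollary~\ref{cor:analytic_semigroup}, the same argument gives $O(\delta^{1+\alpha})$.

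One point to fix is your parenthetical that the domain condition can be ``justified by stationarity''. Stationarity carries no spatial regularity, so the condition must be assumed outright, for instance via a condition of the type \eqref{eq:HS_regularity} with $\alpha=1$.
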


\begin{proposition}\label{prop:plugin_estimator}
For notational simplicity, write $\widetilde X_n:=\widetilde X_{t_n}$. Suppose that the sequence $(\widetilde X_n)_{n\in\mathbb{N}}$ is strictly stationary, ergodic, and satisfies
\[
\mathbb E[\widetilde X_0]=0,
\qquad
\mathbb E[\lVert\widetilde X_0\rVert^2]<\infty.
\]
Set $\Gamma:=\mathbb E[\widetilde X_0\widetilde X_0^\top]$, and suppose that $\Gamma$ is positive definite. Define the effective one-step semigroup on the observation space by
\[
\mathcal S_\delta
:=
\mathbb E[\widetilde X_1\widetilde X_0^\top]\Gamma^{-1}\in\mathbb R^{d\times d},
\]
and its sample estimator by
\begin{equation}\label{eq:semigroup_sample_estimator}
\widehat{\mathcal{S}}_{\delta,N}
:=\Big(\sum_{n=1}^N \widetilde{X}_n\widetilde{X}_{n-1}^\top\Big)
\Big(\sum_{n=1}^N \widetilde{X}_{n-1}\widetilde{X}_{n-1}^\top\Big)^{-1}.
\end{equation}
Then it holds that $\widehat{\mathcal{S}}_{\delta,N}\xrightarrow[N\to\infty]{a.s.}\mathcal{S}_\delta$. Define the plug-in residuals by
\[
\widehat{\varepsilon}_n^{\mathrm{pl}}:=
\widetilde{X}_n-\widehat{\mathcal{S}}_{\delta,N}\widetilde{X}_{n-1}
=
\Delta_n\widetilde{X}-(\widehat{\mathcal{S}}_{\delta,N}-I_d)\widetilde{X}_{n-1},
\]
and the residual-based feasible realized covariation by
\begin{equation}\label{eq:RCV_pl}
\widehat{\Sigma}_{j,\mathrm{pl}}^{w}
:=\frac{1}{\delta w}\sum_{\ell=1}^w\widehat{\varepsilon}_{(j-1)w+\ell}^{\mathrm{pl}}
\big(\widehat{\varepsilon}_{(j-1)w+\ell}^{\mathrm{pl}}\big)^\top.
\end{equation}
Then each $\widehat\Sigma_{j,\mathrm{pl}}^{\,w}$ is positive semidefinite. Moreover, setting $N=wN_w$, it holds that
\[
\frac1{N_w}\sum_{j=1}^{N_w}\widehat\Sigma_{j,\mathrm{pl}}^{w}
\xrightarrow[N_w\to\infty]{a.s.}
\frac{1}{\delta}\,
\mathbb{E}\Big[(\widetilde X_1-\mathcal S_\delta \widetilde X_0)
(\widetilde{X}_1-\mathcal{S}_\delta \widetilde{X}_0)^\top
\Big],
\]
Equivalently,
\begin{equation}\label{eq:plugin_equiv_representation}
\frac1{N_w}\sum_{j=1}^{N_w}\widehat{\Sigma}_{j,\mathrm{pl}}^{w}
\xrightarrow[N_w\to\infty]{a.s.}
\frac{1}{\delta}\mathbb{E}\left[(\Delta_1\widetilde X)(\Delta_1\widetilde X)^\top\right]
-
\frac{1}{\delta}\mathbb E\Big[
(\mathcal S_\delta-I_d)\widetilde X_0\widetilde X_0^\top(\mathcal S_\delta-I_d)^\top
\Big].
\end{equation}
\end{proposition}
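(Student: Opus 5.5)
The proof splits into three parts: a.s. consistency of $\widehat{\mathcal S}_{\delta,N}$ via the ergodic theorem, positive semidefiniteness (which is immediate), and the long-span limit of the averaged plug-in RCV, obtained by expanding the residuals around the population quantity $\mathcal S_\delta$.

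\emph{Consistency.} By strict stationarity and ergodicity of $(\widetilde X_n)$, the pairs $(\widetilde X_n,\widetilde X_{n-1})$ also form a stationary ergodic sequence, since they are a measurable function of the shifted sequence. Apply Birkhoff's ergodic theorem entrywise to $\widetilde X_n\widetilde X_{n-1}^\top$ and to $\widetilde X_{n-1}\widetilde X_{n-1}^\top$. These are integrable by Cauchy--Schwarz and $\mathbb E\lVert\widetilde X_0\rVert^2<\infty$. This gives
\[
\frac1N\sum_{n=1}^N\widetilde X_n\widetilde X_{n-1}^\top\to\mathbb E[\widetilde X_1\widetilde X_0^\top],
\qquad
\frac1N\sum_{n=1}^N\widetilde X_{n-1}\widetilde X_{n-1}^\top\to\Gamma
\]
almost surely. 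Because $\Gamma$ is positive definite and matrix inversion is continuous on the open set of invertible matrices, the sample second-moment matrix is invertible for all large $N$ almost surely, and its inverse converges to $\Gamma^{-1}$. The $1/N$ factors cancel in \eqref{eq:semigroup_sample_estimator}, so $\widehat{\mathcal S}_{\delta,N}\to\mathcal S_\delta$ almost surely.

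\emph{Positive semidefiniteness.} Each $\widehat\Sigma^w_{j,\mathrm{pl}}$ is a positive multiple of a sum of rank-one outer products $vv^\top$. For any $a\in\mathbb R^d$ we have $a^\top vv^\top a=(a^\top v)^2\ge0$, so the sum is positive semidefinite.

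\emph{Limit.} Since the windows are disjoint and $N=wN_w$, the average over windows collapses to a single average over all residuals:
\[
\frac1{N_w}\sum_{j=1}^{N_w}\widehat\Sigma^w_{j,\mathrm{pl}}=\frac1{\delta N}\sum_{n=1}^N\widehat\varepsilon^{\mathrm{pl}}_n(\widehat\varepsilon^{\mathrm{pl}}_n)^\top .
\]
Next, write $\widehat\varepsilon^{\mathrm{pl}}_n=\varepsilon_n-E_N\widetilde X_{n-1}$, where $\varepsilon_n:=\widetilde X_n-\mathcal S_\delta\widetilde X_{n-1}$ and $E_N:=\widehat{\mathcal S}_{\delta,N}-\mathcal S_\delta$. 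Expanding the outer product produces three kinds of terms:
\begin{itemize}
\item $\frac1N\sum\varepsilon_n\varepsilon_n^\top$, which converges almost surely to $\mathbb E[\varepsilon_1\varepsilon_1^\top]$ by the ergodic theorem, with integrability again from the finite second moment;
\item cross terms $-\big(\frac1N\sum\varepsilon_n\widetilde X_{n-1}^\top\big)E_N^\top$ and their transposes;
\item the quadratic term $E_N\big(\frac1N\sum\widetilde X_{n-1}\widetilde X_{n-1}^\top\big)E_N^\top$.
\end{itemize}
In the last two kinds, the sample averages converge almost surely to finite limits and $E_N\to0$ almost surely, so both vanish.

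The step needing the most care is exactly this random plug-in: $\widehat{\mathcal S}_{\delta,N}$ depends on the entire sample, so the residuals are not themselves a stationary sequence. The decomposition above avoids the issue, because every sample average is of a stationary quantity and the only randomness coupling the terms is the single matrix $E_N$, which tends to zero. An alternative is the exact OLS identity $\frac1N\sum\widehat\varepsilon_n\widetilde X_{n-1}^\top=0$ for large $N$, but the perturbation argument suffices. Note also that $\mathbb E[\varepsilon_1\widetilde X_0^\top]=0$ by the definition of $\mathcal S_\delta$, although this is not needed for the limit.

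\emph{Equivalent representation \eqref{eq:plugin_equiv_representation}.} Write $\varepsilon_1=\Delta_1\widetilde X-(\mathcal S_\delta-I_d)\widetilde X_0$ and expand $\mathbb E[\varepsilon_1\varepsilon_1^\top]$. Set $M:=\mathcal S_\delta-I_d$. The cross term equals
\[
\mathbb E[\Delta_1\widetilde X\,\widetilde X_0^\top]M^\top=\big(\mathbb E[\widetilde X_1\widetilde X_0^\top]-\Gamma\big)M^\top=(\mathcal S_\delta\Gamma-\Gamma)M^\top=M\Gamma M^\top ,
\]
and its transpose gives the same matrix because $\Gamma$ is symmetric. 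The quadratic term is also $M\Gamma M^\top$. Collecting terms,
\[
\mathbb E[\varepsilon_1\varepsilon_1^\top]=\mathbb E[\Delta_1\widetilde X(\Delta_1\widetilde X)^\top]-2M\Gamma M^\top+M\Gamma M^\top=\mathbb E[\Delta_1\widetilde X(\Delta_1\widetilde X)^\top]-M\Gamma M^\top ,
\]
which after dividing by $\delta$ is exactly \eqref{eq:plugin_equiv_representation}.
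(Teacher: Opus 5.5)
Your proof is correct. The consistency and positive-semidefiniteness parts match the paper's; the limit step follows a different route. The paper does not expand around the population residual. Instead it expands $\frac1N\sum_n\widehat\varepsilon^{\mathrm{pl}}_n(\widehat\varepsilon^{\mathrm{pl}}_n)^\top$ directly in sample moments and uses the exact normal equation $\widehat{\mathcal S}_{\delta,N}\widehat\Gamma_N=\widehat C_N$, which is the OLS identity you mention as an alternative. Here $\widehat C_N=\frac1N\sum_n\widetilde X_n\widetilde X_{n-1}^\top$ and $\widehat\Gamma_N=\frac1N\sum_n\widetilde X_{n-1}\widetilde X_{n-1}^\top$. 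This gives the finite-sample identity $\frac1N\sum_n\widehat\varepsilon^{\mathrm{pl}}_n(\widehat\varepsilon^{\mathrm{pl}}_n)^\top=\frac1N\sum_n\widetilde X_n\widetilde X_n^\top-\widehat C_N\widehat\Gamma_N^{-1}\widehat C_N^\top$. That expression converges to $\Gamma-\mathcal S_\delta\Gamma\mathcal S_\delta^\top$, and the paper then identifies this with $\mathbb E[\varepsilon_1\varepsilon_1^\top]$ using $\mathbb E[\widetilde X_1\widetilde X_0^\top]=\mathcal S_\delta\Gamma$.

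Your perturbation around $\varepsilon_n=\widetilde X_n-\mathcal S_\delta\widetilde X_{n-1}$ lands on $\mathbb E[\varepsilon_1\varepsilon_1^\top]$ directly. It uses nothing about $\widehat{\mathcal S}_{\delta,N}$ except $E_N\to0$ almost surely. So it shows more generally that for any estimator converging almost surely to a fixed matrix $S$, the averaged residual outer products converge to $\mathbb E[(\widetilde X_1-S\widetilde X_0)(\widetilde X_1-S\widetilde X_0)^\top]$. The paper's route instead exploits the least-squares structure. It gives an exact closed form for each $N$ and the compact limit $\Gamma-\mathcal S_\delta\Gamma\mathcal S_\delta^\top$, which shows explicitly that the limit is $\Gamma$ minus the part captured by the linear one-step predictor.

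Your explicit cross-term computation $\mathbb E[\Delta_1\widetilde X\,\widetilde X_0^\top]M^\top=M\Gamma M^\top$ for the equivalent representation is also more complete than the paper's. The paper passes from $\widetilde X_1-\mathcal S_\delta\widetilde X_0=\Delta_1\widetilde X-(\mathcal S_\delta-I_d)\widetilde X_0$ to the final identity with only a closing reference to $\mathbb E[\widetilde X_1\widetilde X_0^\top]=\mathcal S_\delta\Gamma$.
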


\begin{corollary}
    Let the setting be as in Proposition~\ref{prop:plugin_estimator} and assume in addition that there exists a matrix $S_{\delta,A}\in\mathbb R^{d\times d}$ such that
    \begin{equation}\label{eq:semigroup_closure_condition}
    A\mathcal{S}_\delta=S_{\delta,A}A,
    \end{equation}
    and that either $\mu\equiv 0$ or the data have been de-drifted so that
    \[
    \mathbb E[\widetilde{X}_n\mid \mathcal F_{t_{n-1}}]=S_{\delta,A}\widetilde{X}_{n-1}.
    \]
    Then $\mathcal{S}_\delta=S_{\delta,A}$, and therefore
    \[
    \frac{1}{N_w}\sum_{j=1}^{N_w}\widehat\Sigma_{j,\mathrm{pl}}^{\,w}
    \xrightarrow[N_w\to\infty]{a.s.}
    A\mathbb{E}\Big[\int_0^\delta
    \mathcal{S}(\delta-s)\Sigma_s\mathcal S(\delta-s)^\ast\,ds\Big]A^\ast.
    \]
\end{corollary}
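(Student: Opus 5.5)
We would prove the corollary in two steps. First we identify the population quantity $\mathcal S_\delta=\mathbb E[\widetilde X_1\widetilde X_0^\top]\Gamma^{-1}$ with $S_{\delta,A}$. Then we substitute this into the limit already established in Proposition~\ref{prop:plugin_estimator}, and evaluate the residual second moment using the mild formulation \eqref{eq:mild_formulation}.

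\textbf{Identification of $\mathcal S_\delta$.} By the tower property and the de-drifting (or $\mu\equiv0$) assumption,
\[
\mathbb E[\widetilde X_1\widetilde X_0^\top]=\mathbb E\big[\mathbb E[\widetilde X_1\mid\mathcal F_{t_0}]\widetilde X_0^\top\big]=S_{\delta,A}\mathbb E[\widetilde X_0\widetilde X_0^\top]=S_{\delta,A}\Gamma .
\]
Since $\Gamma$ is positive definite, $\mathcal S_\delta=S_{\delta,A}\Gamma\Gamma^{-1}=S_{\delta,A}$. The only care needed is integrability: $\widetilde X_0\widetilde X_0^\top$ and $\widetilde X_1\widetilde X_0^\top$ are integrable by the second-moment assumption, and the conditional expectation may be pulled out because $\widetilde X_0$ is $\mathcal F_{t_0}$-measurable. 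Stationarity lets us use the pair $(0,1)$ without loss of generality, matching the limit in Proposition~\ref{prop:plugin_estimator}.

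\textbf{Evaluating the limit.} Proposition~\ref{prop:plugin_estimator} gives the almost sure limit $\delta^{-1}\mathbb E[\varepsilon_1\varepsilon_1^\top]$, where $\varepsilon_1=\widetilde X_1-S_{\delta,A}\widetilde X_0$. From \eqref{eq:mild_formulation}, writing $\mathcal S(\delta)$ for the semigroup at time $\delta$,
\[
\widetilde X_1=A\mathcal S(\delta)X_0+AD_1+A\int_0^\delta\mathcal S(\delta-s)\sigma_s\,dW_s .
\]
We read the closure condition \eqref{eq:semigroup_closure_condition} as $A\mathcal S(\delta)=S_{\delta,A}A$, i.e.\ the infinite-dimensional semigroup descends to the observation space; this is the intended meaning, since $\mathcal S_\delta$ is $d\times d$. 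Under this reading the propagation part cancels exactly: $A\mathcal S(\delta)X_0-S_{\delta,A}AX_0=0$.

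The drift term $AD_1$ vanishes if $\mu\equiv0$. In the de-drifted case we argue that the residual $\varepsilon_1$ has zero conditional mean, which means the $\mathcal F_{t_0}$-measurable part of $AD_1$ is absorbed. More simply, $\varepsilon_1=\widetilde X_1-\mathbb E[\widetilde X_1\mid\mathcal F_{t_0}]$.

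\textbf{Main obstacle.} The hardest step is showing that $\varepsilon_1$ equals exactly the stochastic convolution term $A\int_0^\delta\mathcal S(\delta-s)\sigma_s\,dW_s$, even though a predictable, non-$\mathcal F_{t_0}$-measurable drift over $[t_0,t_1]$ could leave extra terms. We interpret ``de-drifted'' to mean that the drift contribution has been removed from the data, so that $\widetilde X_1=S_{\delta,A}\widetilde X_0+A\int_0^\delta\mathcal S(\delta-s)\sigma_s\,dW_s$. Given this, the Itô isometry for Hilbert-valued stochastic integrals, applied componentwise with test functions $g_i,g_j$ as in Proposition~\ref{prop:increment_decomp}, yields
\[
\mathbb E[\varepsilon_1\varepsilon_1^\top]=A\,\mathbb E\Big[\int_0^\delta\mathcal S(\delta-s)\Sigma_s\mathcal S(\delta-s)^\ast ds\Big]A^\ast .
\]
Finiteness of this expectation follows from \eqref{eq:sigma_second_moment} and the boundedness of $A$. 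Dividing by $\delta$ and combining with the identification step gives the stated limit, with the $1/\delta$ normalization matching the one in Proposition~\ref{prop:long_span_limit}.
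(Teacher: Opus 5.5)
The paper states this corollary without proof, and your argument is the evident intended one. The tower property gives $\mathbb{E}[\widetilde X_1\widetilde X_0^\top]=S_{\delta,A}\Gamma$, hence $\mathcal S_\delta=S_{\delta,A}$. The closure relation, which as you say can only mean $A\mathcal S(\delta)=S_{\delta,A}A$, then reduces the limit of Proposition~\ref{prop:plugin_estimator} to the second moment of $AM_1$, where $M_1:=\int_0^\delta\mathcal S(\delta-s)\sigma_s\,dW_s$. You evaluate that by the It\^o isometry as in Proposition~\ref{prop:increment_decomp}. For $\mu\equiv0$ this is complete. Note, though, that the identification step already needs the closure relation and $\mathbb{E}[AM_1\mid\mathcal F_{t_0}]=0$ (from \eqref{eq:sigma_second_moment}) to get $\mathbb{E}[\widetilde X_1\mid\mathcal F_{t_0}]=S_{\delta,A}\widetilde X_0$, so introduce the closure reading there rather than one step later.

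Two points should be stated openly rather than smoothed over.

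First, what you prove is convergence to $\delta^{-1}A\,\mathbb{E}[\int_0^\delta\mathcal S(\delta-s)\Sigma_s\mathcal S(\delta-s)^\ast ds]A^\ast$. The displayed limit omits the factor $1/\delta$, as does the display in Proposition~\ref{prop:long_span_limit}, although its proof produces it. The statement therefore has a normalization slip, and your closing sentence should say so instead of claiming you recover ``the stated limit''.

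Second, your strengthening of ``de-drifted'' is forced, not a convenience. The displayed hypothesis $\mathbb{E}[\widetilde X_n\mid\mathcal F_{t_{n-1}}]=S_{\delta,A}\widetilde X_{n-1}$ only yields $\mathbb{E}[AD_1\mid\mathcal F_{t_0}]=0$. So $\varepsilon_1=AD_1+AM_1$, and $\mathbb{E}[\varepsilon_1\varepsilon_1^\top]$ retains $\mathbb{E}[AD_1(AD_1)^\top]$ and the cross terms $\mathbb{E}[AD_1(AM_1)^\top]$. These need not vanish for a merely predictable drift. For instance, take the scalar setting of Remark~\ref{rem:OU_1dim} with $\lambda<0$, constant $\sigma>0$ and $X_0$ drawn from its stationary law. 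The drift $\mu_t=c(W_t-W_{t_{n-1}})$ on $[t_{n-1},t_n)$ with $c>0$ satisfies the hypothesis with $S_{\delta,A}=e^{\lambda\delta}$, yet $\mathbb{E}[D_1^2]>0$ and $\mathbb{E}[D_1M_1]>0$.

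Hence the de-drifted case does not follow from the displayed conditional-mean condition alone. Your proof establishes it under an extra assumption that you should formulate precisely. One option is to assume $D_1$ is $\mathcal F_{t_0}$-measurable, the convention tacitly used in the proof of Proposition~\ref{prop:increment_decomp}; then the hypothesis forces $AD_1=0$. Another is to take the data to be $\widetilde X_n-m_n$ with $m_n=A\int_0^{t_n}\mathcal S(t_n-s)\mu_s\,ds$. For these data the closure relation gives $m_n-S_{\delta,A}m_{n-1}=AD_n$, so the drift cancels exactly.
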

\begin{remark}\label{rem:linear_semigroup_predictor}
    Without the closure condition \eqref{eq:semigroup_closure_condition}, the sample estimator \eqref{eq:semigroup_sample_estimator} still converges to a matrix representation on the observation space. Indeed, the population limit
    \[
    \mathcal{S}_{\delta}:=
    \mathbb{E}[\widetilde{X}_{t_1}\widetilde{X}_0^\top]\Gamma^{-1},
    \qquad
    \Gamma:=\mathbb{E}[\widetilde{X}_0\widetilde{X}_0^\top],
    \]
    is the unique solution to the least-squares problem
    \[
    \mathcal{S}_{\delta}=
    \arg\min_{M\in\mathbb R^{d\times d}}
    \mathbb E\left[\|\widetilde{X}_{t_1}-M\widetilde{X}_0\|^2\right].
    \]
    In other words, $\mathcal{S}_{\delta}\widetilde{X}_t$ is the best linear one-step predictor of the next vector of local averages, $\widetilde{X}_{t+\delta}$, from the current vector of local averages, $\widetilde X_t$. The sample estimator $\widehat{\mathcal{S}}_{\delta,N}$ is a feasible estimator of this population predictor.
\end{remark}
\begin{corollary}\label{cor:analytic_semigroup}
Assume that the drift has bounded second moment and that the semigroup $\mathcal S$ is analytic with $X_{t_0}\in\mathcal D((-\mathcal A)^\alpha)$ for some $\alpha\in(0,1]$. Then
\[
\frac{\delta}{N_w}\sum_{k=1}^{N_w}\widehat{\Sigma}^{w}_k\xrightarrow[N_w\to\infty]{a.s.}
A\,\mathbb{E}\left[\int_{0}^{\delta}
\mathcal S(\delta-s)\,\Sigma_s\,\mathcal S(\delta-s)^\ast\,ds\right]A^\ast
\;+\;O(\delta^{2\alpha}).
\]
\end{corollary}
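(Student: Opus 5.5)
Start from Proposition~\ref{prop:long_span_limit}. Multiplying by $\delta$ (and reading the $\frac{1}{\delta w}$ normalisation of \eqref{eq:RCV_estimator} so that the innovation term appears without a prefactor) gives the almost sure limit
\[
\mathbb{E}\big[A(B_1+D_1)(B_1+D_1)^\ast A^\ast\big] + A\,\mathbb{E}\Big[\int_0^\delta \mathcal S(\delta-s)\Sigma_s\mathcal S(\delta-s)^\ast ds\Big]A^\ast .
\]
It therefore suffices to show that the first term is $O(\delta^{2\alpha})$ in Frobenius norm. Since $A$ is bounded, with $\lVert A\rVert_{op}^2\le \sum_i \lVert g_i\rVert^2$, and $\lVert (B+D)(B+D)^\ast\rVert\le \lVert B+D\rVert^2$, the inequality $(a+b)^2\le 2a^2+2b^2$ reduces the claim to bounding $\mathbb E\lVert B_1\rVert^2_{L^2(\mathbb{S}^1)}$ and $\mathbb E\lVert D_1\rVert^2_{L^2(\mathbb{S}^1)}$ separately.

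The drift term is routine. By Cauchy--Schwarz in time and the uniform bound $\lVert\mathcal S(u)\rVert_{op}\le M$ for $u\in[0,\delta]$ (the $C_0$ property),
\[
\mathbb E\lVert D_1\rVert^2\le \delta\int_0^\delta M^2\,\mathbb E\lVert\mu_s\rVert^2ds\le M^2C_\mu\delta^2 .
\]
This is $O(\delta^{2\alpha})$ because $\alpha\le 1$ and $\delta$ is bounded. The estimate is the same one that underlies Proposition~\ref{prop:drift_bound}.

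The propagation term is the main step. For an analytic semigroup there is the standard estimate (Pazy, Thm.~2.6.13)
\[
\lVert(\mathcal S(\delta)-I)x\rVert\le C_\alpha\,\delta^\alpha\,\lVert(-\mathcal A)^\alpha x\rVert,\qquad x\in\mathcal D((-\mathcal A)^\alpha),
\]
after shifting $\mathcal A$ by a constant if necessary so that $0\in\rho(\mathcal A)$. Applying it with $x=X_{t_0}$ gives $\mathbb E\lVert B_1\rVert^2\le C_\alpha^2\delta^{2\alpha}\,\mathbb E\lVert(-\mathcal A)^\alpha X_{t_0}\rVert^2$.

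The obstacle I expect is that the hypothesis only gives $X_{t_0}\in\mathcal D((-\mathcal A)^\alpha)$ pointwise, whereas the bound needs a finite second moment $\mathbb E\lVert(-\mathcal A)^\alpha X_{t_0}\rVert^2<\infty$. I would make this explicit: either read the hypothesis as membership in $L^2(\Omega;\mathcal D((-\mathcal A)^\alpha))$, or derive it from the stationary mild representation \eqref{eq:mild_formulation}. The derivation would use the analytic smoothing bound $\lVert(-\mathcal A)^\alpha\mathcal S(u)\rVert\le C u^{-\alpha}$ together with a Hilbert--Schmidt integrability condition analogous to \eqref{eq:sigma_second_moment}. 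Under stationarity (Assumption~\ref{ass:main_stat_assumption}), $B_1$ has the same law as every $B_n$, so the single bound at $t_0$ is enough.

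Combining the two estimates, the bias term is bounded by $2\lVert A\rVert^2_{op}\big(C_\alpha^2 K\delta^{2\alpha}+M^2C_\mu\delta^2\big)=O(\delta^{2\alpha})$, where $K=\mathbb E\lVert(-\mathcal A)^\alpha X_{t_0}\rVert^2$. This gives the stated limit.
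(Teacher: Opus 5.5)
Your proposal is correct and follows the paper's proof. The propagation term is controlled by the analytic-semigroup increment bound $\lVert(\mathcal S(\delta)-I)x\rVert\lesssim\delta^\alpha\lVert(-\mathcal A)^\alpha x\rVert$, giving $O(\delta^{2\alpha})$, and the drift term is $O(\delta^2)$ by Cauchy--Schwarz, so it is of no larger order; you are more careful than the paper in making explicit the $1/\delta$ normalisation behind Proposition~\ref{prop:long_span_limit}, the splitting $(a+b)^2\le 2a^2+2b^2$ for the cross terms, and the needed moment condition $\mathbb E\lVert(-\mathcal A)^\alpha X_{t_0}\rVert^2<\infty$ (which Remark~\ref{rem:spatial_regularity} is meant to supply).
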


\begin{remark}\label{rem:semigroup_regularity}
If, in addition to the conditions of Corollary~\ref{cor:analytic_semigroup}, $\mathcal{S}(u)$ is close to the identity on $[0,\delta]$ in the relevant sense (e.g. $\lVert\mathcal{S}(u)-I\rVert_{\mathrm{op}}\lesssim u$ and $\Sigma_s$ is sufficiently regular),
then the leading term in Corollary~\ref{cor:analytic_semigroup} above can be approximated by
$A\,\mathbb{E}[\int_{0}^{\delta}\Sigma_s\,ds]\,A^\ast$ up to an additional $O(\delta)$ error.
\end{remark}

\begin{remark}\label{rem:spatial_regularity}
    The condition $X_t\in\mathcal{D}((-\mathcal{A})^{\alpha})$ required in Corollary~\ref{cor:analytic_semigroup} is a spatial regularity condition. Applying $(-\mathcal{A})^\alpha$ to the mild solution \eqref{eq:mild_formulation} we see that $X_t\in\mathcal{D}((-\mathcal{A})^\alpha)$ if it holds that
    \begin{equation}\label{eq:HS_regularity}
    \mathbb{E}\left[ \int_{0}^{T}\lVert (-\mathcal{A})^\alpha \mathcal{S}(t-s)\sigma_s \rVert^2_{HS} ds \right] < \infty.
    \end{equation}
\end{remark} 

\subsubsection{Relation to classical realized covariation}
In the one-dimensional setting of Remark~\ref{rem:OU_1dim}, the corresponding result to that of Proposition~\ref{prop:long_span_limit} is that
\begin{equation}\label{eq:RCV_univariate}
\frac{1}{N_w}\sum_{j=1}^{N_w}\widehat{\Sigma}_j^w - \frac{1}{\delta}\left( e^{\lambda\delta}-1\right)^2\mathbb{E}[\widetilde{X}_0^2]\xrightarrow[N_w\to\infty]{a.s.}\frac{1}{\delta}\int_{0}^{\delta}e^{2\lambda (\delta-t)}\mathbb{E}\left[\sigma_t^2\right]dt + O(\delta),
\end{equation}
where the right-hand-side is exactly a noisy proxy of the annualized integrated variance \citep[and similarly for the realized covariation of][]{BNS_2004_RCV}. In this case, the semigroup weighting is explicitly known, and a small-time expansion of the integrand on the right hand side of \eqref{eq:RCV_univariate} can, for small $\delta$, recover $\frac{1}{\delta}\int_{0}^{\delta}\mathbb{E}\left[\sigma_t^2\right]dt$ up to an additional $O(\delta)$ term. A severe complication of the genuinely infinite dimensional setting is that the semigroup is much less explicitly known. A simplification is possible if the mild solution \eqref{eq:mild_formulation} is a strong Itô semimartingale in $L^2(\mathbb{S}^1)$, admitting the strong solution
\begin{equation}\label{eq:strong_form_solution}
X_t = X_0+\int_{0}^{t}\left( \mathcal{A}X_s+\mu_s \right)ds+\int_{0}^{t}\sigma_sdW_s,
\end{equation}
which does not directly involve the semigroup $\mathcal{S}$. In this case, we have $X_t\in\mathcal{D}(\mathcal{A})$ and we obtain that the limit in Proposition~\ref{prop:long_span_limit} is of the form 
\[
\frac{1}{N_w}\sum_{j=1}^{N_w}\widehat{\Sigma}_j^w - \frac{1}{\delta}A\mathbb{E}\left[ \left(\int_{0}^{\delta}\mathcal{A}X_sds\right)\left(\int_{0}^{\delta}\mathcal{A}X_sds\right)^\ast\right]A^\ast\xrightarrow[N_w\to\infty]{a.s.}\frac{1}{\delta}A\mathbb{E}\left[\int_{0}^{\delta}\sigma_t\sigma^\ast_tdt\right]A^\ast + O(\delta),
\]
in analogue to the classical case of realized covariation and in accordance with Corollary~\ref{cor:analytic_semigroup}. This situation is similar to that of \citet{BenthSchroersVeraart2022,BenthSchroersVeraart2024}, who derive quite explicit conditions to characterize when their semigroup adjustment is needed and when the ``naive'' realized covariation is appropriate. If the Hilbert space in which $X_t$ takes values is finite dimensional, then $X_t$ is always a semimartingale and our estimator collapses to the usual realized covariation of the observed process $\widetilde{X}_t=AX_t$. To this end, note that without the semigroup-weighting, the innovation term in \eqref{eq:long_span_limit} can be written entrywise as
\[
\left( A\mathbb{E}\left[ \int_{0}^{\delta}\Sigma_sds\right]A^\ast\right)_{i,j} = \int_{0}^{\delta}\mathbb{E}\left[ \langle \Sigma_s g_j,g_i\rangle_{L^2(\mathbb{S}^1)} \right]ds.
\]
If the operator $\Sigma_s$ itself admits a kernel, $c$, then we have
\[
\left( A\mathbb{E}\left[ \int_{0}^{\delta}\Sigma_sds\right]A^\ast\right)_{i,j} = \int_{0}^{\delta}\mathbb{E}\left[ \frac{1}{\lvert I_i\rvert \lvert I_j\rvert }\int_{I_i}\int_{I_j}c_s(h,u) du dh \right] ds,
\]
where $I_i,I_j$ represent the bin intervals. Hence we have, in the semimartingale case \eqref{eq:strong_form_solution}, the concrete interpretation as the time integral of the average covariance kernel over spatial bins $i$ and $j$.

\subsubsection*{Realized variance of the average daily price}
The object of study is often the average daily spot price, corresponding to modeling $\overline{P}_t=\frac{1}{2\pi}\int_{0}^{2\pi}X_t(h)dh$. This object arises as a natural univariate representation of ``the price of electricity'' and it forms the basis of the underlying in conventional futures contracts. If $X_t$ is a strong semimartingale solution to \eqref{eq:model}, then the process $\overline{P}_t$ has quadratic variation (ignoring propagation and drift terms for simplicity)
\begin{equation}\label{eq:QV_avg_price}
\langle \overline{P}\rangle_t = \int_{0}^{t}\lVert \sigma^{\ast}_s\overline{g}\rVert^2_{L^2(\mathbb{S}^1)} ds = \int_{0}^{t}\langle \Sigma_s\overline{g},\overline{g}\rangle_{L^2(\mathbb{S}^1)} ds,
\end{equation}
where $\overline{g}(h)=\frac{1}{2\pi}\mathbf{1}_{[0,2\pi)}(h)$. In particular, it holds that the $w$-period realized variance $\widehat{RV}_j^w$ of $\overline{P}$ is given by (even with propagation and drift terms included)
\begin{equation}\label{eq:RV_vs_RCV}
\widehat{RV}_j^w = \frac{1}{d^2}\mathbf{1}^\top\widehat{\Sigma}_j^w\mathbf{1} = \frac{1}{d^2}\sum_{i=1}^{d}\sum_{k=1}^{d}\left(\widehat{\Sigma}_j^w\right)_{ik}, 
\end{equation}
where $\mathbf{1}=(1,\ldots ,1)^\top \in \mathbb{R}^d$. In practice, this means we may easily obtain the realized variance of the average daily price from our RCV estimator. This highlights one of the strengths of the continuous local average interpretation \eqref{eq:price_local_average}, since the realized variance of the average price and the realized covariation of the panel of prices are connected without imposing any restrictions on the sampling scheme in time or space. For example, the relation \eqref{eq:RV_vs_RCV} is easily adapted to the case of spatial averaging over intervals of different lengths, by incorporating non-equal weights. 

\begin{remark}
    Notice that if $X$ is only a mild solution to \eqref{eq:model} of the form \eqref{eq:mild_formulation}, then the semigroup smoothing disappears from the quadratic variation \eqref{eq:QV_avg_price} whenever it holds that $\mathcal{S}(t)^\ast \overline{g} = \overline{g}$ for all $t\geq 0$. This is often the case, such as for diffusion type semigroups on $\mathbb{S}^1$.
\end{remark}

\section{Realized covariation of electricity spot prices}\label{sec:rcv_of_prices}
In this Section, we apply the RCV estimator \eqref{eq:RCV_estimator} and the propagation adjusted version \eqref{eq:RCV_pl} to construct proxies of the weekly integrated variance in three European electricity markets. We consider three different generation zones that are characterized by different underlying characteristics: Germany, Norway (NO2), and Spain. Germany is the largest market for electricity and related products in Europe, and owing to their central geographical placement in Europe, their electricity production is affected by many adjacent generation zones via, e.g., interconnectors. Norway is characterized by having a large share of hydropower generation, which is expected to lead to more stable prices, since they can regulate their production according to price signals. Spain is characterized by a large amount of solar energy, which may cause volatile price behaviour due to the nature of renewable generation. To save space throughout the main body of the paper, we shall mainly depict the results for Germany, with the corresponding results for Norway and Spain provided in Appendix~\ref{app:plots}.

In the notation of Section~\ref{sec:vol_estimation}, we choose windows of $w=7$ days, such that the RCV is a noisy estimate of the annualized (i.e. $\delta=\tfrac{1}{365}$) weekly (semigroup weighted) integrated variance. The choice of one week is made to eliminate subtle weekly periodicities such as weekend effects. We consider data from October 1st 2018 to September 30th 2025. Prior to this period, the German zone was coupled with Austria and hence represents a structurally different market, and after this period most European zones converted from 24 daily delivery periods (hourly) to 96 daily delivery periods (15 minutes). Due to the local average interpretation \eqref{eq:price_local_average}, it is in principle possible to still compute the hourly local averages from the quarterly prices, but we refrain from doing this due to potential market reactions to the transition. This leaves us with $N=2551$ daily observations per zone and a total of 61368 data points per generation zone. Aggregation into disjoint weekly bins leaves us with 365 weeks of data. To obtain more data and, by extension, more stable results in the following, we therefore consider a 7-day rolling RCV estimate according to \eqref{eq:RCV_estimator} or \eqref{eq:RCV_pl}, such that we have 2557 RCV estimates per zone. This induces a degree of temporal smoothing, but the interpretation as a noisy proxy for the 1-week integrated variance remains, and the result of Proposition~\ref{prop:long_span_limit} is unchanged for the rolling sequence of price differences. Additionally, since each 7-day interval contains exactly one of each weekday, the issue of deterministic weekly periodicities is still avoided. On daylight-savings transition days, we impute the data as follows. If the affected day has 23 hours, we let the missing hour have a price given by the average of the adjacent hours. If the affected day has 25 hours, we have two ``overlapping hours''; the prices in these two hours are merged into one price as their average.

Before considering the estimates obtained from data, we first deal with the terms $D_n,B_n$ of \eqref{eq:RCV_estimator}. In Section~\ref{sec:stationarity}, we argue that the sequence of differences $(\Delta_n\widetilde{X})_{n\in\mathbb{N}}$ obtained from the data are consistent with the assumption of stationarity and do not exhibit any unit root type effects. Hence we can reasonably assume that $D_n\approx 0$ and that any non-stationarities are slowly moving trends. In Section~\ref{sec:mean_reversion}, we apply the estimator $\widehat{\Sigma}_{j,\mathrm{pl}}^w$ to eliminate propagation effects and show that this differs considerably from the raw estimator $\widehat{\Sigma}_{j}^w$. It follows that the propagation terms $B_n$ contribute with substantial variation in prices and should not be ignored. Furthermore, we consider the estimated matrix representation of the semigroup $\mathcal{S}(\delta)$, which reveals an intricate structure.

\begin{figure}[htb]
    \centering
    \begin{subfigure}[b]{0.32\textwidth}
        \centering
        \textbf{Germany}
        \includegraphics[width=\textwidth]{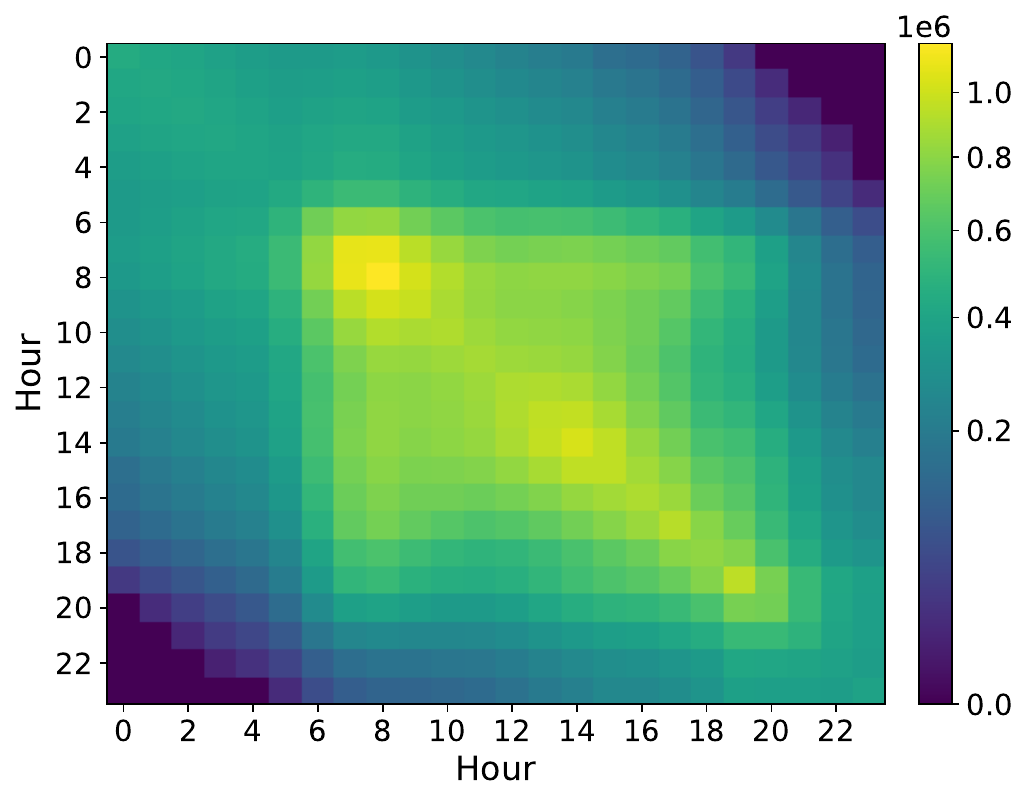}
    \end{subfigure}
    \hfill
    \begin{subfigure}[b]{0.32\textwidth}
        \centering
        \textbf{Norway (NO2)}
        \includegraphics[width=\textwidth]{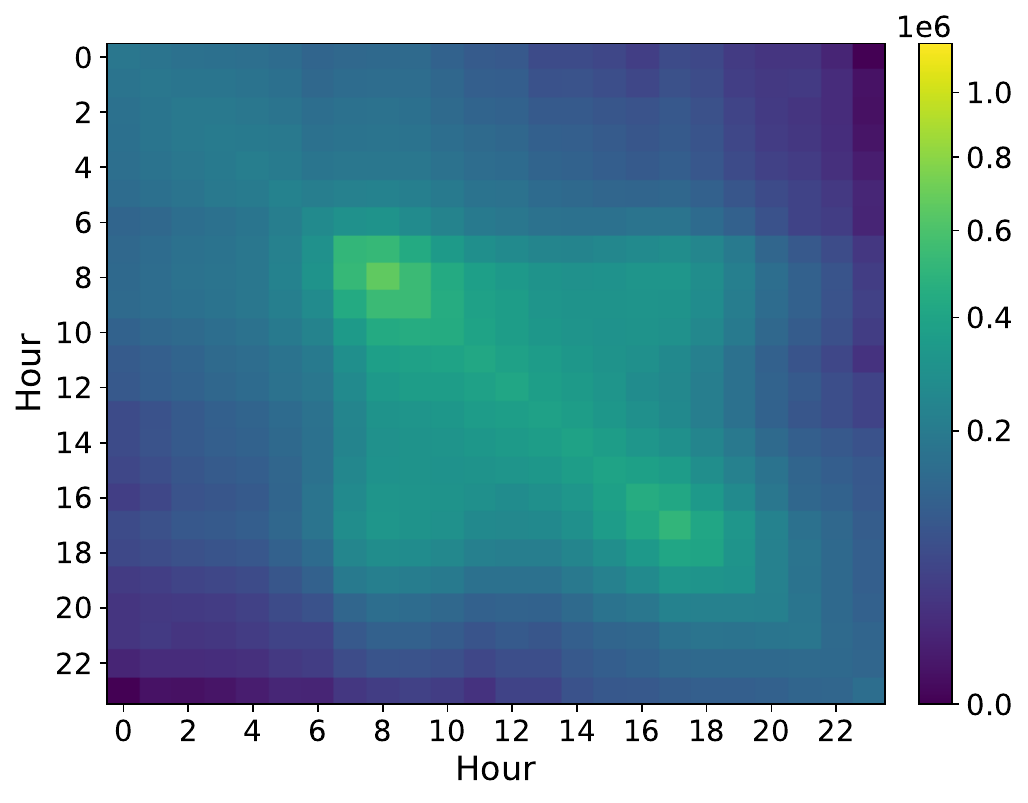}
    \end{subfigure}
    \hfill
    \begin{subfigure}[b]{0.32\textwidth}
        \centering
        \textbf{Spain}
        \includegraphics[width=\textwidth]{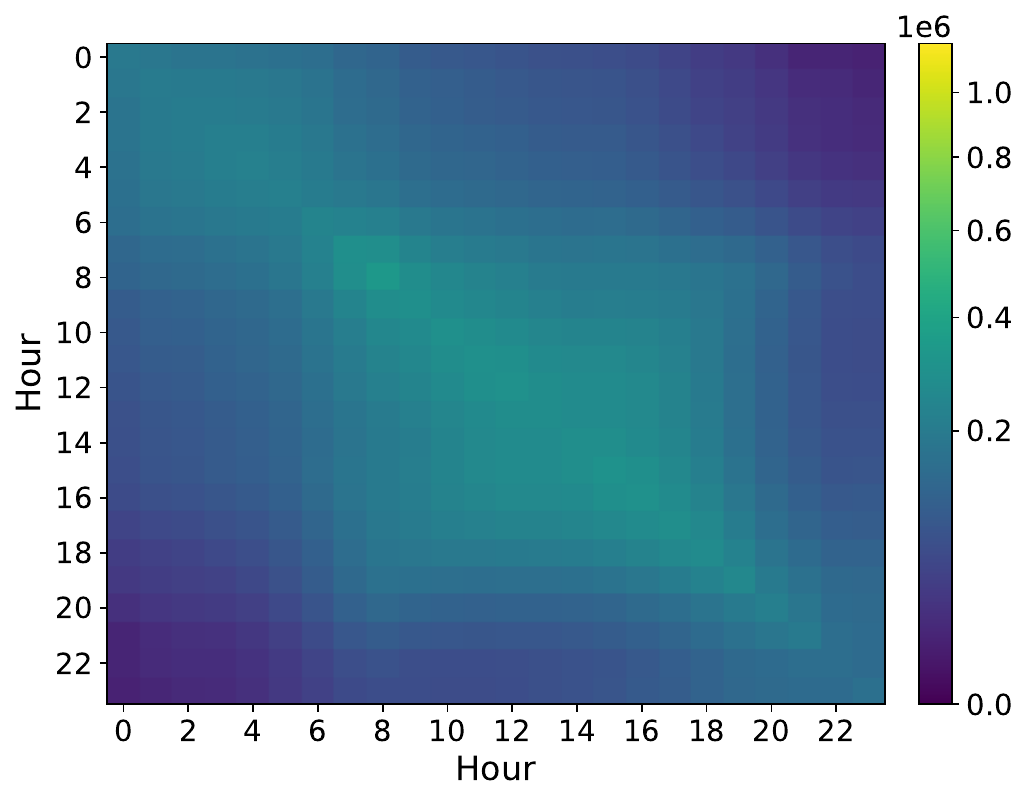}
    \end{subfigure} 
    \vspace{1cm}
    \begin{subfigure}[b]{0.32\textwidth}
        \centering
        \includegraphics[width=\textwidth]{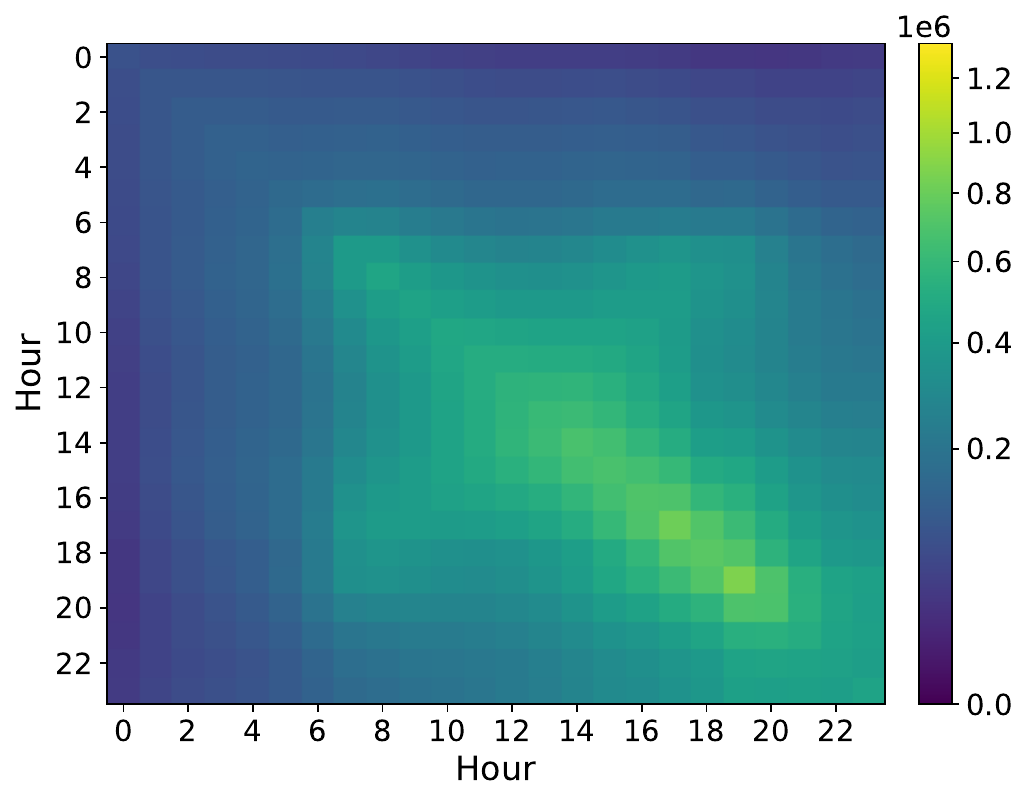}
    \end{subfigure}
    \hfill
    \begin{subfigure}[b]{0.32\textwidth}
        \centering
        \includegraphics[width=\textwidth]{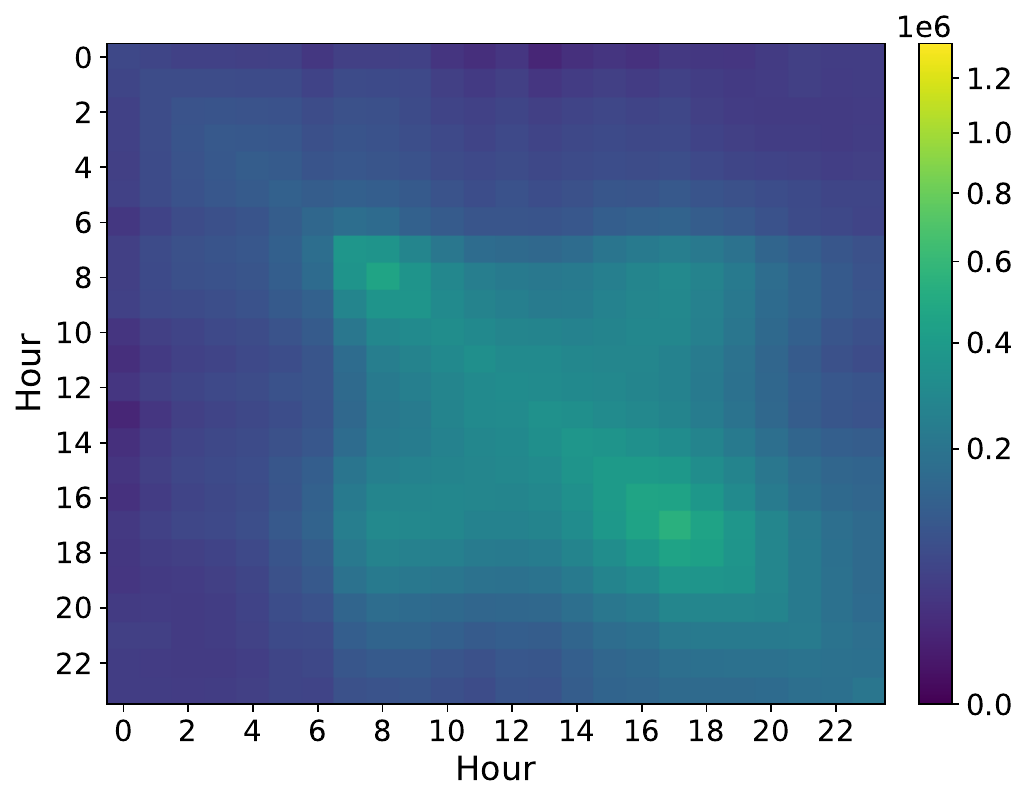}
    \end{subfigure}
    \hfill
    \begin{subfigure}[b]{0.32\textwidth}
        \centering
        \includegraphics[width=\textwidth]{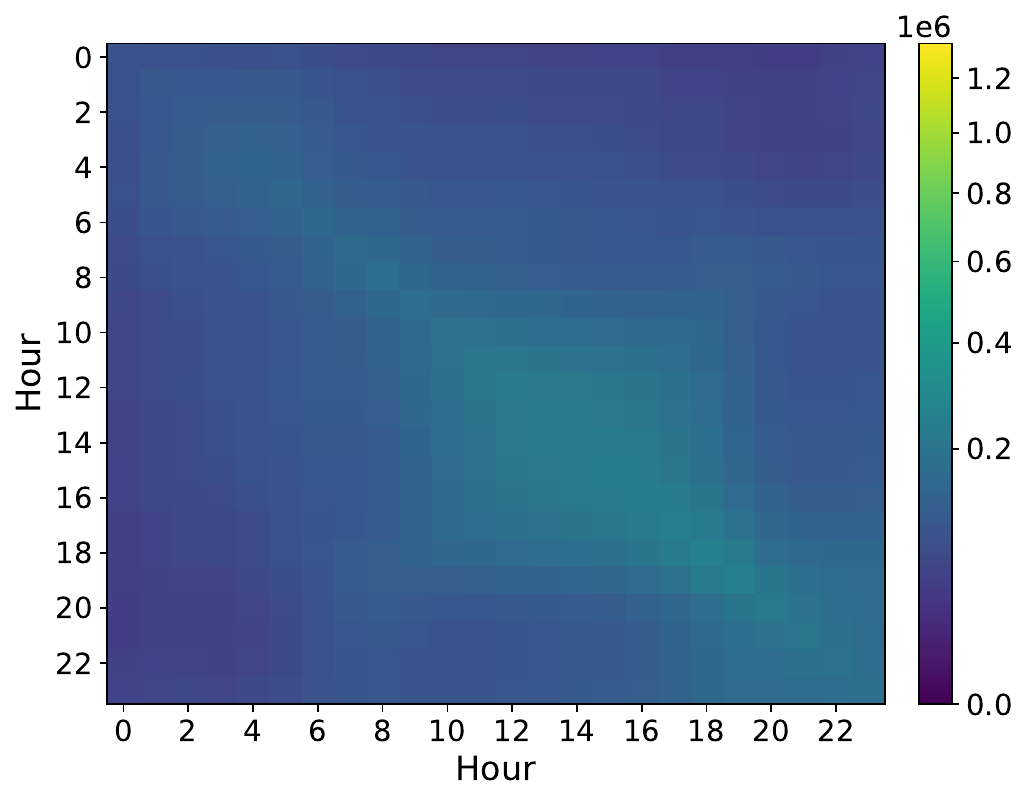}
    \end{subfigure} 
    \caption{Matrices depicting the average of rolling weekly realized covariation $\frac{1}{N_7}\sum_{j=1}^{N_7}\widehat{\Sigma}^7_j$ (top) and the average of rolling weekly adjusted realized covariation $\frac{1}{N_7}\sum_{j=1}^{N_7}\widehat{\Sigma}_{j,\mathrm{pl}}^w$ (bottom). Note that all of the plots are on the same scale.}
    \label{fig:realized_covariation_matrices}
\end{figure}

\subsection{Handling non-stationarities}\label{sec:stationarity}
It is well-known that electricity price levels tend to exhibit non-stationary behaviour in the form of predictable trends and seasonal periodicities. Apart from potentially contradicting Assumption~\ref{ass:main_stat_assumption}, this also leads to larger bias and contamination in the estimates of weekly realized covariation. The procedure of removing non-stationarities in electricity prices is often associated to quite intricate filtering procedures, with an overview of the one-dimensional case given in, e.g., \citet{JanczuraJoanna2013Isas}. We are, however, primarily interested in the sequence of differences, $(\Delta\widetilde{X})_{n=1}^{N}$ and so, if the non-stationarities in price levels can be interpreted as slow trends or unit roots, then the sequence of price differences should be suitably stationary.

We carry out several stationarity assessments for the 24-dimensional series of observed local averages $(P_n)_{n=1}^{N}$ in each country. As the primary test, we consider the multivariate KPSS type test for stationarity proposed in \citet{NyblomHarvey2000}. We supply this with the standard univariate KPSS test for stationarity on the individual price series, as well as the augmented Dickey-Fuller (ADF) test for unit roots on the individual price series. The test statistics and their critical values are tabulated in Table~\ref{tab:stationarity_levels} in Appendix~\ref{app:stationarity}. It is apparent from the KPSS tests that we reject stationarity in price levels across all countries at any reasonable level of significance, both jointly and in the individual price series. Interestingly, however, the ADF test rejects the presence of a unit root in most of the individual price series, which fits well with the notion that electricity spot prices are mean-reverting and hence shocks are not persistent. In Table~\ref{tab:stationarity_differences} in Appendix~\ref{app:stationarity}, we state the corresponding test statistics for the differenced price series, $(\Delta\widetilde{X}_n)_{n=1}^{N}$, which unanimously favours the assumption of stationarity at any reasonable level of significance. This suggests that the non-stationarities in price levels are slowly evolving trends that can largely be differenced out. Interestingly, any periodic effects seem to also be eliminated by differencing, at least to the extent at which the applied tests can detect them. To this end, we note that the KPSS alternative can be quite weak to some forms of non-stationarity in the covariance structure. Overall, we treat the price differences $(\Delta\widetilde{X}_n)_{n=1}^{N}$ as being largely trend-stationary, but noting that there may still be some non-stationary effects which are hard to pick up. This would potentially invalidate the long run estimator of Proposition~\ref{prop:long_span_limit}, but does not affect the conditional estimates of the RCV estimator \eqref{eq:RCV_estimator}. In particular, we note that the differencing means that the drift contribution is small, such that $D_n\approx 0$. Hence we expect that most of the bias arises from the propagation term, $B_n$, which can then be accounted for via Proposition~\ref{prop:plugin_estimator}.  

\subsection{Filtering of the propagation term}\label{sec:mean_reversion}
In addition to non-stationary predictable effects, it is well-documented that electricity prices exhibit mean-reversion, in the sense that price shocks do not persist indefinitely and tend to gradually diminish over time. In \citet{HuismanHuurmanMahieu2007}, it is found that prices in separate hours tend to mean-revert at different speeds. In the model \eqref{eq:model}, this is captured by the term $\mathcal{A}X_tdt$ and leads to the additional propagation terms $\mathbb{E}[A B_nB_n^\ast A^\ast]$ in the estimator \eqref{eq:RCV_estimator}. As we are mainly interested in the second order structure of the innovations/shocks to prices, we seek to filter out the deterministic propagation part by means of the plugin estimator $\widehat{\Sigma}^w_{j,\mathrm{pl}}$ of Proposition~\ref{prop:plugin_estimator}. Given the results of Section~\ref{sec:stationarity}, we assume that the drift admits the decomposition
\[
\mu_t = b_t+\widetilde{\mu}_t,
\]
where $b_t$ is a predictable process such that 
\[
\frac{1}{N}\sum_{n=1}^{N}\lVert m_n-m_{n-1}\rVert^2 \xrightarrow[N\to\infty]{}0, \quad m_n = A\int_{0}^{t_n}\mathcal{S}(t_n-s)b_sds,
\]
and the joint process $(\widetilde{\mu},\sigma)$ is strictly stationary. We can then estimate $\widehat{m}_n=\widehat{a}_n$, where
\begin{equation}\label{eq:mean_regression}
(\widehat{a}_n,\widehat{c}_n) = \arg\min_{a,c}\sum_{k=1}^{n-1}K\left( \frac{n-k}{\tau}\right)\lVert \widetilde{X}_k-a-c(k-n)\rVert^2,
\end{equation}
where $K$ is a kernel and $\tau$ the bandwidth. For our applications, we choose an Epanechnikov kernel with a bandwidth of 90 days, corresponding to $\tau = 0.246$; in order to eliminate weekly seasonality, we also include day-of-week dummies in the regression \eqref{eq:mean_regression}. The process $(\widetilde{X}_{t_n}-\widehat{m}_n)_{n\in\mathbb{N}}$ is then strictly stationary and with mean zero, such that Proposition~\ref{prop:plugin_estimator} applies and the resulting estimator $\widehat{\Sigma}^w_{j,\mathrm{pl}}$ removes some of the predictable propagation effect.

The semigroup estimator $\widehat{\mathcal{S}}_\delta$ in \eqref{eq:semigroup_sample_estimator} is, according to Remark~\ref{rem:linear_semigroup_predictor}, such that $\widehat{\mathcal{S}}_{\delta}\widetilde{X}_t$ is the best linear one-step predictor of the de-meaned vector of prices. The $(i,j)$'th entry of $\widehat{\mathcal S}_\delta$ can thus be interpreted as the contribution of today's hour $j$ in predicting tomorrow's hour $i$. Hence if the hourly price series $P_t^{(h)}$ are independent across $h$, the off-diagonal elements of $\widehat{\mathcal S}_\delta$ vanish, and the $i$'th diagonal entry equals exactly the AR(1) coefficient of the time series $P_t^{(i)}$ (this intuition breaks down whenever the off-diagonal elements are non-zero, but it is worth keeping in mind).

On Figure~\ref{fig:estimated_semigroup_DE_LU}, we depict the estimated $\widehat{\mathcal{S}}_\delta$ from the full data from German generation zone along with the corresponding eigenvalue spectrum. The estimated off-diagonal matrix is clearly non-zero and quite heterogeneous, supporting the findings of cross-sectionally varying mean-reversion rates in \citet{HuismanHuurmanMahieu2007}. In particular, we note that column 24 is positive and of larger magnitude than the other hours, with a slight decay from the evening to morning hours. This is evidence of price cyclicality as also documented in \citet{Kloster2026}, where the last price of today contributes with significant predictive power for the first few hours of tomorrow. The largest eigenvalue is $\lambda_1\approx0.8$, corresponding to common shocks across all hours to have a half-life of roughly $-\log(2)/\log(0.8)\approx 3$ days. The corresponding plots for Norway and Spain are depicted in Figures~\ref{fig:estimated_semigroup_NO_2} and \ref{fig:estimated_semigroup_ES} in Appendix~\ref{app:plots} and we note that the overall structure is strikingly similar although the common shock half-life is ca. 6 days in Norway. In practice, we cannot rely on the semigroup matrix $\widehat{\mathcal{S}}_\delta$ estimated from the full dataset, as this creates look-ahead bias. Whenever it is relevant to maintain causality in the analyses throughout the paper, we re-estimate $\widehat{\mathcal{S}}_\delta$ based on the available backward-looking data every four weeks (28 days) starting with a 52 week burn-in.

\begin{figure}[htbp]
    \centering
    \begin{subfigure}[b]{0.45\textwidth}
        \centering
        \includegraphics[width=\textwidth]{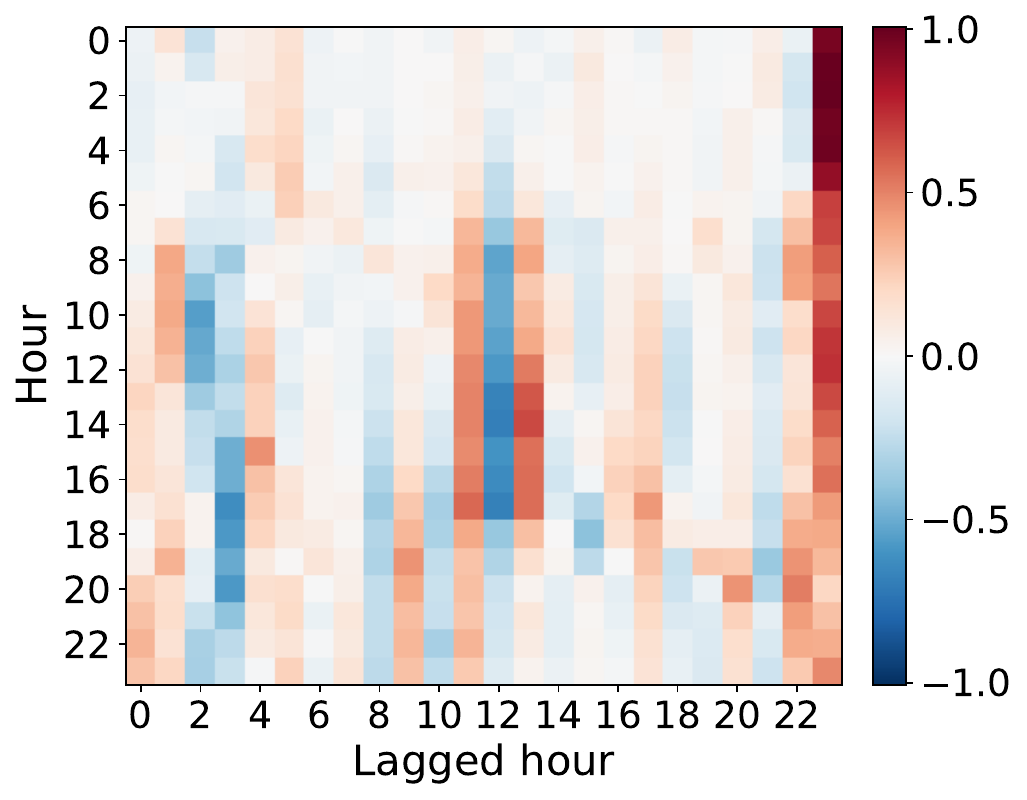}
        \subcaption{Semigroup matrix $\widehat{\mathcal{S}}_\delta$}
    \end{subfigure}
    \hfill
    \begin{subfigure}[b]{0.45\textwidth}
        \centering
        \includegraphics[width=\textwidth]{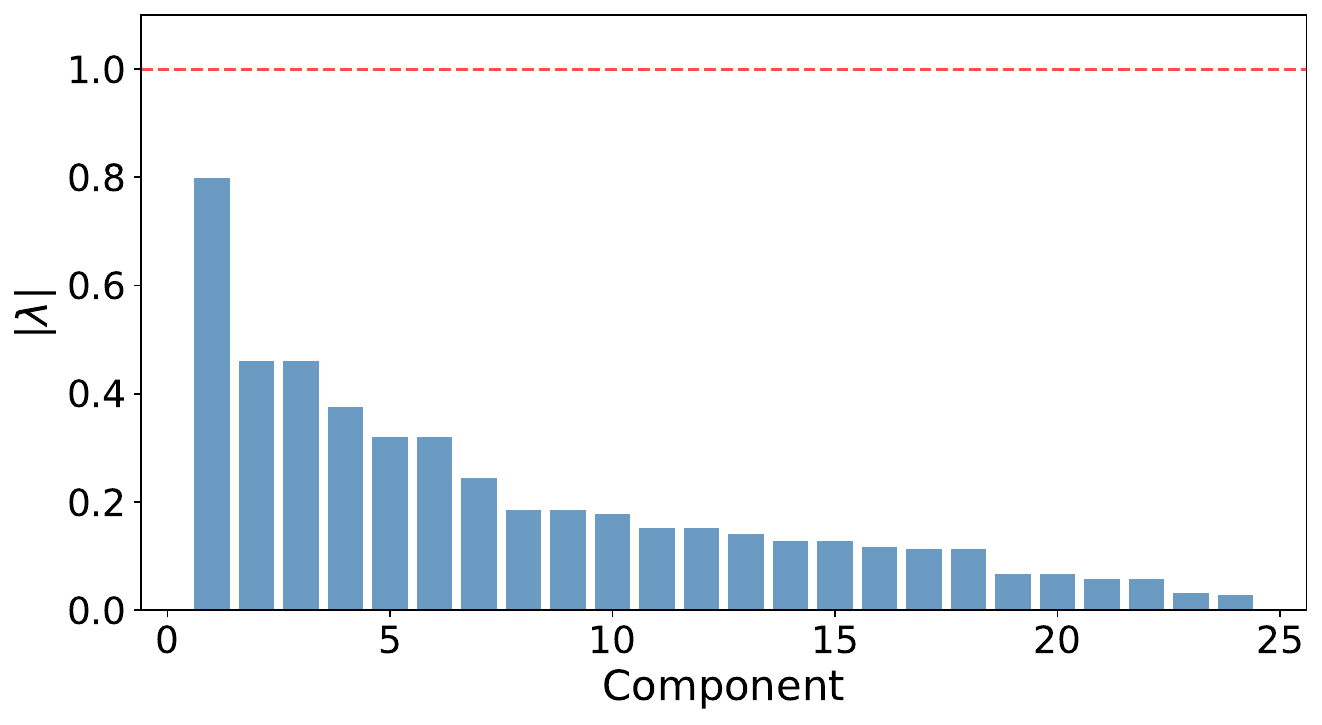}
        \subcaption{Eigenvalue spectrum}
    \end{subfigure}
    \caption{Estimated semigroup matrix $\widehat{\mathcal{S}}_\delta$ in the German generation zone (left) and the corresponding eigenvalue spectrum (right).}
    \label{fig:estimated_semigroup_DE_LU}
\end{figure}

\subsection{Preliminary findings}
In order to assess the magnitude of the propagation terms $B_j$ in the estimator $\widehat{\Sigma}^w_j$, we construct the \emph{propagation share}, denoted $PS$, designed to measure how much variation in the naive RCV estimator $\widehat{\Sigma}_j^w$ can be explained by the deterministic propagation via $\mathcal{A}$. Supposing that the sequence $(\widetilde{X}_{t_n})_{n}$ has been de-meaned via \eqref{eq:mean_regression}, it follows from Proposition~\ref{prop:increment_decomp} that $\Delta_n\widetilde{X}$ decomposes as $\Delta_n\widetilde{X} = B_n+M_n$, where $B_n$ and $M_n$ are orthogonal. Hence 
\[
\sum_n \lVert \Delta_n\widetilde{X}\rVert^2 = \sum_{n}\lVert B_n\rVert^2 + \sum_{n}\lVert M_n\rVert^2,
\]
and we can define $PS$ as the following coefficient of determination
\[
PS = \frac{\sum_{n}\lVert \widehat{B}_n\rVert^2}{\sum_n \lVert \Delta_n\widetilde{X}\rVert^2},
\]
where $\widehat{B}_n=(\widehat{\mathcal{S}}_\delta - I)\widetilde{X}_{t_{n-1}}$. Across the three zones, the propagation share is $PS\approx 0.4$, indicating that approximately $40\%$ of the observed variation across all hours can be attributed to mean-reversion. Interestingly, when inspecting the propagation shares of individual hours $PS_h$, corresponding to the ratio of column-sums of squared components
\[
PS_h = \frac{\sum_n (\widehat{B}_n^{(h)})^2}{\sum_n (\Delta_n\widetilde X^{(h)})^2}, \qquad h = 1, \ldots, d,
\]
we find that there is a clear decreasing proportion as $h$ increases. This means that, in all three zones under consideration, the late hours tend to exhibit a larger share of variation coming from raw noise than the early hours. In other words, the early hours tend to be relatively more predictable from the current price level itself. We depict the propagation share across hours on Figure~\ref{fig:propagation_DE_LU} for Germany, and on for Norway and Spain on Figures~\ref{fig:propagation_NO_2} and \ref{fig:propagation_ES} in Appendix~\ref{app:plots}. An interesting question arising from the observed monotone decay in $h\mapsto PS_h$ is whether this can be attributed to the temporal distance between the time at which prices are determined and the time of delivery. Indeed, when prices are determined, the price $P^{(d)}_t$ is valid many hours later than $P_t^{(1)}$ and there may thus be more uncertainty about the underlying supply and demand further in the future, leading to less predictable prices. We further investigate this effect in Section~\ref{sec:propagation}

On Figure~\ref{fig:realized_covariation_matrices}, we depict the unconditional estimates of the rolling weekly RCV in each market, constructed according to Proposition~\ref{prop:long_span_limit}. We include both the estimates obtained from $\widehat{\Sigma}_j^w$ and $\widehat{\Sigma}^w_{j,\mathrm{pl}}$. It is apparent that it is very important to account for the deterministic mean-reversion, as this removes a substantial amount of variation in prices. On Figure~\ref{fig:realized_correlation_matrices}, we depict the corresponding standardized ``realized correlations'', computed as
\begin{equation}\label{eq:realized_corr}
\rho = \widehat{Q}^{-1}\widehat{\Sigma}^w\widehat{Q}^{-1}, 
\end{equation}
where $\widehat{\Sigma}^w=\frac{1}{N_w}\sum_{j=1}^{N_w}\widehat{\Sigma}_j^w$ and $\widehat{Q} = \mathrm{diag}(\widehat{\Sigma}^w)^{1/2}$, and similarly for the propagation adjusted version. This realized correlation serves as a normalized measure of the estimated RCV, to better visually gauge patterns. Also here, it is apparent that the propagation adjustment leads to substantially different results. Most strikingly, we observe that prior to the adjustment there is a cluster of negative values, whereas all entries across all zones are positive after the adjustment. The correlations that appear negative at first sight are between early and late hours; in Section~\ref{sec:propagation}, we show that this apparent negative correlation arises from the larger propagation share of early hours where most variation arise from the deterministic propagation which tends to drag early hour prices in the same direction, regardless of the shocks to other hours.

\begin{figure}
    \centering
    \includegraphics[width=\linewidth]{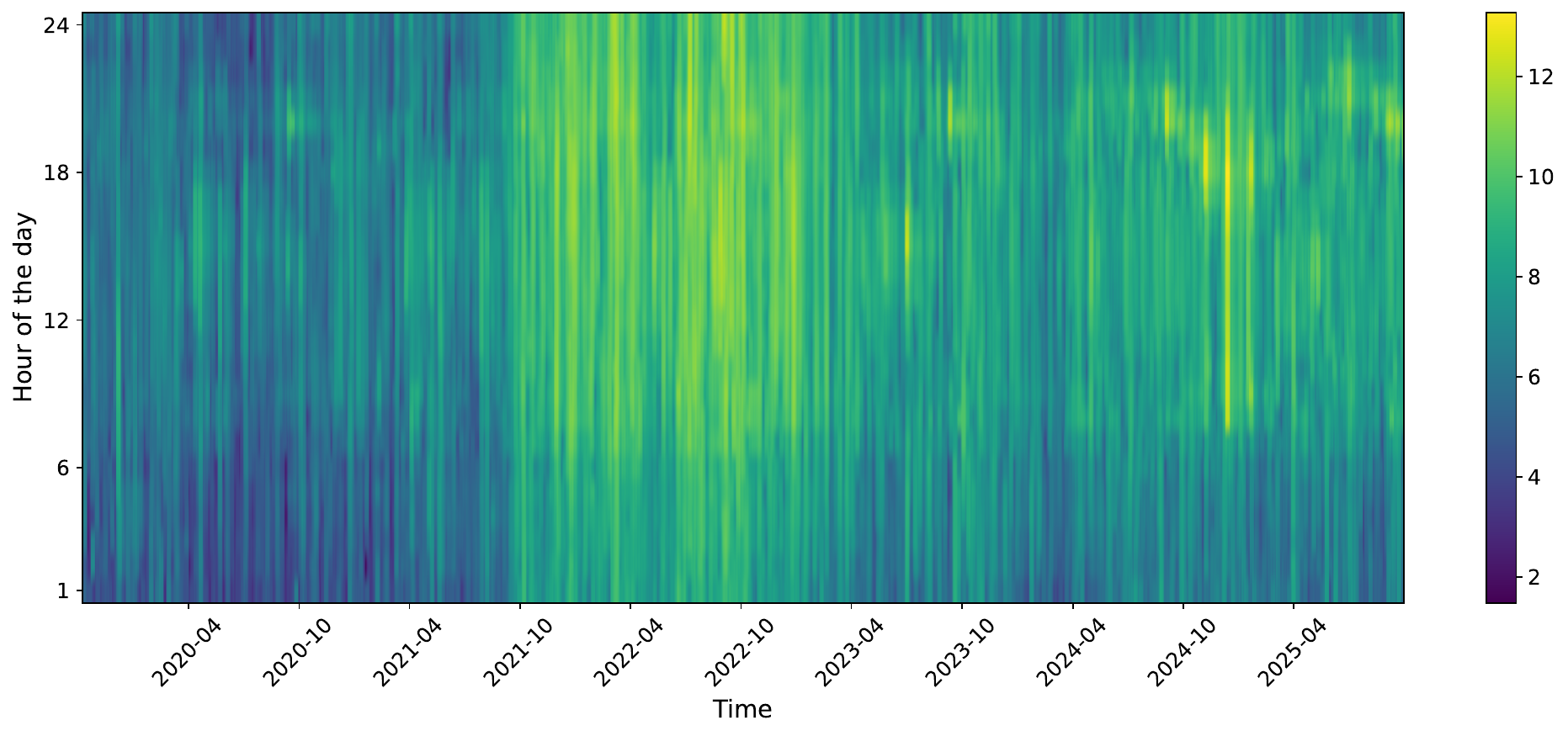}
    \caption{Heatmap of logarithm of estimated weekly RCV, $\mathrm{diag}(\widehat{\Sigma}^7_{t,\mathrm{pl}})$, for the German generation zone.}\label{fig:log_vol_heatmap_DE_LU}
\end{figure}

As an illustration of the conditional integrated variance, we depict the diagonal $\mathrm{diag}(\widehat{\Sigma}_{j,\mathrm{pl}}^w)$ over time, corresponding to the vector of integrated variances across all hours. Figure~\ref{fig:log_vol_heatmap_DE_LU} shows the results for the German zone, while Figures~\ref{fig:log_vol_heatmap_NO_2} and \ref{fig:log_vol_heatmap_ES} in Appendix~\ref{app:plots} show the results of the Norwegian and Spanish zone. To maintain a reasonable scale, we plot the logarithm of the estimated conditional integrated variances, which in particular keeps the extreme spikes during 2022 at reasonable levels. It is clear that the overall volatility varies wildly over time with apparent clustering effects. Furthermore, we note that periods of high volatility seems to often coincide with periods of high price levels. We shall return to this issue in Section~\ref{sec:inverse_leverage}.

\begin{figure}[t]
    \centering
    \begin{subfigure}[b]{0.32\textwidth}
        \centering
        \textbf{Germany}
        \includegraphics[width=\textwidth]{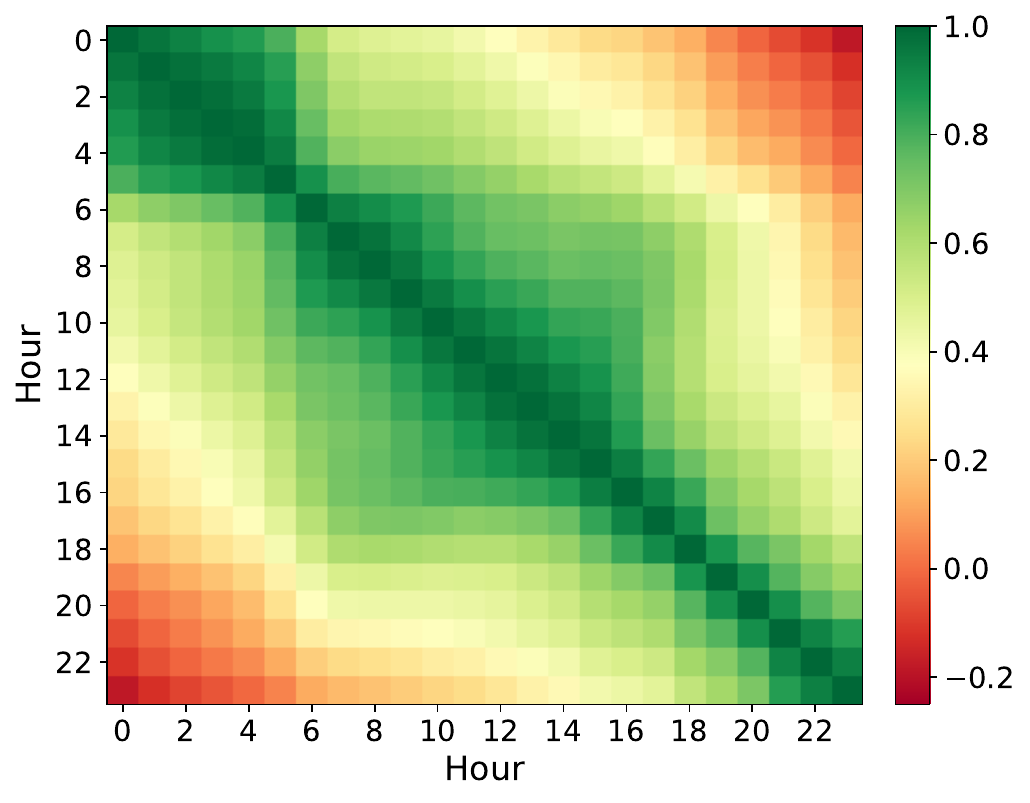}
    \end{subfigure}
    \hfill
    \begin{subfigure}[b]{0.32\textwidth}
        \centering
        \textbf{Norway (NO2)}
        \includegraphics[width=\textwidth]{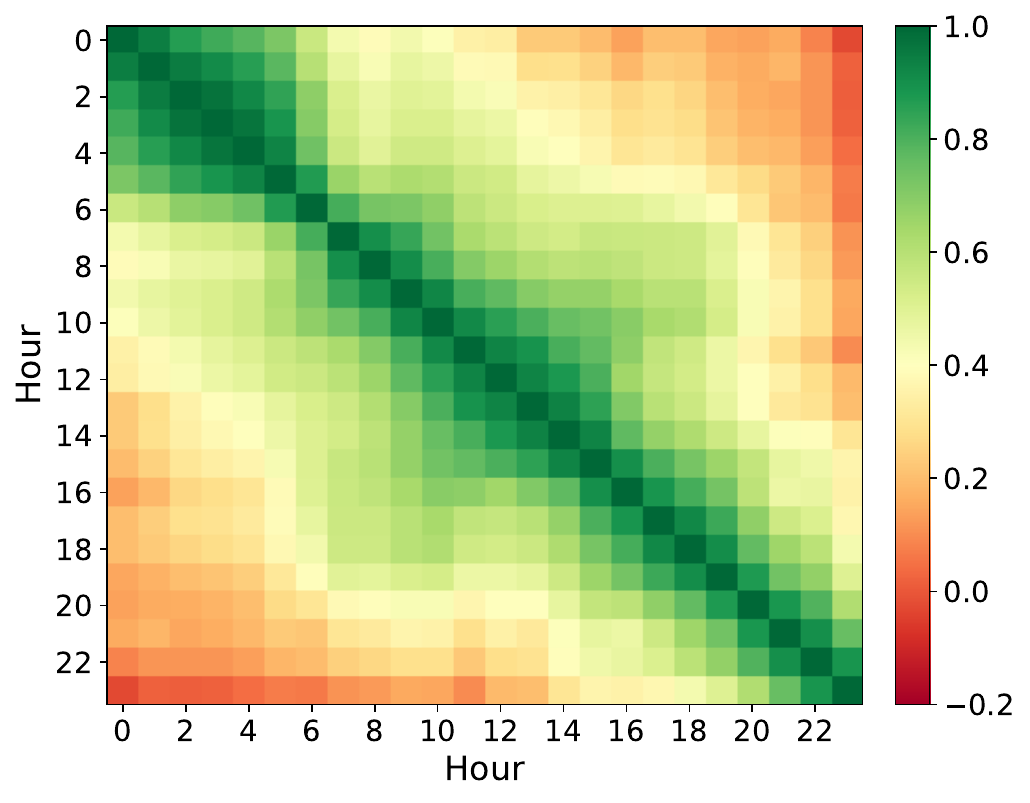}
    \end{subfigure}
    \hfill
    \begin{subfigure}[b]{0.32\textwidth}
        \centering
        \textbf{Spain}
        \includegraphics[width=\textwidth]{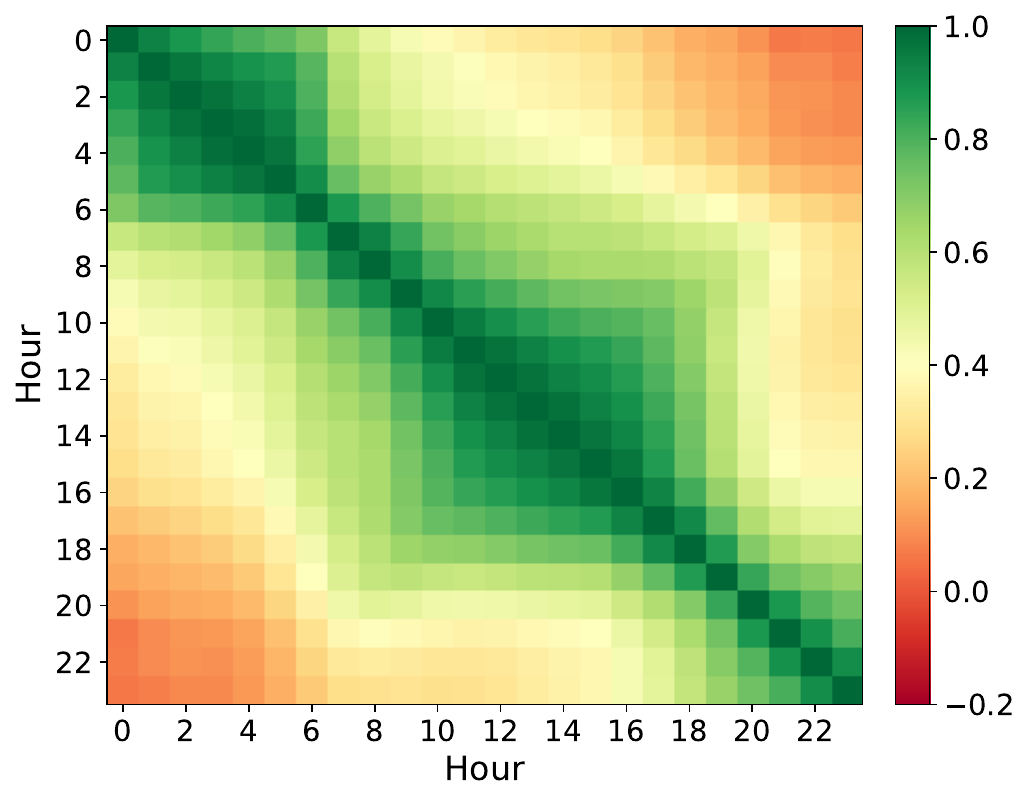}
    \end{subfigure} 
    \vspace{1cm}
    \begin{subfigure}[b]{0.32\textwidth}
        \centering
        \includegraphics[width=\textwidth]{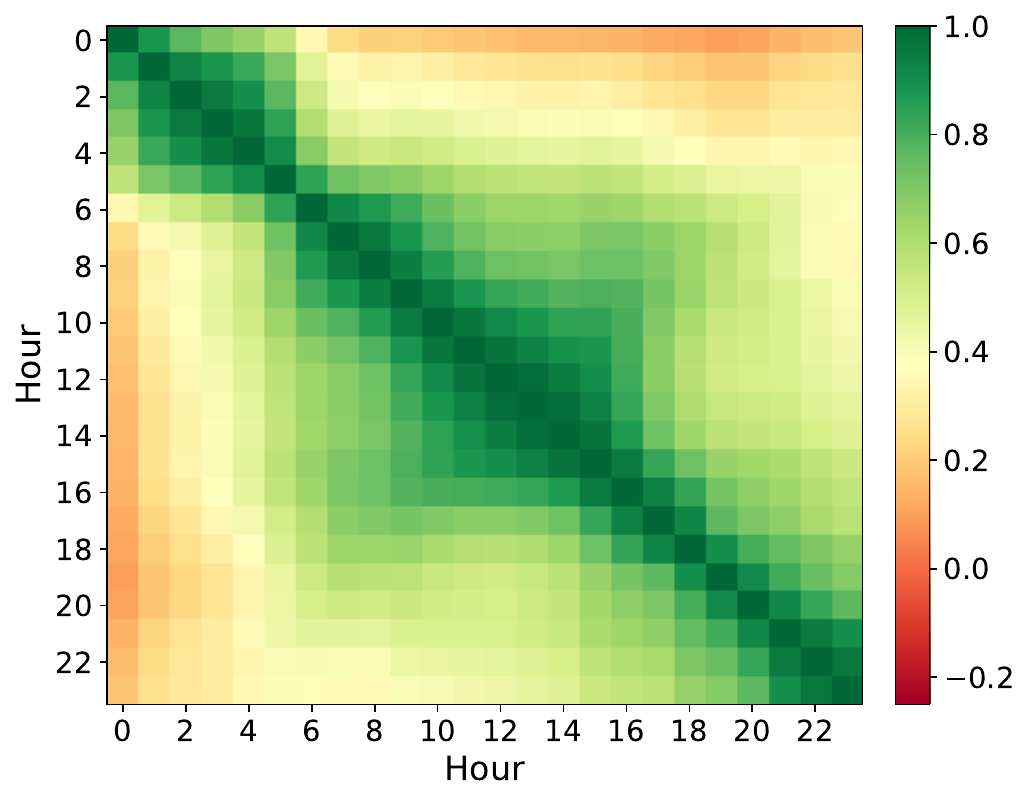}
    \end{subfigure}
    \hfill
    \begin{subfigure}[b]{0.32\textwidth}
        \centering
        \includegraphics[width=\textwidth]{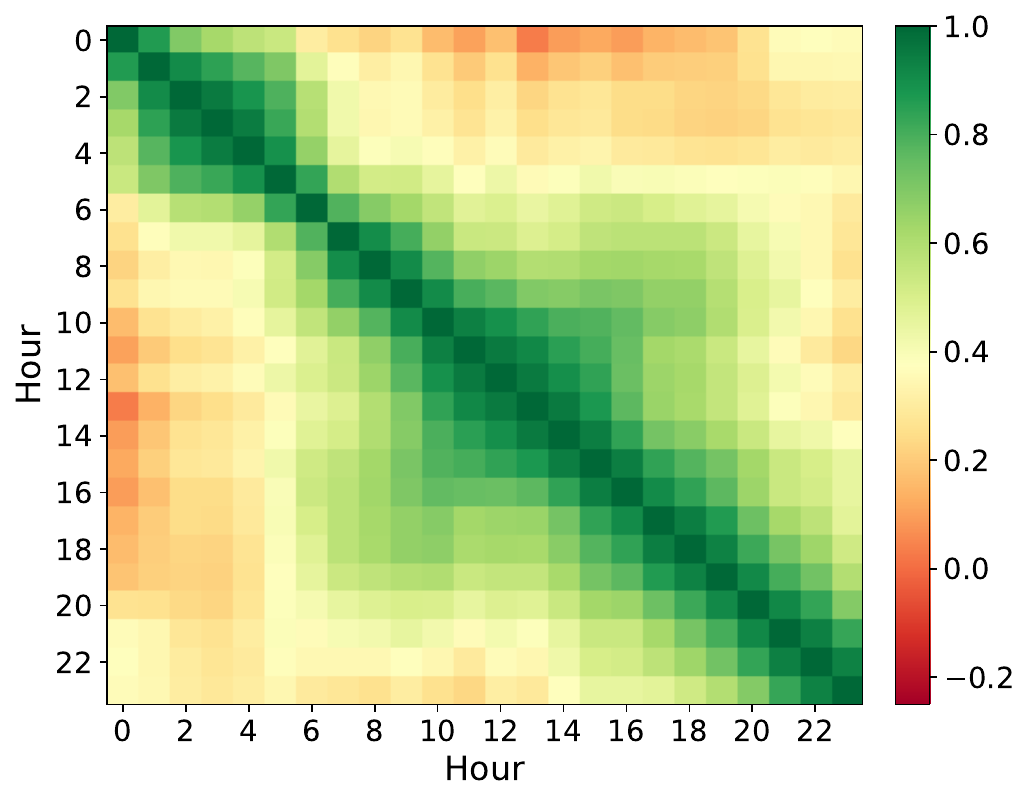}
    \end{subfigure}
    \hfill
    \begin{subfigure}[b]{0.32\textwidth}
        \centering
        \includegraphics[width=\textwidth]{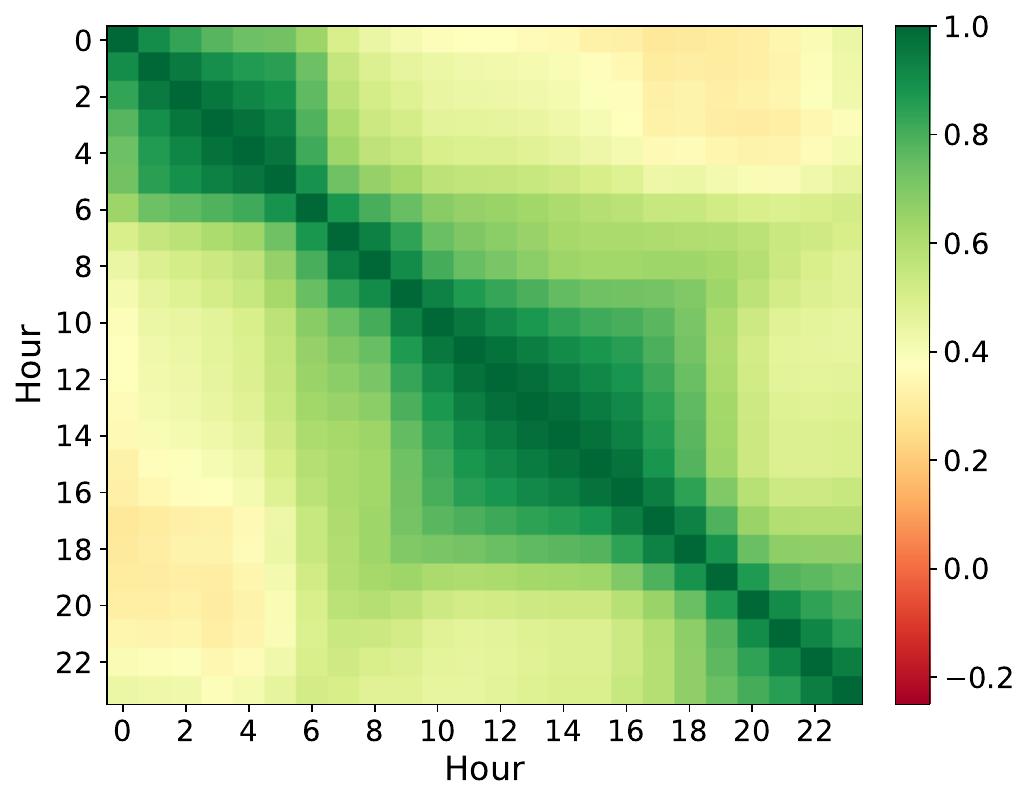}
    \end{subfigure} 
    \caption{Matrices depicting the realized correlations, computed via \eqref{eq:realized_corr}. Top panel shows the correlations computed via the raw RCV and the lower panel shows the correlations computed via the propagation adjusted RCV.}
    \label{fig:realized_correlation_matrices}
\end{figure}

\section{Decomposing the realized covariation}\label{sec:decomposing_rcv}
The raw estimate of the realized covariation matrix $\widehat{\Sigma}^w_{\mathrm{pl}}=\frac{1}{N_w}\sum_{j=1}^{N_w}\widehat{\Sigma}_{j,\mathrm{pl}}^w\in\mathbb{R}^{d\times d}$ reveals an interesting pattern. In order to assess the structure in more detail, we carry out a principal component decomposition of the form
\[
\widehat{\Sigma}_{\mathrm{pl}}^w=Q\Lambda Q^\top,
\]
where $Q^\top Q=Q Q^{\top}=I_d$, $\Lambda = \mathrm{diag}(\lambda_1,\ldots ,\lambda_d)$ with eigenvalues $\lambda_1\geq \cdots\geq \lambda_d\geq 0$, and $\widehat{\Sigma}_{\mathrm{pl}}^wq_k=\lambda_k q_k$. Equivalently, 
\[
\widehat{\Sigma}^w_{\mathrm{pl}} = \sum_{k=1}^{d}\lambda_kq_kq_k^\top,
\]
where $q_k\in\mathbb{R}^d$ are the principal directions and $\lambda_k$ the variance explained along that direction. The \emph{factor loadings} are then defined as $\sqrt{\lambda_k}q_k$ for $k=1,\ldots ,d$, ranked by the proportion of variance that they explain. The \emph{factor scores} are then the coordinates of each observation, $x_t$, in the factor directions such that the $k$'th factor score is $S_{k}(t)=q_k^\top x_t$. For all three generation zones, we find that between six and eight principal components are able to reproduce $95\%$ of the variance, indicating that a few common factors drive most of the volatility. We note that this amount of factors is relatively high compared to more traditional areas of finance, such as yield curve modeling, where three factors often suffice. This is not too surprising, since yield curves evolve much less erratically than electricity prices. 

The factor loadings of the estimated RCV in the German generation zone are depicted on Figure~\ref{fig:factor_loadings_DE_LU} for the first four factor loadings. These reveal an interesting structure which is remarkably similar across the three zones, where the corresponding plots for Norway and Spain are provided on Figures~\ref{fig:factor_loadings_NO_2} and \ref{fig:factor_loadings_ES} in Appendix~\ref{app:plots}. The first factor explains more than $60\%$ of the total variation in $\widehat{\Sigma}_{\mathrm{pl}}^w$ in all three zones, and we note that this can be interpreted as a ``level factor'', that describes the overall shape of the weekly RCV. The interpretation of a level factor is slightly different from many conventional settings, since $q_1$ is not flat across hours. Indeed, $q_1$ shows a distinct hump shape that rises from the early hours, peaks around 18-19 o'clock, and slightly decreases towards the late hours. This indicates that the mid-day and late hours generally exhibit higher levels of volatility. All of the entries of $q_1$ do, however, share the same sign, meaning that the first factor determines co-movements of the volatility across all hours. 

\begin{table}[t]
\footnotesize
\centering
\setlength{\tabcolsep}{3pt}
\renewcommand{\arraystretch}{0.9}
\caption{Regressions of $\log (S_1(t))$ on same-week fundamentals, using the propagation-adjusted RCV. All regressors are contemporaneous (same week). t-statistics are reported in parentheses and p-values computed from HAC standard errors (Newey--West, 7 lags). Significance: $^{***}$\,$p<0.01$, $^{**}$\,$p<0.05$, $^{*}$\,$p<0.10$. RES means ``renewable energy sources''.}
\label{tab:contemp_logIV_DE}
\begin{tabular}{@{}lcc@{}}
\toprule
 & RES + Price & All sources \\
\midrule
\grp{\textbf{Price level}} \\
$\overline{P}$     & \ct{0.0289$^{***}$}{18.55} & \ct{0.0257$^{***}$}{11.76} \\
$\overline{P}^2$   & \ct{$-$3.8e-5$^{***}$}{$-$9.18} & \ct{$-$3.4e-5$^{***}$}{$-$7.02} \\

\grp{\textbf{RES forecasts, PC1 scores}} \\
Load PC1           & \na & \ct{$-$6.4e-6}{$-$1.61} \\
Wind PC1           & \ct{2.1e-5$^{***}$}{14.46} & \ct{1.8e-4}{1.46} \\
Solar PC1          & \ct{1.8e-5$^{***}$}{11.83} & \ct{1.0e-5}{0.53} \\

\grp{\textbf{RES actuals and forecast errors}} \\
Wind actual        & \na & \ct{$-$8.3e-4}{$-$1.39} \\
Wind error         & \ct{1.5e-4$^{**}$}{2.56} & \ct{$-$7.2e-4}{$-$1.18} \\
Solar actual       & \na & \ct{$-$8.8e-5}{$-$0.65} \\
Solar error        & \ct{3.1e-4$^{**}$}{2.44} & \ct{2.2e-4}{1.18} \\

\grp{\textbf{Thermal generation}} \\
Lignite            & \na & \ct{$-$1.2e-4$^{***}$}{$-$4.44} \\
Hard coal          & \na & \ct{6.5e-6}{0.19} \\

\grp{\textbf{Hydro and other}} \\
Hydro pump.\ stor. & \na & \ct{8.3e-4$^{***}$}{5.51} \\
Hydro run-of-river & \na & \ct{$-$9.0e-4$^{***}$}{$-$5.20} \\
Biomass            & \na & \ct{$-$3.1e-4}{$-$1.36} \\
Geothermal         & \na & \ct{0.014}{1.49} \\
Waste              & \na & \ct{4.6e-4}{1.47} \\
Other              & \na & \ct{$-$5.8e-5}{$-$0.06} \\
Other renewable    & \na & \ct{$-$7.4e-3$^{**}$}{$-$2.51} \\
\midrule
$R^2$              & 0.730 & 0.813 \\
$N$                & 2{,}129 & 2{,}119 \\
\bottomrule
\end{tabular}
\end{table}

The remaining factors all have sign changes and therefore induce more sophisticated co-movement patterns. For example, we see that the third principal component in the German zone loads primarily onto the early hours, thus describing a substantial part of their variation. In contrast to the level factor, the third principal component induces negative correlation between the early and later hours. Interestingly, the second principal component seems to exhibit a shape close to the so-called \emph{duck curve}, which refers to the conventional shape of electricity demand, less the demand supplied by renewable resources. These patterns are similar in the Norwegian zone, whereas in the Spanish zone the role of the second and third principal components are seemingly swapped. These slight discrepancies could likely arise from structural and market specific features such as price coupling and generation variables, and a more detailed analysis would be interesting to pursue in the future. 

\begin{figure}
    \centering
    \includegraphics[width=0.9\linewidth]{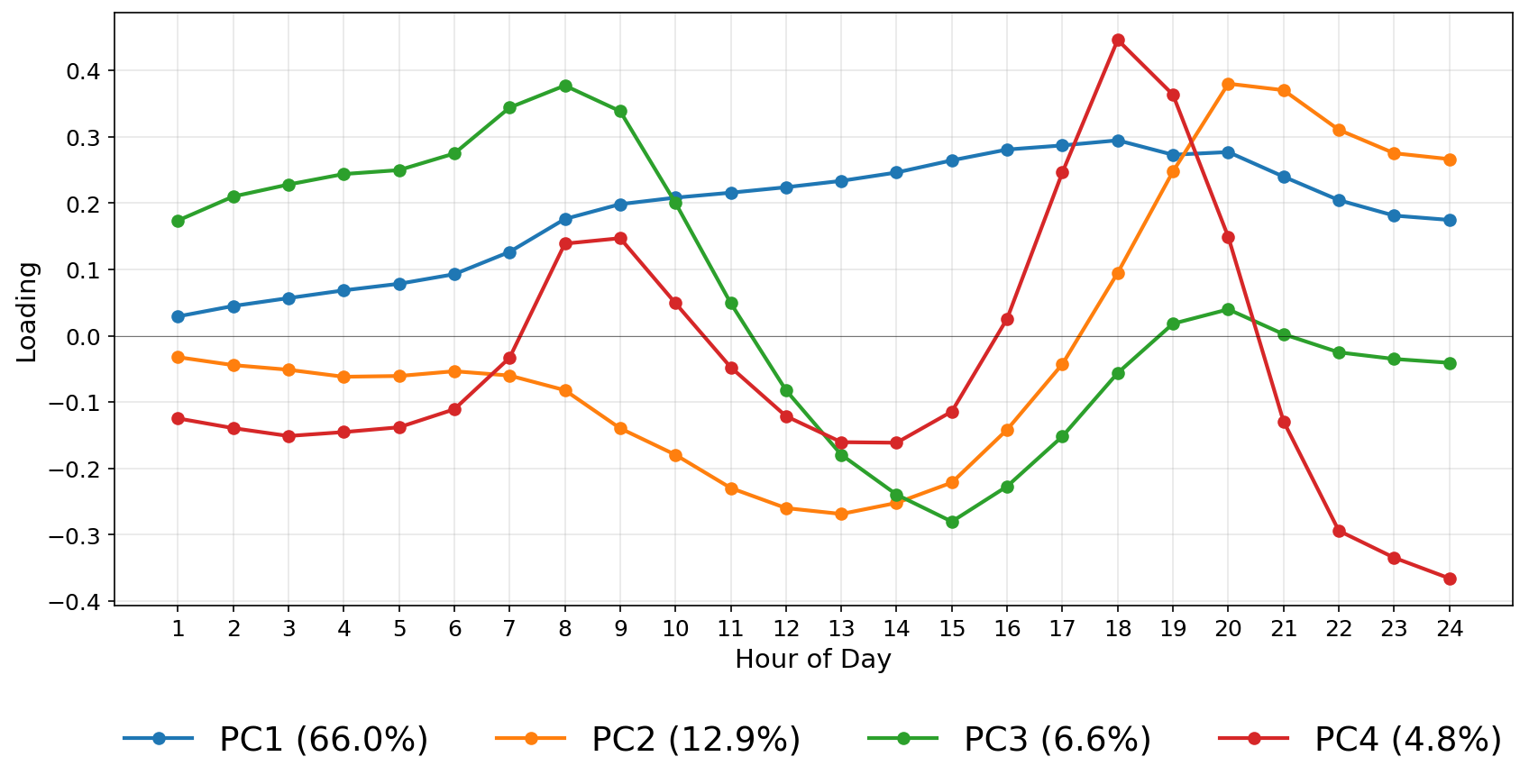}
    \caption{First four factor loadings in the German generation zone. Note that due to rotation invariance, the loadings are only identified up to a sign. The legends also denote the percentage of variance explained by each component.}
    \label{fig:factor_loadings_DE_LU}
\end{figure}

\subsection{Factor loading stability}
The principal components with loadings depicted on Figure~\ref{fig:factor_loadings_DE_LU} are extracted based on the full sample. It is therefore not guaranteed that any interpretations based on these loadings are valid at all points in time. On Figure~\ref{fig:loading_stability_DE}, we plot the weekly estimated factor loadings over time from the German generation zone and corresponding to the first four principal components (aligned to have same sign throughout). The same figure is reproduced for Norway and Spain on Figures~\ref{fig:loading_stability_NO_2} and \ref{fig:loading_stability_ES} in Appendix~\ref{app:plots}. Although there is some -- seemingly seasonal -- variation in the factor loadings, their overall \emph{shapes} are very consistent over time, especially for the first three factors. This suggests that the interpretations of the loadings are stable over time.

\subsection{The level factor}\label{sec:level_factor}
Since the level factor explains an excess of 60\% of the variation in the weekly RCV across all three zones under consideration, we consider this as the main driver of volatility. In order to assess what drives the level factor, and by extension volatility, we consider the factor score, $S_1(t)$, and regress $\log (S_1(t))$ on various variables relating to expected electricity demand and production, as well as the overall price level. The log-transform is chosen to suppress outliers, as the score has several large spikes; the log-transform is valid as the score is always positive throughout our datasets. The data is described in Appendix~\ref{app:data}. Each zone has day-ahead forecasted wind and solar generation (except NO2 which has no solar generation) and the actual realized production mix. We include rolling weekly averages of all these, as well as the PC1 factor scores of wind and solar forecasts in the regression, where we note that the first principal component of wind and solar forecasts explain 86\% and 95\% of their respective variance. We then control for price-scaling effects by including the rolling weekly average price $\overline{P}$ as well as $\overline{P}^2$. We also include the average forecast errors on wind and solar generation. We note that the regressions do not contain any information on what predicts volatility, but merely illustrates contemporaneous relationships between volatility and various generation and price variables. The results are depicted for the German generation zone in Table~\ref{tab:contemp_logIV_DE}, while similar tables are depicted for Norway and Spain in Appendix~\ref{app:plots}. We do not add seasonal dummies, as these are largely absorbed by the price level controls. We also note that the coefficients are not comparable across zones, as the amount of electricity generated by the same type of fuel varies greatly. 

\begin{figure}[t]
    \centering
    \begin{subfigure}[b]{0.4\textwidth}
        \includegraphics[width=\linewidth]{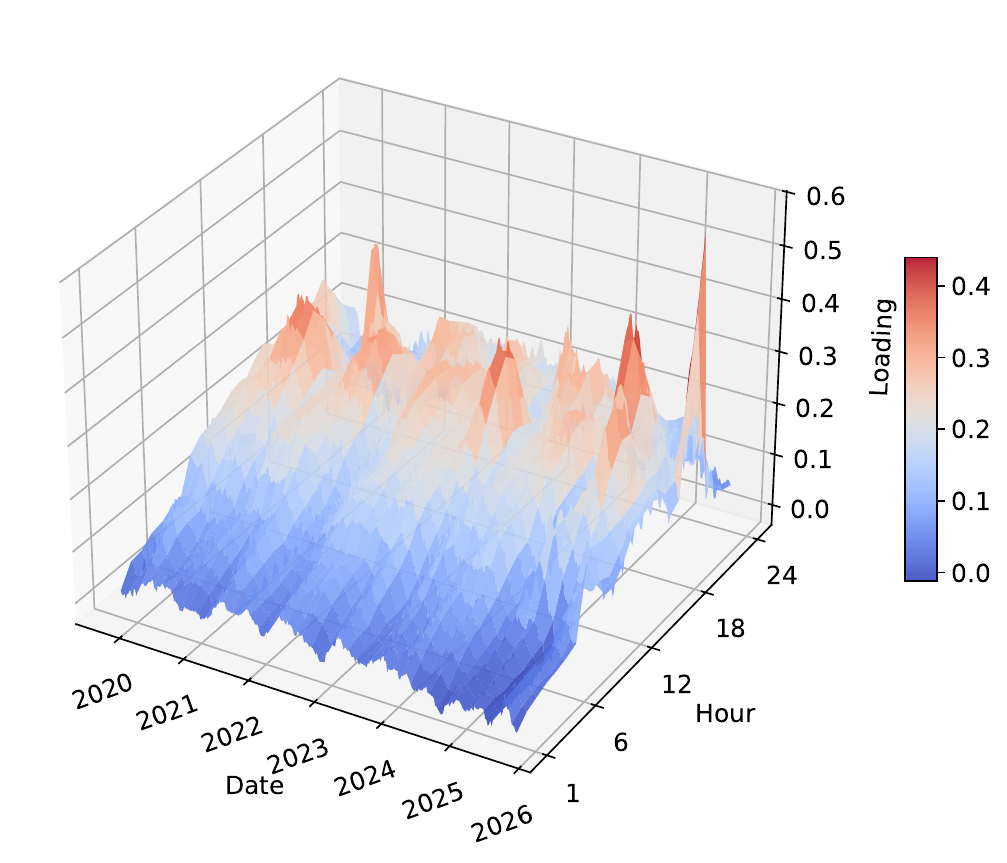}
        \caption{PC1}
    \end{subfigure}
    \hspace{0.04\textwidth}%
    \begin{subfigure}[b]{0.4\textwidth}
        \includegraphics[width=\linewidth]{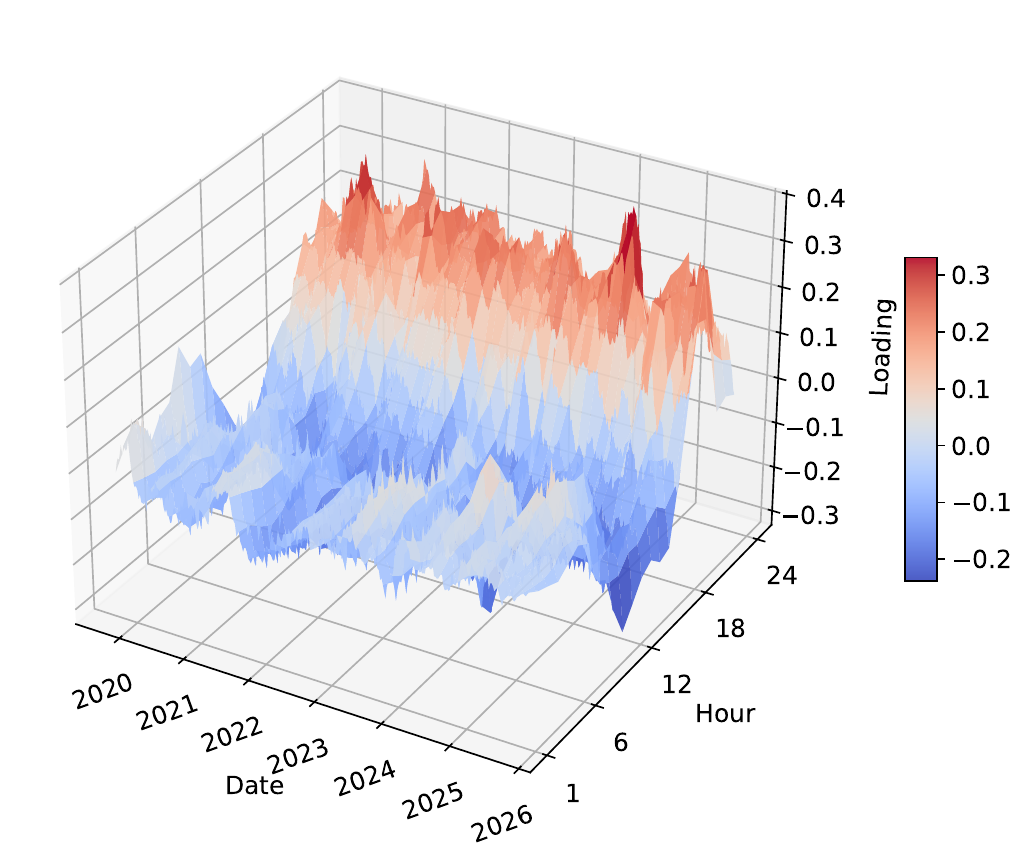}
        \caption{PC2}
    \end{subfigure}

    \vspace{0.3cm}

    \begin{subfigure}[b]{0.4\textwidth}
        \includegraphics[width=\linewidth]{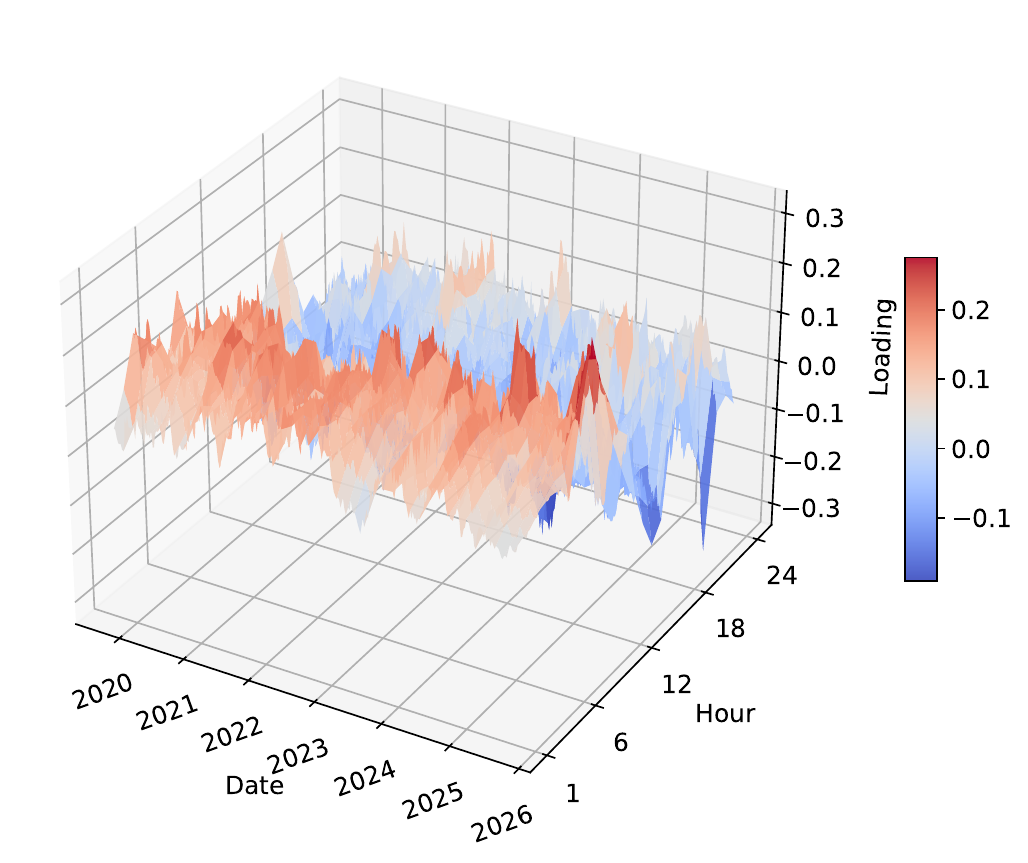}
        \caption{PC3}
    \end{subfigure}
    \hspace{0.04\textwidth}%
    \begin{subfigure}[b]{0.4\textwidth}
        \includegraphics[width=\linewidth]{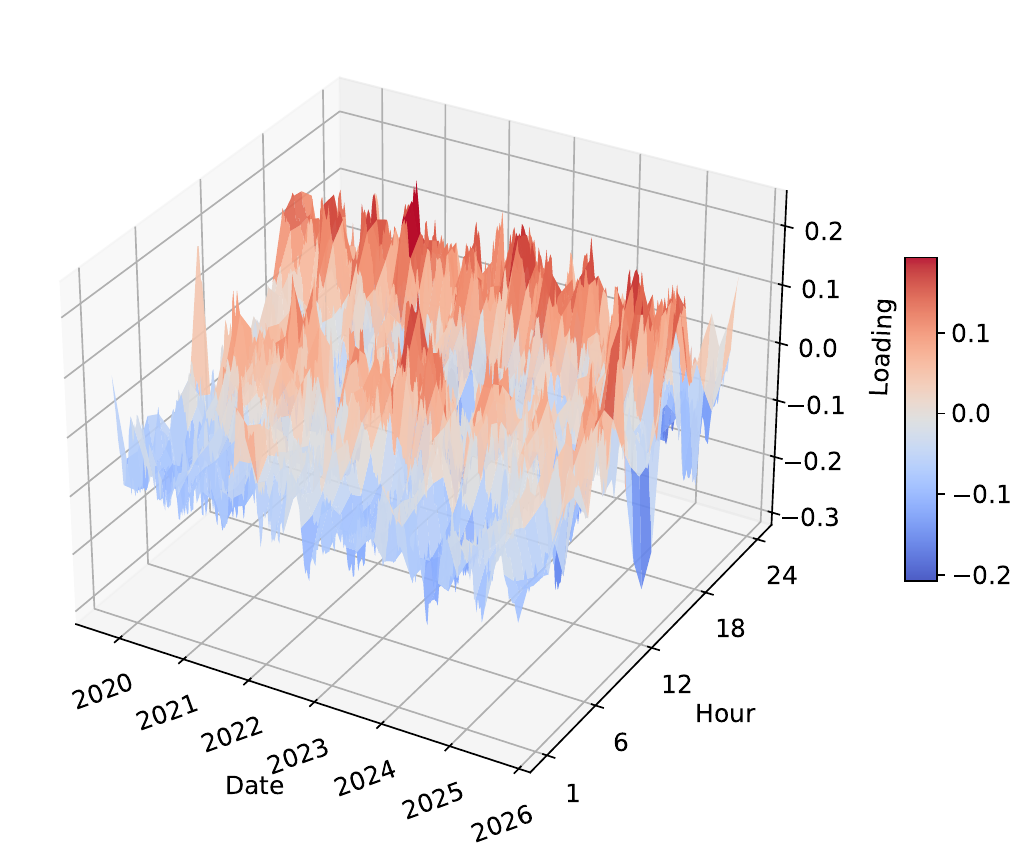}
        \caption{PC4}
    \end{subfigure}
    \caption{Factor loadings over time for the first four principal components in the German generation zone.}\label{fig:loading_stability_DE}
\end{figure}

The regression results reveal that the level of weekly RCV is very strongly tied to the price level itself in all zones. We refer to this as price/level-scaling and return to this issue when investigating leverage effects in Section~\ref{sec:inverse_leverage}. Both $\overline{P}$ and $\overline{P}^2$ are highly significant in all zones, and combined explain 54\%, 52\%, and 33\% of the variation in $\log(S_1(t))$ in Germany, Norway, and Spain, respectively. We note that in the full regression, almost all variables relating to renewable production are statistically insignificant, even the forecast errors. A notable exception is that of hydro generation, which is highly significant and adds a large amount of explainability in the Norwegian zone in particular. Overall, we achieve very high $R^2$ statistics in all zones, particularly Germany and Spain which both have $R^2>80\%$. By investigating the sign of the regression coefficients, we see that high forecasts on renewable energy sources are associated with higher price volatility, but that high realized renewable production contributes with different effects to the price volatility. For example, high realized wind production generally has a negative effect on the level of volatility. The remaining renewable sources such as hydro, biomass, and geothermal energy have varying effects and significance across the zones, which reflect their relative importance in the specific markets. Conventional fuel sources such as oil and gas contribute with opposite signs; oil generation is generally associated with lower volatility  while gas generation is generally associated with higher volatility.

We conclude that, overall, it seems electricity price volatility is highly and positively correlated with the spot price level itself and that the realized production mix is important in explaining volatility dynamics, but the relative importance and contributions of the production mix components are highly individual to the market under consideration.

\subsection{Investigating the propagation share}\label{sec:propagation}
In this Section, we briefly turn to the propagation share across hours, which exhibits a decreasing pattern. This does not inherently mean that early hours are less volatile in absolute terms and we therefore depict the sum $(\widehat{\Sigma}^w_{\mathrm{pl}})^{(h,h)}+\frac{1}{N}\sum_{n}\widehat{B}_n^{(h)}$ on Figure~\ref{fig:propagation_DE_LU}. This shows both the relative and absolute share of volatility versus propagation/mean-reversion and clearly indicates that the higher propagation share of early hours coincides with lower volatility levels in the German generation zone. Similar plots for Norway and Spain are provided on Figures~\ref{fig:propagation_NO_2} and \ref{fig:propagation_ES} in Appendix~\ref{app:plots}. As hinted earlier, a possible explanation for the decreasing propagation share and predictability of volatility over the hours of the day could be the temporal distance from the time of price settlement to the time of delivery, which inherently induces more uncertainty.
\begin{figure}
    \centering
    \begin{subfigure}[b]{0.4\textwidth}
        \centering
        \includegraphics[width=\textwidth]{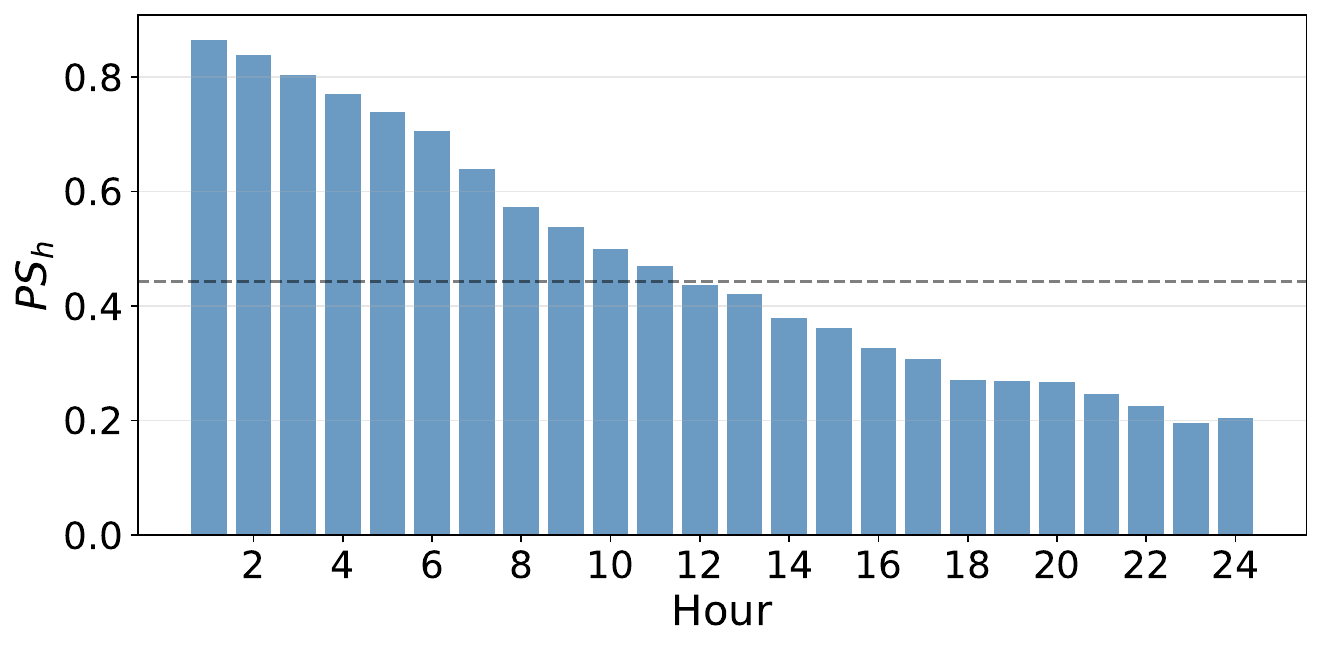}
        \subcaption{Propagation share per hour.}
    \end{subfigure}
    \hspace{0.04\linewidth}
    \begin{subfigure}[b]{0.4\textwidth}
        \centering
        \includegraphics[width=\textwidth]{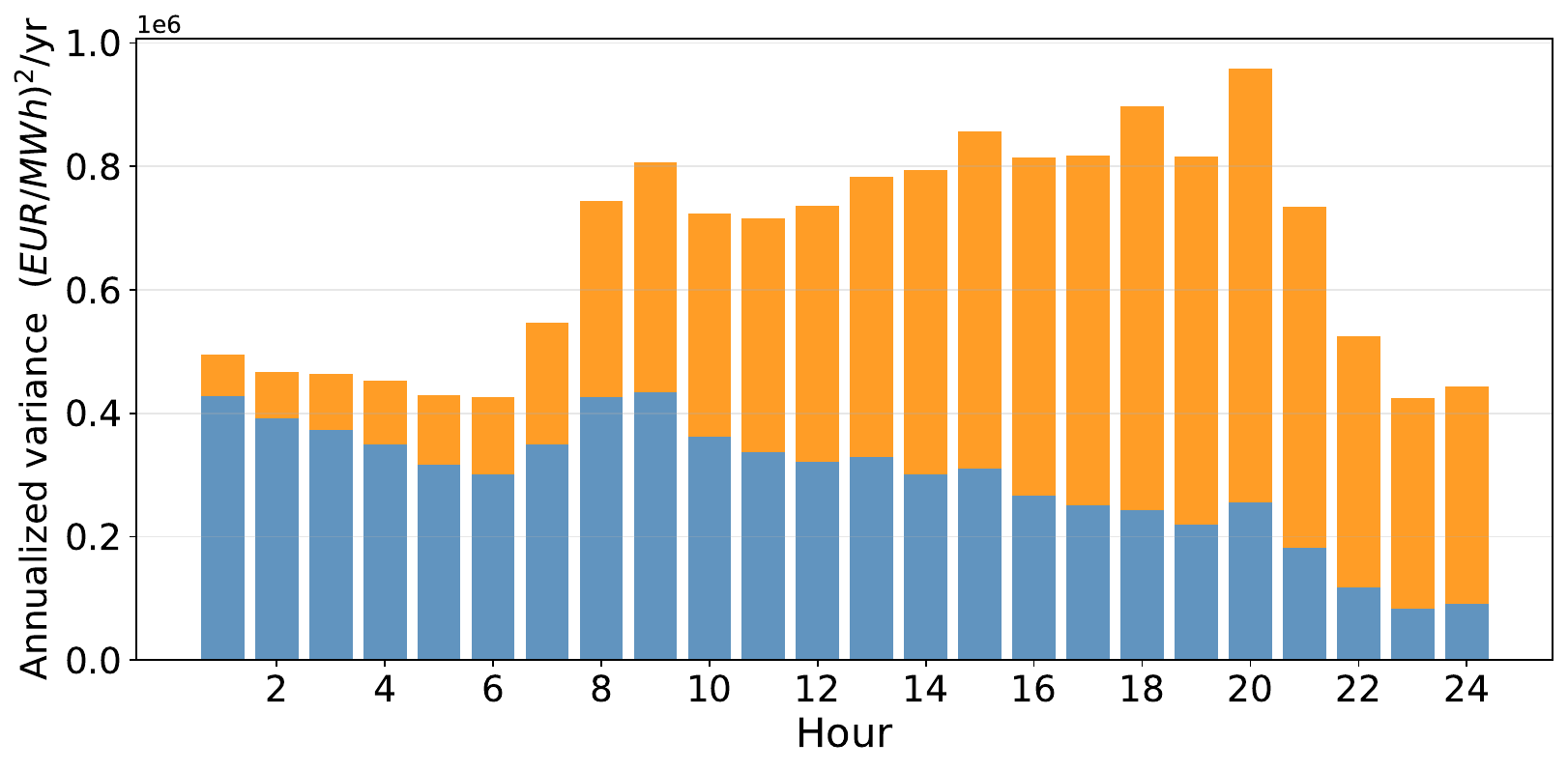}
        \subcaption{Propagation versus innovation.}
    \end{subfigure}
    \caption{Propagation share per hour in the German generation zone, with the total average overlayed as a black line (left). Annualized total RCV in the German generation zone with the blue bottom bar indicating the propagation effect and the orange top bar indicating the innovation effect (right).}
    \label{fig:propagation_DE_LU}
\end{figure}
To assess whether this is the case, we run the following regression
\[
PS_h = \alpha +\beta_0h + \beta_1e_h^{\mathrm{wind}}+\beta_2e_h^{\mathrm{solar}},
\]
where $e_h^{\mathrm{wind}}$ is the mean squared error of the standardized (zero mean, unit variance) wind production forecast versus actual wind production in hour $h$ across the whole dataset, and similar for solar. It turns out that the wind forecast error is almost perfectly collinear with the hour, and as such this adds little value in the regression. The solar forecast errors are concentrated around the peak solar generation hours (mid-day) and as such adds value beyond just the hourly ordinals. Since the Norwegian zone has no solar generation, we only consider Germany and Spain with the regression results tabulated in Table~\ref{tab:PS_uncertainty_regression}. The hour ordinal alone explains 96\% of the total variation while the marginal contribution of solar production after controlling for hour is robust and identifies a separable role for forecast-error uncertainty. In both the German and Spanish zones, a one standard deviation increase in the per-hour solar MSE is associated to an decrease of approximately 4 percentage points in the propagation share. While the simple regression is not conclusive evidence, we conjecture that forecast uncertainty and time-to-delivery effects are important in explaining the increasing volatility and unpredictability of prices over the day.

\begin{table}[t]
\footnotesize
\centering
\caption{OLS regression of the per-hour propagation share $PS_h$ on the hour ordinal and per-hour mean squared forecast errors of wind and solar
generation. Predictors are standardized to zero mean and unit
variance, so the coefficients are directly comparable across regressors and
zones. Significance: $^{***}$\,$p<0.01$, $^{**}$\,$p<0.05$, $^{*}$\,$p<0.10$.}
\label{tab:PS_uncertainty_regression}
\setlength{\tabcolsep}{6pt}
\begin{tabular}{l cc cc}
\toprule
 & \multicolumn{2}{c}{Germany (DE-LU)} & \multicolumn{2}{c}{Spain (ES)} \\
\cmidrule(lr){2-3} \cmidrule(lr){4-5}
 & Coef. & $t$-stat & Coef. & $t$-stat \\
\midrule
$\alpha$  (constant)             &  $0.473^{***}$  &  $87.70$  &  $0.389^{***}$  & $105.40$ \\
$\beta_0$ (hour)                  & $-0.247^{***}$  & $-13.56$  & $-0.153^{***}$  & $-35.48$ \\
$\beta_1$ (wind MSE, $e_h^{\mathrm{wind}}$)
                                  & $+0.042^{**}$   &  $+2.23$  & $+0.010^{*}$    &  $+1.98$ \\
$\beta_2$ (solar MSE, $e_h^{\mathrm{solar}}$)
                                  & $-0.039^{***}$  &  $-6.08$  & $-0.036^{***}$  &  $-7.39$ \\
\midrule
$R^2$         & \multicolumn{2}{c}{$0.987$} & \multicolumn{2}{c}{$0.990$} \\
adj.\ $R^2$   & \multicolumn{2}{c}{$0.986$} & \multicolumn{2}{c}{$0.989$} \\
$N$           & \multicolumn{2}{c}{$24$}    & \multicolumn{2}{c}{$24$}    \\
\bottomrule
\end{tabular}
\normalsize
\end{table}

\section{The inverse leverage effect}\label{sec:inverse_leverage}
The \emph{leverage effect} typically refers to the negative correlation between an asset return and its volatility. In particular, this means that there is an asymmetric association between increases and decreases in volatility. It has been suggested in the literature that electricity prices in some markets seem to exhibit an inverse leverage effect, which is to say that price shocks are positively correlated with shocks to volatility \citep[see, e.g.,][]{InverseLeverage,JanczuraWeron2010}. To the best of our knowledge, most findings of inverse leverage seem to be from estimation or calibration of a specific model and has not been studied via rigorous and non-parametric volatility estimates, and we therefore consider this effect in some detail. We distinguish between two types of leverage effect, which we refer to as weak and strong form, respectively:
\begin{itemize}
    \item \emph{Weak form}: Changes in integrated variance responds asymmetrically to price changes.
    \item \emph{Strong form}: The asymmetry is not explained by price-level scaling or mean-reversion in volatility.
\end{itemize}
The strong form of leverage is essentially a conditional version of the weak form. Indeed, given the findings in Section~\ref{sec:rcv_of_prices}, we expect high absolute price levels to lead to large deviations around these levels, which we refer to as price-level scaling of the volatility. Similarly, when the price level is large, we expect that it is more likely to shrink fast towards the long-run mean. The strong form leverage imposes that we observe a correlation between volatility and price changes even after accounting for these effects. 

\subsection{Leverage effects in the level factor}\label{sec:leverage_level_factor}
As in Subsection~\ref{sec:level_factor}, we consider the score of the level factor as a proxy for the overall level of volatility. Denoting by $\overline{P}_t=\frac{1}{7d}\sum_{i=1}^d\sum_{j=1}^{7}P_{t+1-j}^{(i)}$ the rolling weekly average and by $\Delta\overline{P}_t=\overline{P}_t-\overline{P}_{t-7}$ the non-overlapping weekly difference, we then construct estimates of $\Delta\log(\widehat{S}_1(t))=\mathbb{E}\left[ \log(S_1(t))-\log(S_1(t-7))\mid \Delta \overline{P}_{t-7} \right]$ by partitioning the range of lagged price changes into 20 equal-frequency bins $B_b$ for $b=1,\ldots ,20$ and estimating the conditional mean within each bin, $\mu_b$ via the regression
\[
\Delta\log (S_1(t)) = \sum_{b=1}^{20} \mu_b\mathbf{1}_{\lbrace \Delta\overline{P}_{(t-7)} \in \mathcal{B}_b\rbrace} + \varepsilon(t).
\]
Hence, $\Delta\log(\widehat{S}_1(t))$ estimates the expected change in the log-level of RCV one week ahead, conditional on the price change observed the previous week. Standard errors are computed using the Newey-West (HAC) estimator with 14 lags to account for serial dependence induced by the overlapping rolling window. This is similar to the ``news impact curve'' construction in, e.g., \citet{EngleNg1993}. We construct four progressive specifications of news impact curves to isolate the leverage effect from confounding dynamics. In each case, both $\Delta \log(\widehat{S}_1(t))$ and $\Delta \overline{P}_{t-7}$ are residualized on a common subset of controls before estimation as follows (residualizing the price level ensures that the bins carry no information about the controls)
\begin{enumerate}
    \item No controls. 
    \item Price/level-scaling control via $\sinh^{-1}(\overline{P}_{t-7})$.
    \item Mean-reversion control via $\log (S_1(t-7))$.
    \item Joint price/level-scaling and mean-reversion control.
\end{enumerate}
The results of this construction are depicted for the German market in Figure~\ref{fig:leverage_nic_DE}, where we also include a regression line of the form
\begin{equation}\label{eq:nic_regression_beta}
\Delta\log (\widehat{S}_1(t)) = \beta_0 + \beta^+ \max\lbrace \Delta\overline{P}_{t-7}, 0\rbrace + \beta^- \min\lbrace \Delta\overline{P}_{t-7}, 0\rbrace + \varepsilon(t).
\end{equation}
The figure reveals that, unconditionally, there seems to be some asymmetry, but that many of the estimated $\widehat{\mu}_b$ for $\overline{P}_{t-7}>0$ are not statistically different from 0. A standard Wald test rejects $\beta_+=\beta_-$ at the 5\% level, suggesting some asymmetry. The price/level-scaling control dampens some of the effect and seemingly induces asymmetry that favours co-movements between price and volatility, while the mean-reversion control induces a very clear asymmetry that favours an inverse leverage effect. The joint control, however, essentially flattens the news impact curve to the point where there is no asymmetry, thus indicating that, upon controlling for the current state, most of the price/volatility co-movement disappears. The finding of no asymmetry is supported by the Wald test statistic, which is now unable to reject $\beta_+=\beta_-$. The corresponding news impact curve plots are depicted for Norway and Spain on Figures~\ref{fig:leverage_nic_NO2} and \ref{fig:leverage_nic_ES} in Appendix~\ref{app:plots}, which reveal different patterns, but with the same ultimate conclusion that, after the joint control, any asymmetry is eliminated. These findings are very interesting, since it seems that asymmetric relationships between volatility and price changes in electricity markets can largely be explained by the price and volatility levels themselves at the aggregate level. Hence, when modeling price averages and their volatility, it seems that state-dependence in price and volatility is more relevant than correlation of innovations. 

\begin{remark}
    One may wonder if negative prices exhibit a different kind of leverage effect than positive prices. Although negative prices occur somewhat frequently at the hourly level in some generation zones, the average daily price is never negative in any of the zones throughout the dataset. Hence we can get no such information, since there are no negative values to study. 
\end{remark}

\begin{figure}[t]
    \centering
    \begin{subfigure}[t]{0.4\textwidth}
        \includegraphics[width=\linewidth]{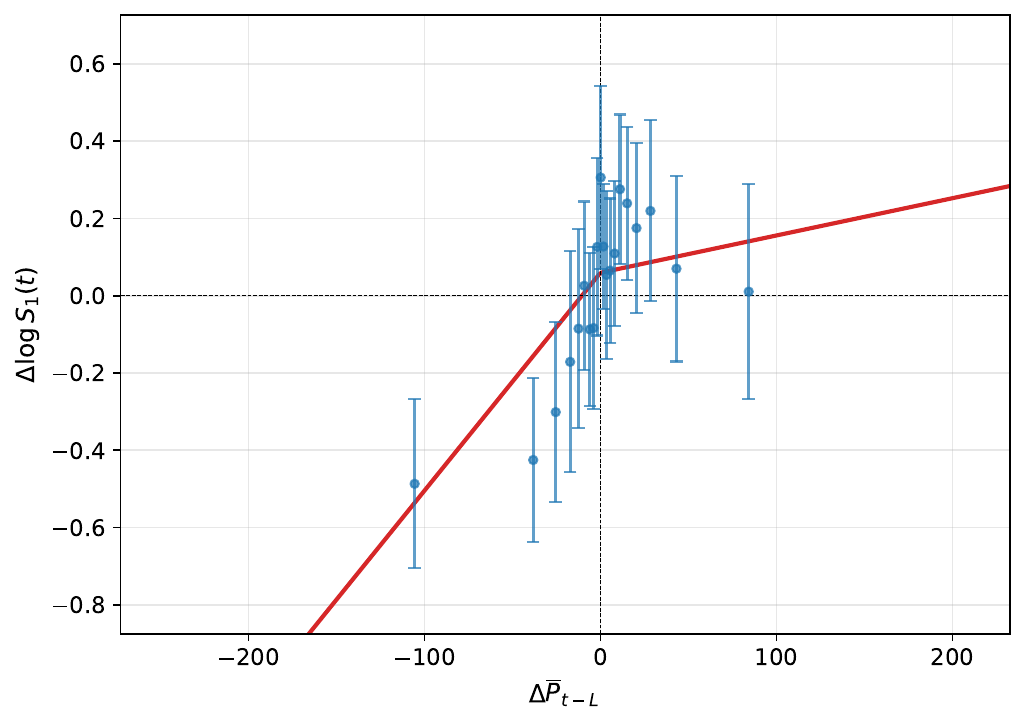}
        \caption{Unconditional regression, Wald test p-value for $\beta_1=\beta_2$ of $p=0.0431$.}
    \end{subfigure}
    \hspace{0.04\linewidth}
    \begin{subfigure}[t]{0.4\textwidth}
        \includegraphics[width=\linewidth]{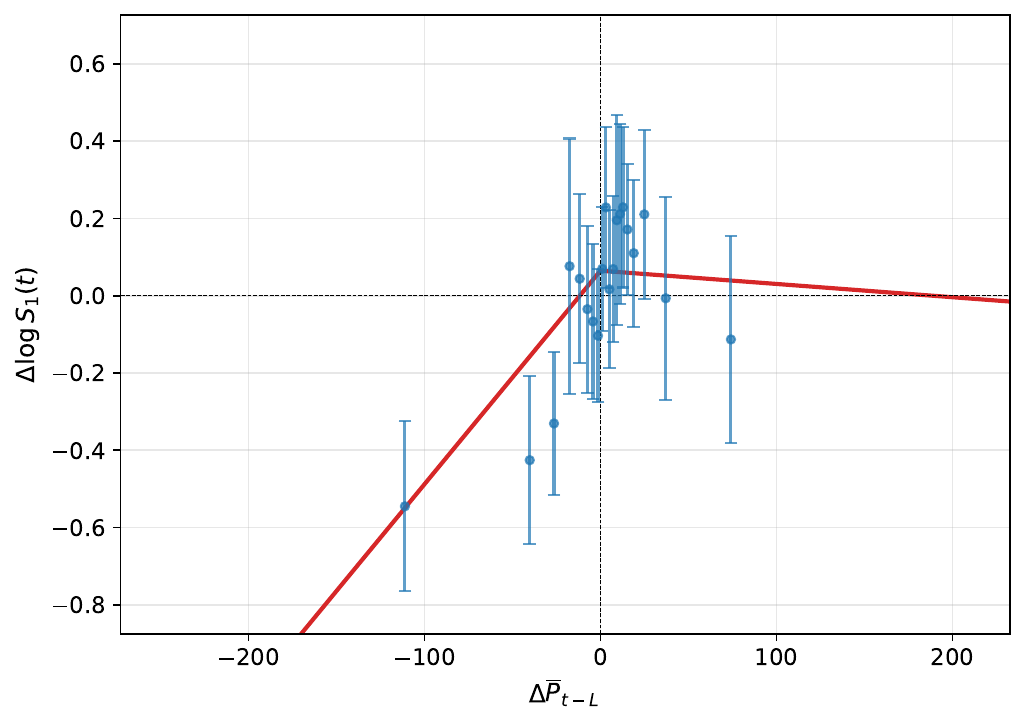}
        \caption{Price/level-scaling control regression, Wald test p-value for $\beta_1=\beta_2$ of $p=0.0128$.}
    \end{subfigure}

    \vspace{0.3cm}

    \begin{subfigure}[t]{0.4\textwidth}
        \includegraphics[width=\linewidth]{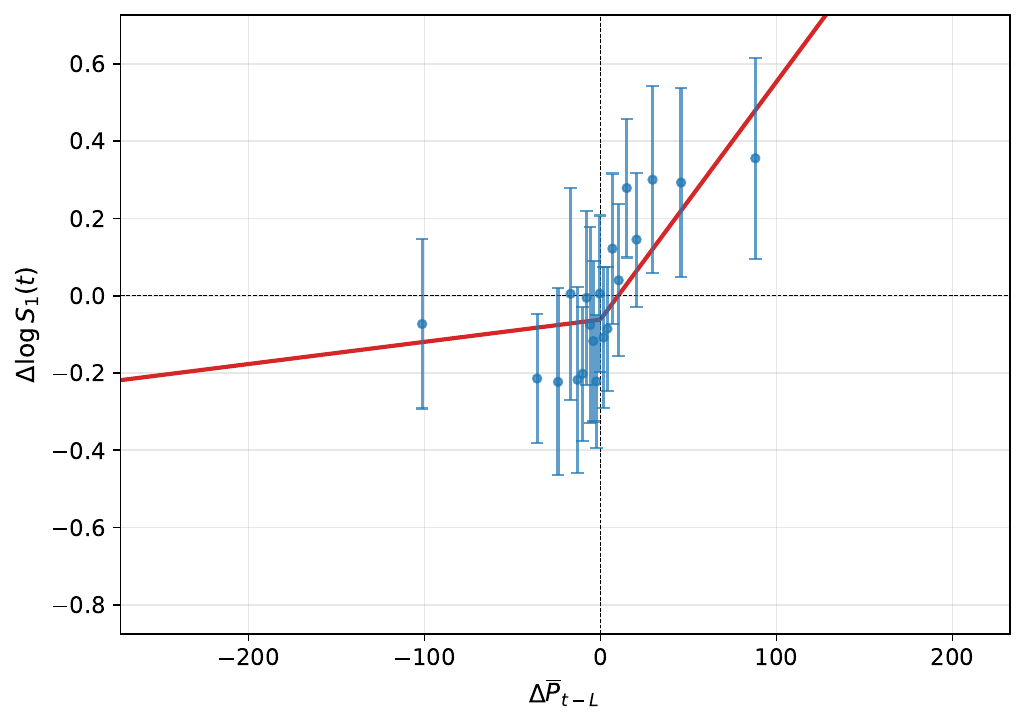}
        \caption{Mean-reversion control regression, Wald test p-value for $\beta_1=\beta_2$ of $p=0.0062$.}
    \end{subfigure}
    \hspace{0.04\linewidth}
    \begin{subfigure}[t]{0.4\textwidth}
        \includegraphics[width=\linewidth]{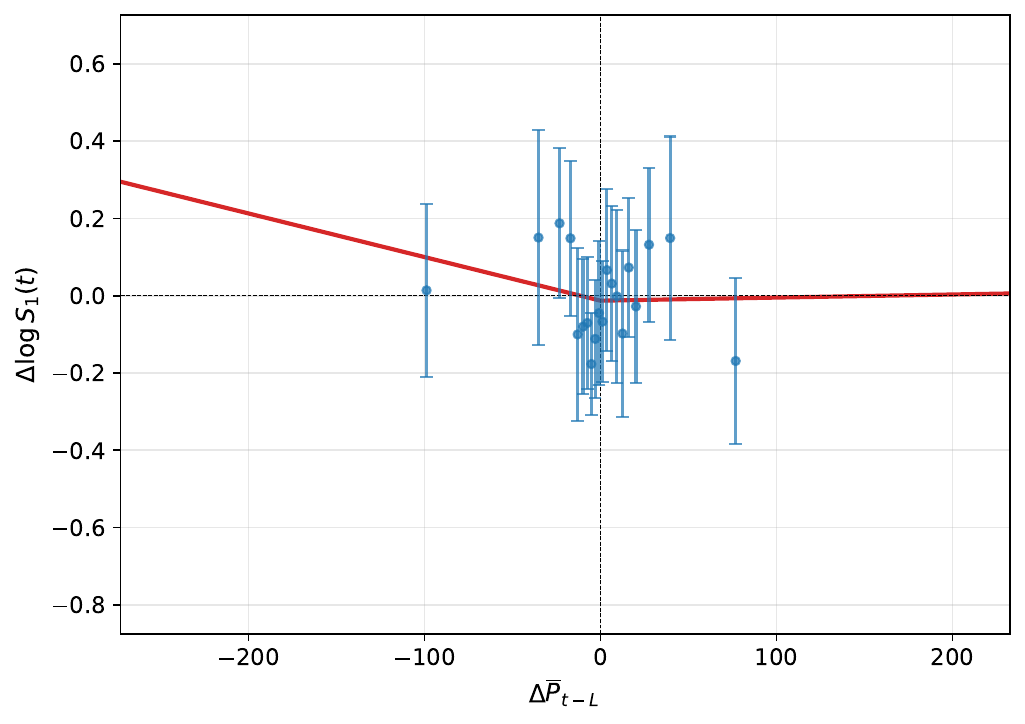}
        \caption{Joint control regression, Wald test p-value for $\beta_1=\beta_2$ of $p=0.5898$.}
    \end{subfigure}
    \caption{Estimated impact of price changes on $\Delta \log(S_1(t))$ in the German generation zone. Blue dots represent the $\mu_b$ coefficients with $95\%$ confidence bands constructed with HAC standard errors. The red line is the regression line.}\label{fig:leverage_nic_DE}
\end{figure}

\subsection{Functional leverage curves}
So far, we have only viewed the leverage effect through the lens of the factor scores. While informative, this approach is unable to detect more subtle effects, such as leverage effects in individual hours, which may be washed away in the aggregation. In this Section, we therefore take a closer look at the finer structure of leverage on the level of individual hours. To do so, we let $IV_{j}^{(h)}=\widehat{\Sigma}_{\mathrm{pl}}^w(h,h)$ and run the regression
\[ 
IV^{(h)}_j = \beta_0^{(h)} + \beta_+^{(h)} \max\lbrace\Delta\overline{P}_{j-w},0\rbrace + \beta_{-}^{(h)} \min \lbrace \Delta\overline{P}_{j-w},0\rbrace + \varepsilon_h(j),
\]
where $\overline{P}$ again denotes the weekly average price. From this we construct a \emph{functional leverage curve} $h \mapsto \left(\beta_+^{(h)}, \beta_-^{(h)}\right)$ showing the diurnal pattern by which volatility across the day responds to price shocks. In the absence of leverage, we expect these curves to be overlaying, and if there is no difference across the hours of the day, we expect the curves to be flat. Keeping in line with the previous section, we conduct an unconditional weak form regression and a conditional strong form regression, where the hourly price level and mean-reversion effects have been partialled out. We depict the functional leverage curves on Figure~\ref{fig:functional_leverage_curve}, where a similar pattern as observed in the aggregated case emerges. Only few hours exhibit weak form leverage, and any asymmetry disappears after controlling for price/level-scaling and volatility mean-reversion across all zones. Furthermore, most estimates of $\beta_+^{(h)}$ and $\beta_-^{(h)}$ are not statistically different from zero. This supports the findings of Section~\ref{sec:leverage_level_factor} and suggests that leverage effects generally can be explained by other mechanical effects and that there is no meaningful difference across delivery hours.

\begin{figure}[h]
    \centering

    \makebox[0.32\textwidth]{\textbf{Germany}}\hfill
    \makebox[0.32\textwidth]{\textbf{Norway}}\hfill
    \makebox[0.32\textwidth]{\textbf{Spain}}\\[4pt]

    \begin{subfigure}[t]{0.32\textwidth}
        \includegraphics[width=\textwidth]{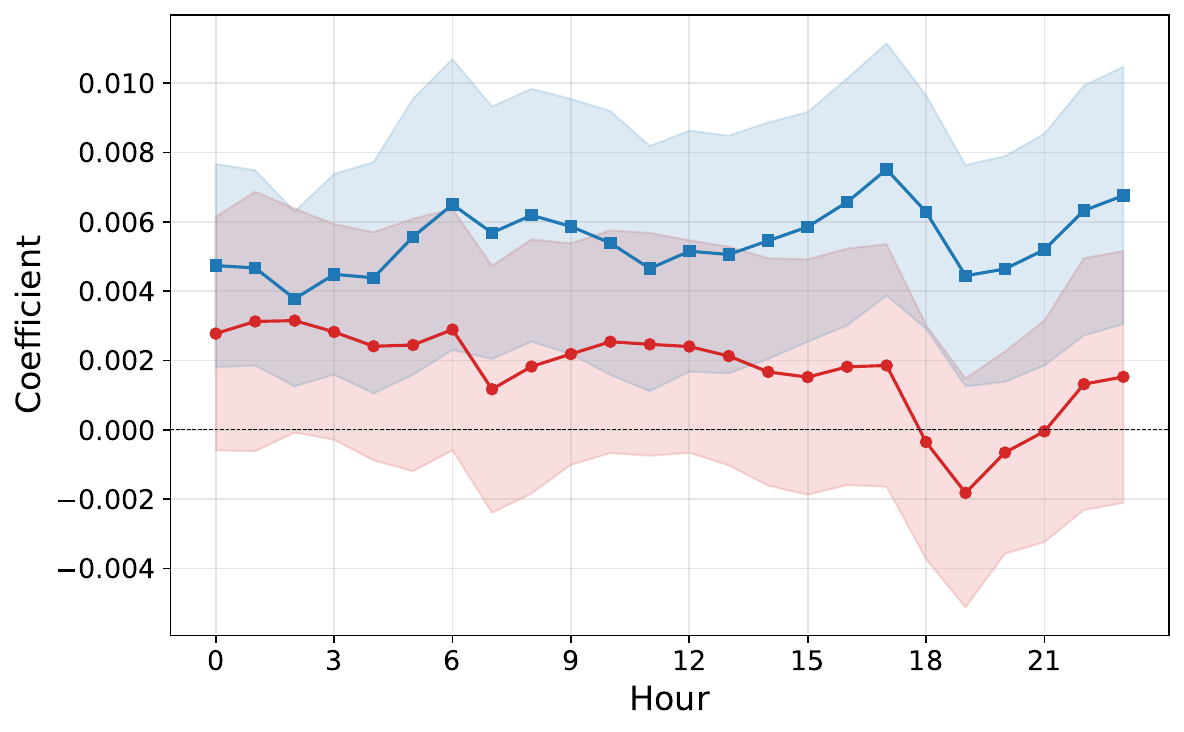}
    \end{subfigure}\hfill
    \begin{subfigure}[t]{0.32\textwidth}
        \includegraphics[width=\textwidth]{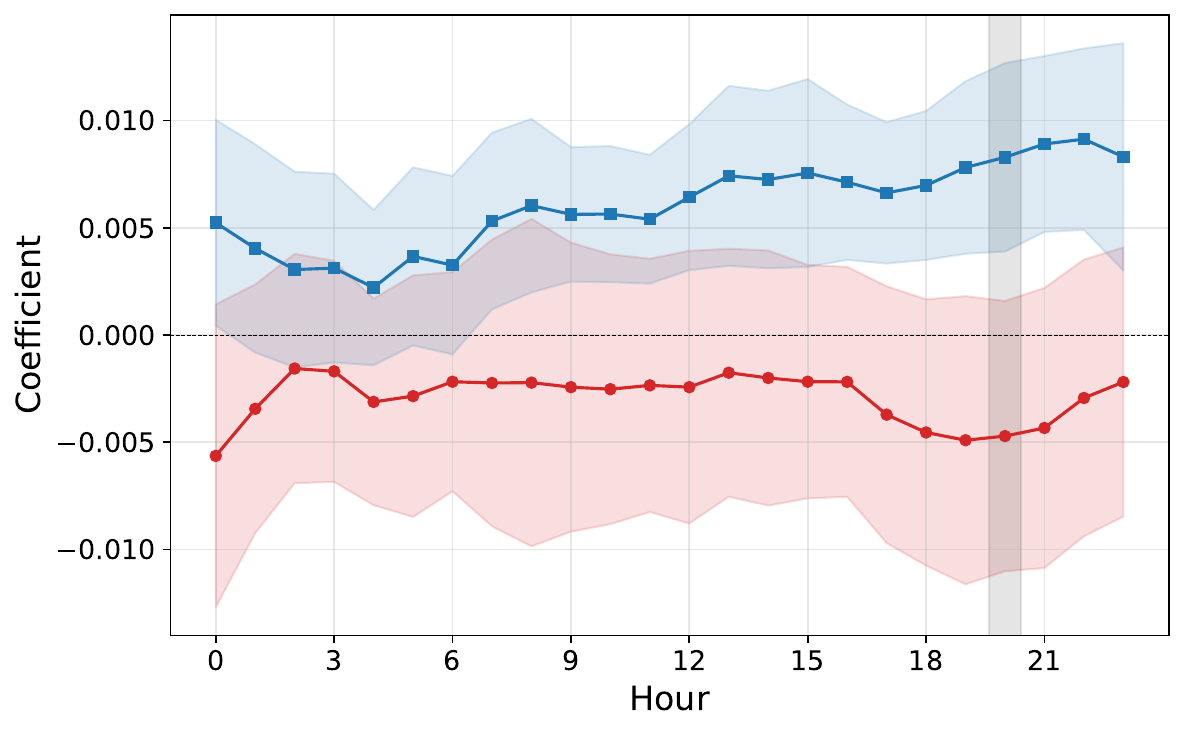}
    \end{subfigure}\hfill
    \begin{subfigure}[t]{0.32\textwidth}
        \includegraphics[width=\textwidth]{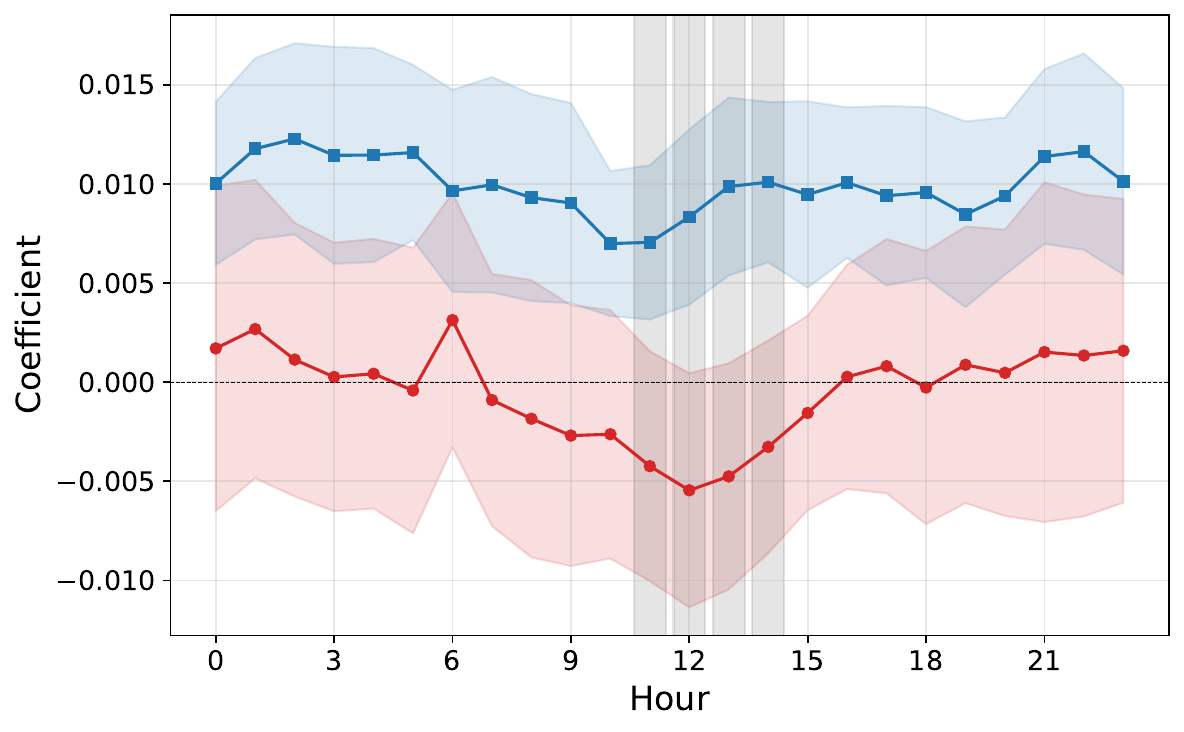}
    \end{subfigure}

    \vspace{6pt}

    \begin{subfigure}[t]{0.32\textwidth}
        \includegraphics[width=\textwidth]{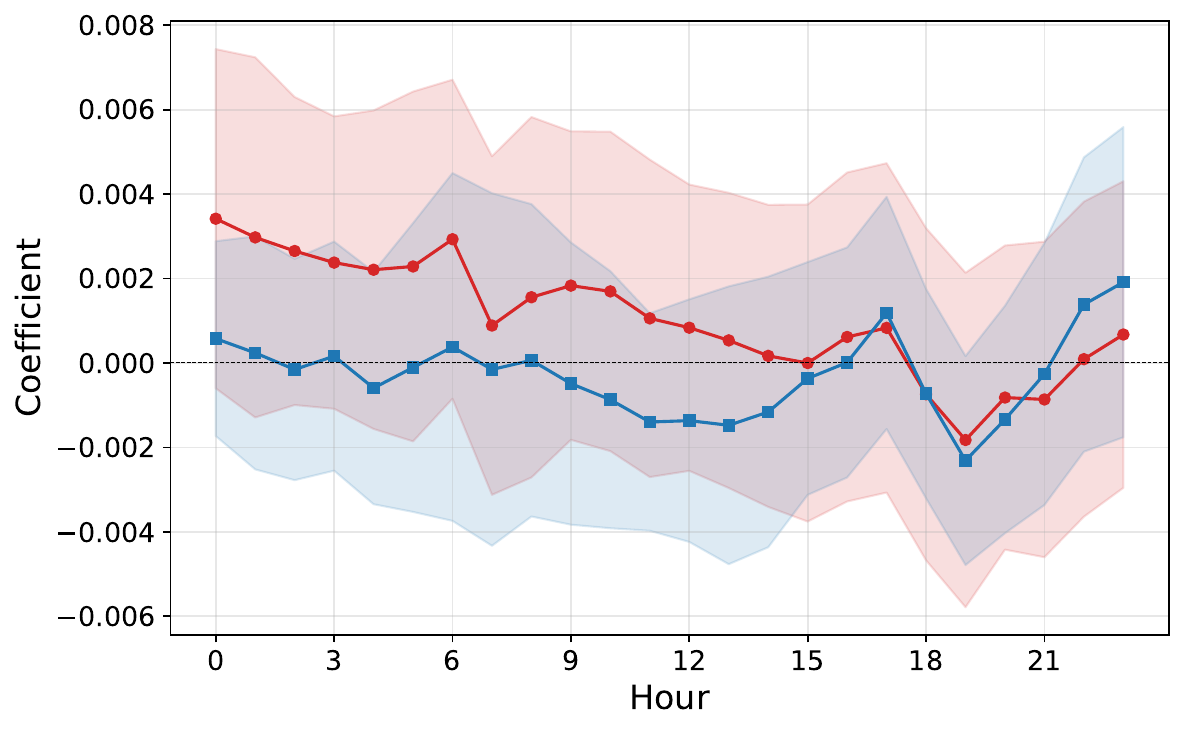}
    \end{subfigure}\hfill
    \begin{subfigure}[t]{0.32\textwidth}
        \includegraphics[width=\textwidth]{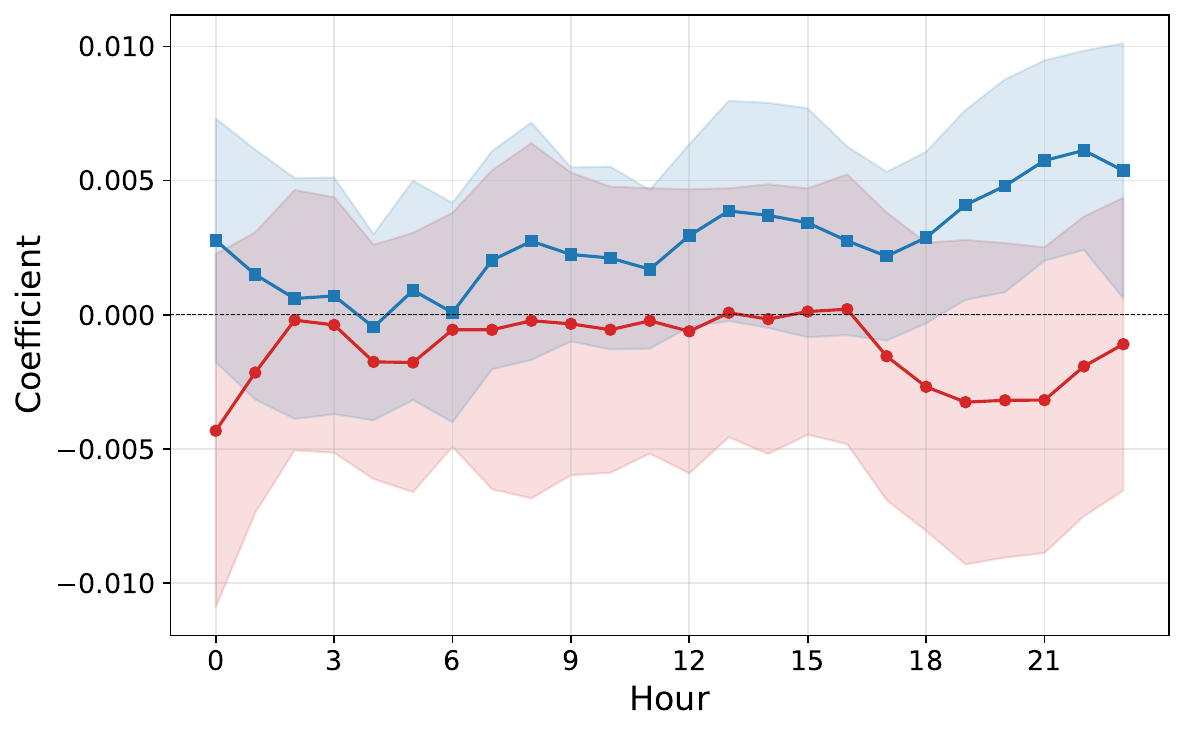}
    \end{subfigure}\hfill
    \begin{subfigure}[t]{0.32\textwidth}
        \includegraphics[width=\textwidth]{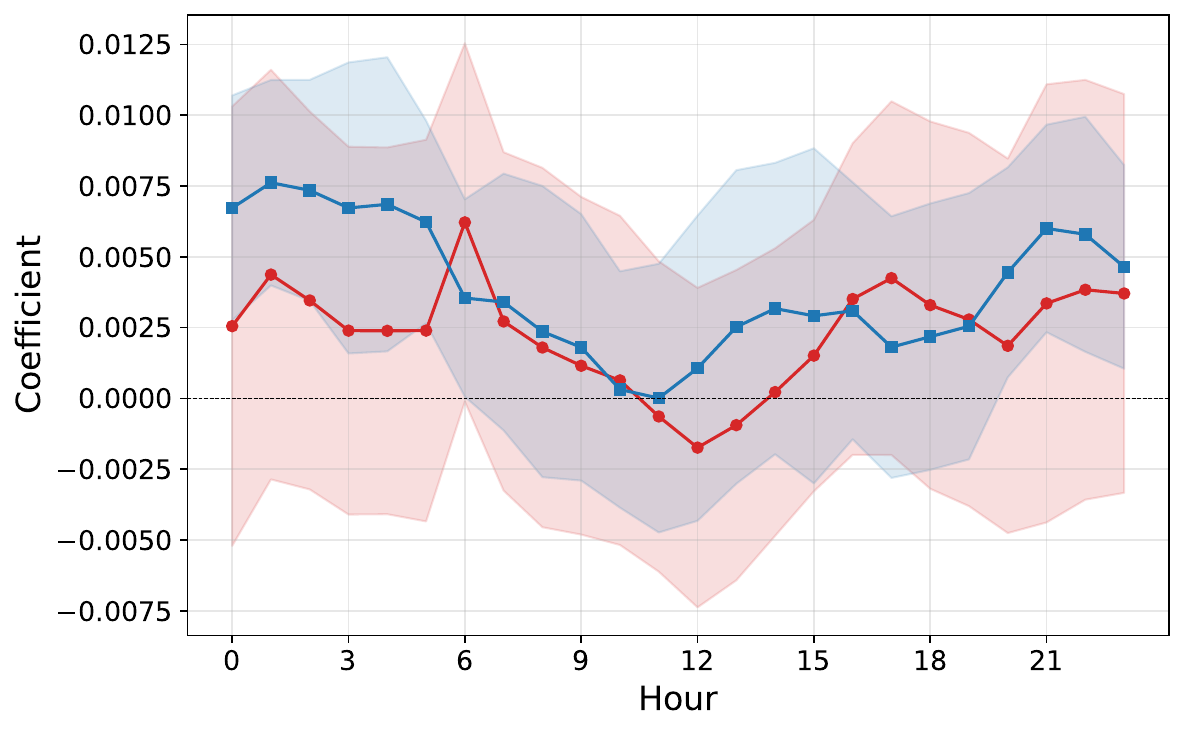}
    \end{subfigure}

    \caption{\textbf{Top:} Unconditional functional leverage curves with 95\% confidence bands superimposed. \textbf{Bottom:} Conditional functional leverage curves with 95\% confidence bands superimposed. The curve corresponding to $\beta_+^{(h)}$ is in red, while the curve corresponding to $\beta_-^{(h)}$ is in blue. Hours where $\beta_+^{(h)}$ is significantly different from $\beta_-^{(h)}$ are marked with a grey bar.}
    \label{fig:functional_leverage_curve}
\end{figure}

\section{Conclusion}\label{sec:conclusion}
This paper tackles the challenging problem of volatility estimation in the panel of electricity spot prices. By interpreting the panel as local averages of a latent function-valued price process, we construct an easy to implement estimator of the integrated conditional covariance operator that admits a simple decomposition into drift, propagation, and innovation terms. The estimator reveals a complex structure in the weekly realized covariation of electricity spot prices throughout several generation zones. Mean-reversion effects play a significant role in the weekly price variation and must be accounted for; in doing so, we find that mean-reversion is seemingly stronger in the early hours of the day. Peak-load and evening periods are associated with higher levels of overall volatility and we have documented that this correlates highly with the price level itself. Additionally, the realized production mix is important in explaining volatility dynamics, but the relative importance of renewables, fossil fuels, etc., varies greatly between generation zones. Consequently, each generation zone requires individual modeling tools and there is no ``one size fits all'' model across markets. We have zoomed in on the so-called inverse leverage effect, which is the notion that shocks to spot prices are positively correlated with shocks to volatility. We find evidence that such inverse leverage effects can be spurious and naturally arise from the strong state-dependence between volatility and price levels as well as mean-reversion in volatility itself. 

Our findings open up several venues of future research into the volatility of electricity markets. In particular, it will be interesting to look deeper into the interplay of volatility and renewable generation, seasonal and diurnal patterns, and forecasting of the RCV estimator.

\section{Disclosure statement}
The authors declare no conflicts of interest.

\section{Funding}
T.~K.~Kloster gratefully acknowledges financial support from the Center of Research in Energy: Economics and Markets and The Danish Council of Independent Research under DFF grant 10.46540/5247-00005B. F.~E.~Benth gratefully acknowledges financial support from the SURE-AI Centre grant 357482, Research Council of Norway. 

\printbibliography

\appendix

\section{Proofs}\label{app:proofs}
\paragraph{Proof of Proposition~\ref{prop:increment_decomp}}
\begin{proof}
Let $M_n=\mathbb E\left[\int_{t_{n-1}}^{t_{n}} \mathcal{S}(t_n-s)\Sigma_s \mathcal{S}(t_n-s)^\ast ds\mid\mathcal{F}_{t_{n-1}}\right]$ and note that $\Delta_n\widetilde X=A(X_{t_n}-X_{t_{n-1}})=A(B_n+D_n+M_n)$. It therefore holds that
\begin{equation}\label{eq:DeltaX_decomp}
(\Delta_n\widetilde X)(\Delta_n\widetilde X)^\top
=
A(B_n{+}D_n)(B_n{+}D_n)^\ast A^\ast
+A M_n M_n^\ast A^\ast
+\text{cross terms}.
\end{equation}
The random variables $B_n$ and $D_n$ are $\mathcal F_{t_{n-1}}$-measurable
(or predictable over $[t_{n-1},t_n]$), whereas $M_n$ is a stochastic integral over $(t_{n-1},t_n]$ and hence
satisfies $\mathbb E[M_n\mid \mathcal F_{t_{n-1}}]=0$. Therefore the cross terms vanish in conditional expectation:
\[
\mathbb E\!\left[(B_n{+}D_n)M_n^\ast\mid \mathcal F_{t_{n-1}}\right]=0,
\qquad
\mathbb E\!\left[M_n(B_n{+}D_n)^\ast\mid \mathcal F_{t_{n-1}}\right]=0.
\]
It remains to identify the conditional second moment of $M_n$. By the It\^o isometry for cylindrical noise \citep[see, e.g., ]{DaPratoZabczyk2014}, it follows that
\[
\mathbb E\left[M_n M_n^\ast\mid \mathcal F_{t_{n-1}}\right]
=
\mathbb E\left[\int_{t_{n-1}}^{t_n}\mathcal S(t_n-s)\Sigma_s\mathcal S(t_n-s)^\ast\,ds\ \Big|\ \mathcal F_{t_{n-1}}\right].
\]
Premultiplying and postmultiplying by $A$ and $A^\ast$ yields \eqref{eq:final-matrix}. The coordinate-wise identity follows by taking $(i,j)$ entries and using
$\langle x,g_i\rangle = (A x)_i$ together with $\langle \mathcal S(t)u,g\rangle=\langle u,\mathcal S(t)^\ast g\rangle$.
\end{proof}

\paragraph{Proof of Proposition~\ref{prop:long_span_limit}}
\begin{proof}
For $n\ge 1$ set
\[
Z_n:=\frac{1}{\delta}(\Delta_n\widetilde X)(\Delta_n\widetilde X)^\top\in\mathbb{R}^{d\times d}.
\]
By construction of the window estimator (with disjoint windows of length $w$) it holds that
\[
\widehat{\Sigma}_j^w=\frac{1}{\delta w}\sum_{\ell=1}^w
(\Delta_{(j-1)w+\ell}\widetilde X)(\Delta_{(j-1)w+\ell}\widetilde X)^\top
=\frac{1}{w}\sum_{\ell=1}^w Z_{(j-1)w+\ell}.
\]
Hence,
\[
\begin{aligned}
\frac{1}{N_w}\sum_{j=1}^{N_w}\widehat{\Sigma}_j^w
&=\frac{1}{N_w}\sum_{j=1}^{N_w}\frac{1}{w}\sum_{\ell=1}^w Z_{(j-1)w+\ell}
=\frac{1}{N_w w}\sum_{n=1}^{N_w w} Z_n.
\end{aligned}
\]
Since the discrete-time sequence $(\widetilde X_{t_n})_{n\ge 0}$ is strictly stationary, so is
$(\Delta_n\widetilde X)_{n\ge 1}$ and $(Z_n)_{n\ge 1}$. Moreover, since $Z_n$ is nonnegative definite,
each entry $(Z_n)_{ij}$ is integrable whenever $\mathbb{E}\|\Delta_1\widetilde X\|^2<\infty$, because
\[
|(Z_n)_{ij}|=\frac{1}{\delta}\,|\Delta_n\widetilde X^{(i)}\,\Delta_n\widetilde X^{(j)}|
\le \frac{1}{2\delta}\Big((\Delta_n\widetilde X^{(i)})^2+(\Delta_n\widetilde X^{(j)})^2\Big),
\]
and stationarity gives $\mathbb{E}(\Delta_n\widetilde X^{(i)})^2=\mathbb{E}(\Delta_1\widetilde X^{(i)})^2<\infty$. Applying Birkhoff's ergodic theorem (per Assumption~\ref{ass:main_stat_assumption}) entrywise to the stationary and integrable process $(Z_n)_{n\ge 1}$ yields, for each $i,j$,
\[
\frac{1}{N_w w}\sum_{n=1}^{N_w w} (Z_n)_{ij}
\xrightarrow[N_w\to\infty]{a.s.} \mathbb{E}[(Z_1)_{ij}].
\]
Since $d$ is fixed, convergence holds for the full matrix, i.e.
\[
\frac{1}{N_w w}\sum_{n=1}^{N_w w} Z_n
\xrightarrow[N_w\to\infty]{a.s.} \mathbb{E}[Z_1]
=\frac{1}{\delta}\mathbb{E}\!\left[(\Delta_1\widetilde X)(\Delta_1\widetilde X)^\top\right].
\]
Combining with the identity
$\frac{1}{N_w}\sum_{j=1}^{N_w}\widehat{\Sigma}_j^w=\frac{1}{N_w w}\sum_{n=1}^{N_w w} Z_n$ shows that
\[
\frac{1}{N_w}\sum_{j=1}^{N_w}\widehat{\Sigma}_j^w\xrightarrow[N_w\to\infty]{a.s.}\frac{1}{\delta}\mathbb{E}\left[ (\Delta_1\widetilde{X})(\Delta_1\widetilde{X})^\top \right].
\]
Applying Proposition~\ref{prop:increment_decomp} proves \eqref{eq:long_span_limit}.
\end{proof}

\paragraph{Proof of Proposition~\ref{prop:drift_bound}}
\begin{proof}
By Proposition~\ref{prop:long_span_limit} (applied entrywise), it suffices to analyze the deterministic limit on the right-hand side of \eqref{eq:long_span_limit}. By expanding the second moment via \eqref{eq:DeltaX_decomp} and using orthogonality of the martingale term to predictable terms yields
\[
\begin{aligned}
\mathbb E\left[(\Delta_1\widetilde X)(\Delta_1\widetilde X)^\top\right]
&=
\mathbb E\!\left[A(B_1+D_1)(B_1+D_1)^\ast A^\ast\right]
\\
&\quad+
A\mathbb E\!\left[\int_{t_0}^{t_0+\delta}
\mathcal S(t_0+\delta-s)\Sigma_s \mathcal S(t_0+\delta-s)^\ast\,ds\right]A^\ast.
\end{aligned}
\]
Subtracting the (propagation) correction term and combining with the convergence of \eqref{eq:long_span_limit} gives
\[
\frac{\delta}{N_w}\sum_{j=1}^{N_w}\widehat{\Sigma}_j^w
- A(\mathcal S(\delta)-I)A^\ast
\xrightarrow[N_w\to\infty]{a.s.}
A\mathbb E\!\left[\int_{t_0}^{t_0+\delta}
\mathcal S(t_0+\delta-s)\Sigma_s \mathcal S(t_0+\delta-s)^\ast\,ds\right]A^\ast
+ R_\delta,
\]
where
\[
R_\delta
:=
\mathbb E\!\left[A(B_1+D_1)(B_1+D_1)^\ast A^\ast\right]
- A(\mathcal S(\delta)-I)A^\ast.
\]
It remains to show that $R_\delta=O(\delta^2)$ under the stated drift moment condition. First, by boundedness of $A$ and $\mathcal S$ on $[0,\delta]$, and by the Cauchy--Schwarz inequality, it holds that
\[
\begin{aligned}
\mathbb{E}\left[ \lVert A D_1\rVert^2\right]
&\leq \lVert A\rVert^2\mathbb{E}\left[\left\lVert\int_{t_0}^{t_0+\delta}\mathcal S(t_0+\delta-s)\mu_sds\right\rVert_{L^2(\mathbb{S}^1)}^2\right] \\
&\leq \lVert A\rVert^2\delta\int_{t_0}^{t_0+\delta}\mathbb{E}\left[\lVert\mathcal S(t_0+\delta-s)\mu_s\rVert_{L^2(\mathbb{S}^1)}^2\right] ds \\
&\leq \lVert A\rVert^2\delta\Big(\sup_{u\in[0,\delta]}\lVert\mathcal S(u)\rVert_{\mathrm{op}}^2\Big)\int_{t_0}^{t_0+\delta}\mathbb{E}\left[\lVert\mu_s\rVert_{L^2(\mathbb{S}^1)}^2\right]ds.
\end{aligned}
\]
The operator norm $\sup_{u\in[0,\delta]}\lVert\mathcal S(u)\rVert_{\mathrm{op}}^2$ can be bounded due to the Hille-Yosida theorem, and we therefore conclude that there exists a finite constant $C>0$ depending only on $\lVert A\rVert$, $\sup_{u\in[0,\delta]}\lVert\mathcal S(u)\rVert_{\mathrm{op}}$, and the constant $C_\mu$ which bounds $\sup_{s\geq 0}\mathbb{E}\left[ \lVert \mu_s\rVert^2_{L^2(\mathbb{S}^1)} \right]$, such that
\[
\mathbb{E}\left[ \lVert A D_1\rVert^2\right]\leq C\delta^2.
\]
Consequently,
\[
\mathbb{E}\left[A D_1D_1^\ast A^\ast\right]=O(\delta^2).
\]
Moreover, expanding $(B_1+D_1)(B_1+D_1)^\ast$ gives
\[
(B_1+D_1)(B_1+D_1)^\ast = B_1B_1^\ast + B_1D_1^\ast + D_1B_1^\ast + D_1D_1^\ast.
\]
Thus $R_\delta$ can be written as
\[
R_\delta =\Big(\mathbb{E}[AB_1B_1^\ast A^\ast]-A(\mathcal{S}(\delta)-I)A^\ast\Big)
+\mathbb{E}[A(B_1D_1^\ast + D_1B_1^\ast + D_1D_1^\ast)A^\ast].
\]
The last term is $O(\delta^2)$ because $\mathbb{E}\left[\lVert AD_1\rVert^2\right]=O(\delta^2)$ and by the Cauchy--Schwarz inequality, we find that
\[
\big\lVert \mathbb{E}[AB_1D_1^\ast A^\ast]\big\rVert
\leq \left(\mathbb{E}\lVert AB_1\rVert^2\right)^{1/2}\left(\mathbb{E}\lVert AD_1\rVert^2\right)^{1/2}
= O(\delta),
\]
and similarly for the transpose term, while $\mathbb{E}[AD_1D_1^\ast A^\ast]=O(\delta^2)$. Therefore $R_\delta=O(\delta^2)$, which proves the claim.
\end{proof}
\paragraph{Proof of Proposition~\ref{prop:plugin_estimator}}
\begin{proof}
For notational convenience, define the empirical lag-$0$ and lag-$1$ covariation matrices as
\[
\widehat{\Gamma}_N:=\frac{1}{N}\sum_{n=1}^N \widetilde{X}_{n-1}\widetilde{X}_{n-1}^\top, 
\qquad \widehat{C}_N :=\frac{1}{N}\sum_{n=1}^N \widetilde X_n\widetilde X_{n-1}^\top.
\]
Then $\widehat{\mathcal S}_{\delta,N}=\widehat{C}_N\widehat{\Gamma}_N^{-1}$. Since $(\widetilde{X}_n)_{n\in\mathbb{N}}$ is strictly stationary and ergodic with
$\mathbb E[\lVert \widetilde{X}_0\rVert^2]<\infty$, the matrix-valued sequences
$(\widetilde{X}_n\widetilde{X}_{n-1}^\top)_{n\in\mathbb{N}}$ and
$(\widetilde{X}_{n-1}\widetilde{X}_{n-1}^\top)_{n\in\mathbb{N}}$ are stationary,
ergodic, and integrable entrywise. Hence, by Birkhoff's ergodic theorem applied
entrywise it holds that 
\begin{equation}\label{eq:plugin_proof2}
\widehat{C}_N \xrightarrow[N\to\infty]{a.s.} \mathbb E[\widetilde{X}_1\widetilde{X}_0^\top],
\qquad \widehat{\Gamma}_N \xrightarrow[N\to\infty]{a.s.} \Gamma.
\end{equation}
Because $\Gamma$ is positive definite, $\widehat{\Gamma}_N$ is invertible for all
sufficiently large $N$, almost surely, and continuity of matrix inversion therefore yields that
\[
\widehat{\mathcal{S}}_{\delta,N}
=\widehat{C}_N\widehat{\Gamma}_N^{-1}
\xrightarrow[N\to\infty]{a.s.}
\mathbb E[\widetilde{X}_1\widetilde{X}_0^\top]\Gamma^{-1}
=\mathcal S_\delta.
\]
Next, each block estimator $\widehat{\Sigma}_{j,\mathrm{pl}}^{w}$ is positive
semidefinite, since it is a finite sum of outer products,
\[
\widehat{\Sigma}_{j,\mathrm{pl}}^{w}=\frac{1}{\delta w}
\sum_{\ell=1}^w\widehat{\varepsilon}_{(j-1)w+\ell}^{\mathrm{pl}}
\big(\widehat{\varepsilon}_{(j-1)w+\ell}^{\mathrm{pl}}\big)^\top .
\]
Let $N=wN_w$ and note that
\[
\frac{1}{N_w}\sum_{j=1}^{N_w}\widehat{\Sigma}_{j,\mathrm{pl}}^{w}=\frac{1}{\delta N}\sum_{n=1}^N\widehat{\varepsilon}_n^{\mathrm{pl}}\big(\widehat{\varepsilon}_n^{\mathrm{pl}}\big)^\top.
\]
Using that $\widehat\varepsilon_n^{\,\mathrm{pl}}=\widetilde X_n-\widehat{\mathcal S}_{\delta,N}\widetilde X_{n-1}$, we obtain
\begin{equation}\label{eq:plugin_proof1}
\begin{aligned}
\frac{1}{N}\sum_{n=1}^N
\widehat{\varepsilon}_n^{\mathrm{pl}}
\big(\widehat{\varepsilon}_n^{\mathrm{pl}}\big)^\top
&= \frac{1}{N}\sum_{n=1}^N \widetilde{X}_n\widetilde{X}_n^\top -
\widehat{\mathcal{S}}_{\delta,N} \Big(\frac{1}{N}\sum_{n=1}^N \widetilde{X}_{n-1}\widetilde{X}_n^\top\Big) \\
&\quad -\Big(\frac{1}{N}\sum_{n=1}^N \widetilde{X}_n\widetilde{X}_{n-1}^\top\Big)
\widehat{\mathcal{S}}_{\delta,N}^\top+\widehat{\mathcal{S}}_{\delta,N}
\Big(\frac{1}{N}\sum_{n=1}^N \widetilde{X}_{n-1}\widetilde{X}_{n-1}^\top\Big)
\widehat{\mathcal{S}}_{\delta,N}^\top.
\end{aligned}
\end{equation}
Since $\widehat{\mathcal S}_{\delta,N}\widehat\Gamma_N=\widehat C_N$,
the third and fourth terms of \eqref{eq:plugin_proof1} cancel, and therefore
\[
\frac{1}{N}\sum_{n=1}^N
\widehat{\varepsilon}_n^{\mathrm{pl}}
\big(\widehat{\varepsilon}_n^{\mathrm{pl}}\big)^\top
=\frac{1}{N}\sum_{n=1}^N \widetilde{X}_n\widetilde{X}_n^\top
-\widehat{C}_N\widehat{\Gamma}_N^{-1}\widehat{C}_N^\top.
\]
By stationarity and Birkhoff's ergodic theorem applied entrywise, we find that
\[
\frac{1}{N}\sum_{n=1}^N \widetilde{X}_n\widetilde{X}_n^\top
\xrightarrow[N\to\infty]{a.s.}\mathbb{E}[\widetilde{X}_0\widetilde{X}_0^\top]
=\Gamma.
\]
Combining this with the already established convergences of
$\widehat{C}_N$ and $\widehat{\Gamma}_N$ in \eqref{eq:plugin_proof2}, we get
\[
\frac{1}{N_w}\sum_{j=1}^{N_w}\widehat{\Sigma}_{j,\mathrm{pl}}^{w}
\xrightarrow[N_w\to\infty]{a.s.}\frac{1}{\delta}\Big(\Gamma - \mathbb{E}[\widetilde{X}_1\widetilde{X}_0^\top]\Gamma^{-1} \mathbb{E}[\widetilde{X}_0\widetilde{X}_1^\top]
\Big).
\]
Now, since $\mathcal{S}_\delta = \mathbb{E}[\widetilde{X}_1\widetilde{X}_0^\top]\Gamma^{-1}$,
this limit can be written as $\frac{1}{\delta}\big(\Gamma-\mathcal{S}_\delta \Gamma \mathcal{S}_\delta^\top\big)$. It remains to identify this with the stated oracle target. Expanding, we have
\[
\begin{aligned}
\mathbb{E}\Big[(\widetilde{X}_1-\mathcal{S}_\delta\widetilde{X}_0)(\widetilde{X}_1-\mathcal{S}_\delta\widetilde{X}_0)^\top\Big]
&= \mathbb{E}[\widetilde{X}_1\widetilde{X}_1^\top]-\mathcal{S}_\delta\mathbb{E}[\widetilde{X}_0\widetilde{X}_1^\top]-\mathbb{E}[\widetilde{X}_1\widetilde{X}_0^\top]\mathcal{S}_\delta^\top
+\mathcal{S}_\delta\Gamma\mathcal{S}_\delta^\top.
\end{aligned}
\]
By stationarity, $\mathbb{E}[\widetilde X_1\widetilde X_1^\top]=\Gamma$, while
\[
\mathbb{E}[\widetilde{X}_1\widetilde{X}_0^\top]=\mathcal{S}_\delta\Gamma,
\qquad\mathbb{E}[\widetilde{X}_0\widetilde{X}_1^\top]
=\Gamma\mathcal{S}_\delta^\top.
\]
Hence
\[
\mathbb{E}\Big[(\widetilde{X}_1-\mathcal{S}_\delta\widetilde{X}_0)
(\widetilde{X}_1-\mathcal{S}_\delta\widetilde{X}_0)^\top\Big]
=\Gamma-\mathcal{S}_\delta\Gamma\mathcal{S}_\delta^\top,
\]
which proves that
\[
\frac{1}{N_w}\sum_{j=1}^{N_w}\widehat\Sigma_{j,\mathrm{pl}}^{w}\xrightarrow[N_w\to\infty]{a.s.}
\frac{1}{\delta}\mathbb{E}\Big[(\widetilde{X}_1-\mathcal{S}_\delta \widetilde{X}_0)
(\widetilde{X}_1-\mathcal{S}_\delta \widetilde{X}_0)^\top\Big].
\]
Finally, using $\Delta_1\widetilde{X}=\widetilde{X}_1-\widetilde{X}_0$, we have
\[
\widetilde{X}_1-\mathcal{S}_\delta\widetilde{X}_0
=\Delta_1\widetilde{X}-(\mathcal{S}_\delta-I_d)\widetilde{X}_0.
\]
Therefore,
\begin{align*}
&\mathbb{E}\Big[(\widetilde{X}_1-\mathcal{S}_\delta\widetilde{X}_0)
(\widetilde{X}_1-\mathcal{S}_\delta\widetilde{X}_0)^\top\Big]
\\
&\qquad= \mathbb{E}\left[(\Delta_1\widetilde{X})(\Delta_1\widetilde{X})^\top\right]
-\mathbb{E}\Big[(\mathcal{S}_\delta-I_d)\widetilde{X}_0\widetilde{X}_0^\top(\mathcal{S}_\delta-I_d)^\top\Big].
\end{align*}
Note that $\mathbb E[\widetilde X_1\widetilde X_0^\top]=\mathcal S_\delta\Gamma$, which proves the equivalent representation \eqref{eq:plugin_equiv_representation} of the limit and completes the proof.
\end{proof}

\paragraph{Proof of Corollary~\ref{cor:analytic_semigroup}}
\begin{proof}
Analyticity of $\mathcal{S}$ implies the increment bound
$\lVert (\mathcal{S}(\delta)-I)x\rVert_{L^2(\mathbb{S}^1)}\lesssim \delta^\alpha\lVert(-\mathcal{A})^\alpha x\rVert_{L^2(\mathbb{S}^1)}$ for $x\in\mathcal{D}((-\mathcal{A})^\alpha)$.
Therefore $\mathbb{E}\left[\lVert A B_1\rVert^2 \right]=O(\delta^{2\alpha})$. The drift term satisfies $\mathbb{E}\left[ \lVert A D_1\rVert^2\right]=O(\delta^2)$ and is thus of
no larger order. Consequently,
\[
\mathbb{E}[A(B_1+D_1)(B_1+D_1)^\ast A^\ast]=O(\delta^{2\alpha}).
\]
\end{proof}

\section{Data description}\label{app:data}
The spot price data is gathered via the ENTSO-E transparency platform \href{https://transparency.entsoe.eu/}{https://transparency.entsoe.eu/}. In addition to the price data, there is also data on forecasted load (demand), wind production and solar production. These forecasts are available and used by market participants in the day-ahead auction process. There is also data on the actual, realized, generation variables. More precisely, it is possible to see the hourly electricity production coming from each fuel type. The fuel types used in production varies substantially across generation zones; for example, in Norway there is essentially no solar production, but a large share of hydropower. In Germany, the fuel mix is very diverse, and in Spain the solar production is a much more significant contributor.

In the following, we shall analyze the data in more detail, focusing on the German generation zone. Similar results can be obtained for both the Norwegian and Spanish zone under consideration, up to differences in data availability. Figure~\ref{fig:genex_share_DE_LU} depicts the relative share of various fuel sources in the German generation zone, where it is apparent that solar and wind are large contributing sources.

\begin{figure}
    \centering
    \includegraphics[width=\linewidth]{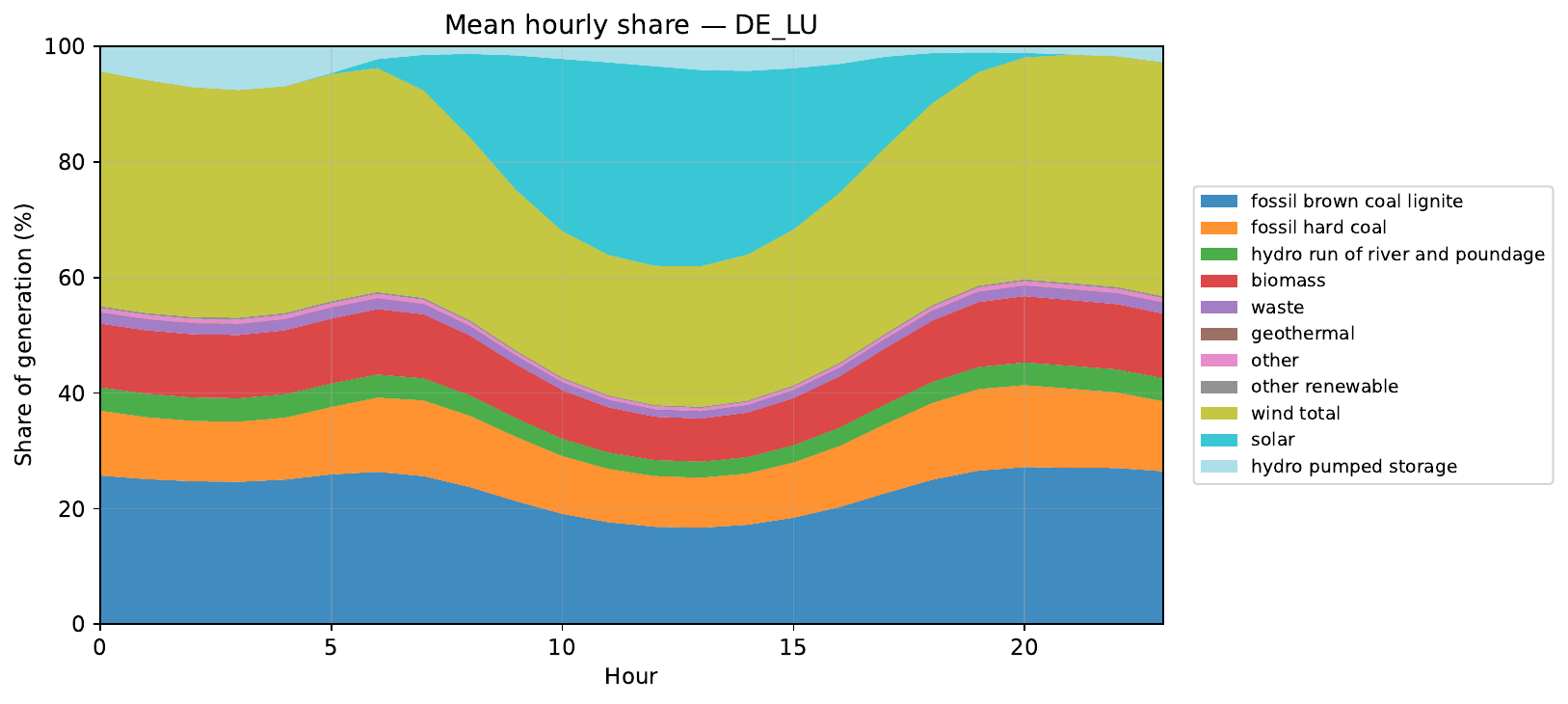}
    \caption{The relative share of various fuel sources in the German generation zone.}
    \label{fig:genex_share_DE_LU}
\end{figure} 

\begin{figure}
    \centering
    \begin{subfigure}[b]{0.32\textwidth}
        \centering
        \includegraphics[width=\linewidth]{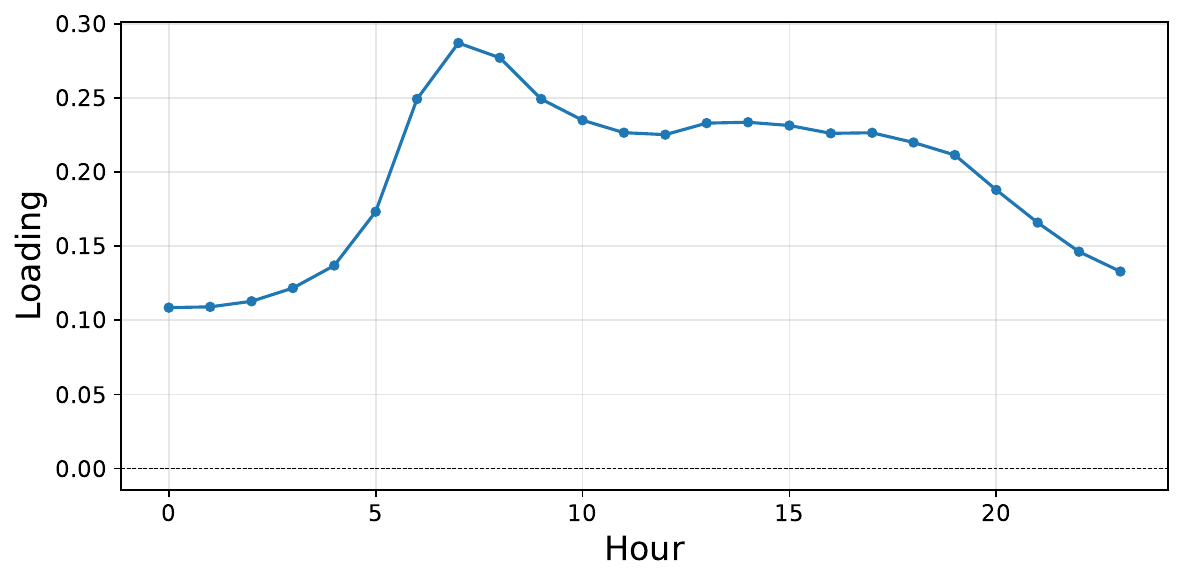}
        \caption{PC1, 93.5\%}
    \end{subfigure}
    \hfill
    \begin{subfigure}[b]{0.32\textwidth}
        \centering
        \includegraphics[width=\linewidth]{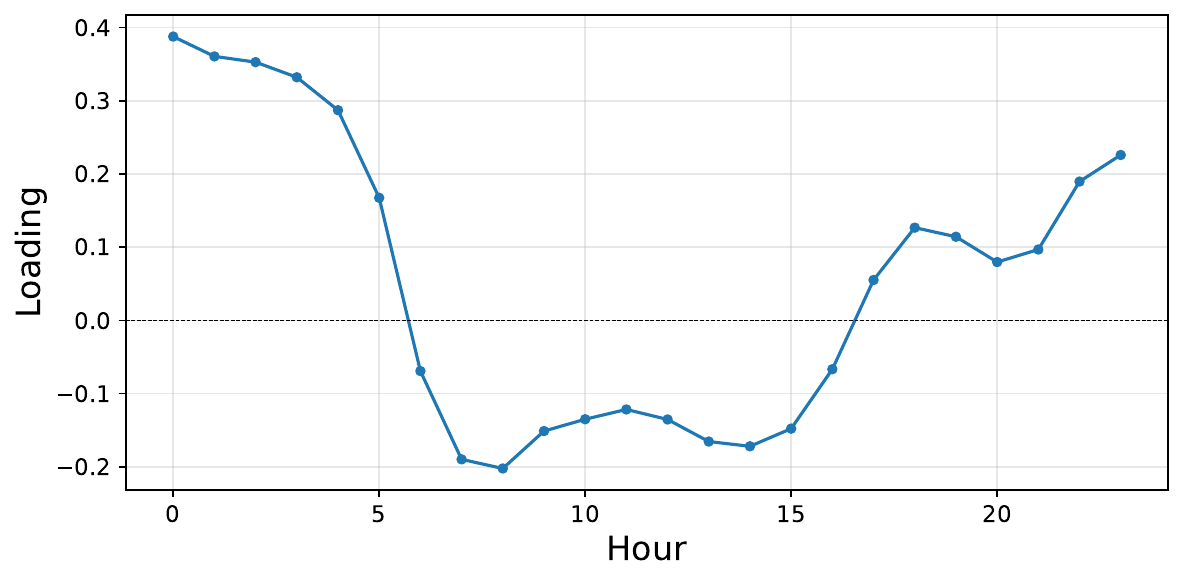}
        \caption{PC2, 4.2\%}
    \end{subfigure}
    \hfill
    \begin{subfigure}[b]{0.32\textwidth}
        \centering
        \includegraphics[width=\linewidth]{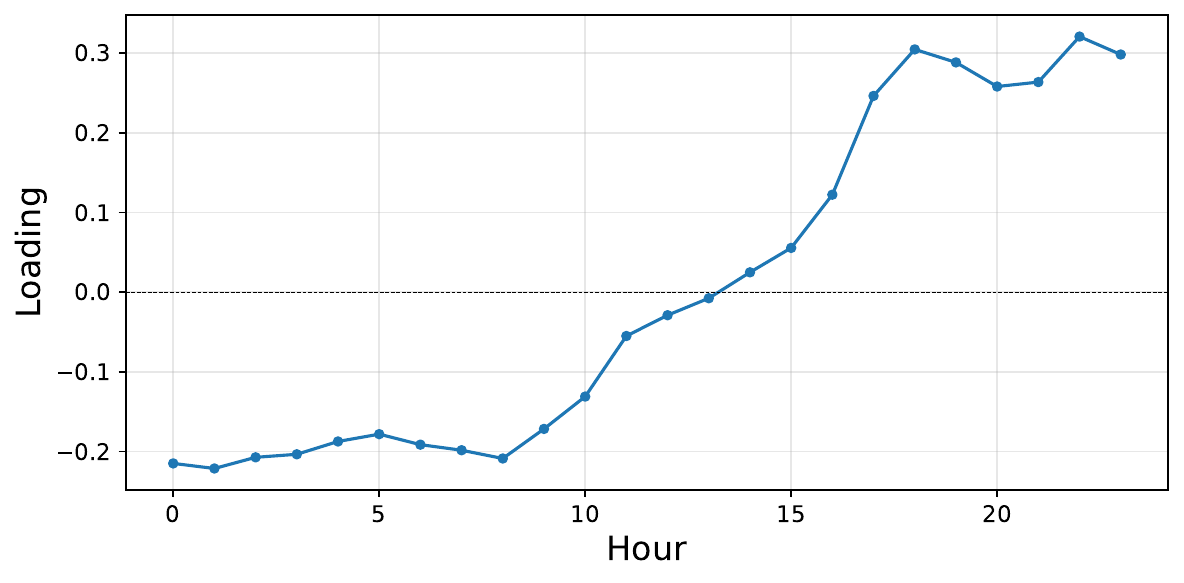}
        \caption{PC3, 1.2\%}
    \end{subfigure}
    \caption{Factor loadings for the first three principal components of the daily load forecasts. The subcaptions show the percentage of variance explained by that factor.}
    \label{fig:pca_loadings_load}
\end{figure}

\subsection{Load/demand profiles}
The forecasted load is very stable over time, indicating that the demand for electricity generally varies in the same diurnal pattern each day. On Figure~\ref{fig:pca_loadings_load} we show the factor loadings of the first three principal components. The first factor drives 93.5\% of the variation in forecasted load and thus represents the average shape of the forecast. The second factor seems to capture deviations between peak and off-peak hours, while the third factor looks like a slope factor.

\subsection{Wind and solar generation profiles}
The wind and solar generation profiles are also very stable over time. Wind patterns are more variable and unpredictable, whereas solar patterns are extremely stable as they are deterministically linked to the hours of the day where the sun is up. On Figures~\ref{fig:pca_loadings_wind} and \ref{fig:pca_loadings_solar}, we show the factor loadings of the first three principal components for wind and solar, respectively. The wind factors are nicely interpretable as level, slope, and curvature, whereas the first solar factor has a bell-curve shape, indicating the mid-day hours with most sunlight. The second and third factor seem to control deviations in evening and morning hours, respectively.

\begin{figure}
    \centering
    \begin{subfigure}[b]{0.32\textwidth}
        \centering
        \includegraphics[width=\linewidth]{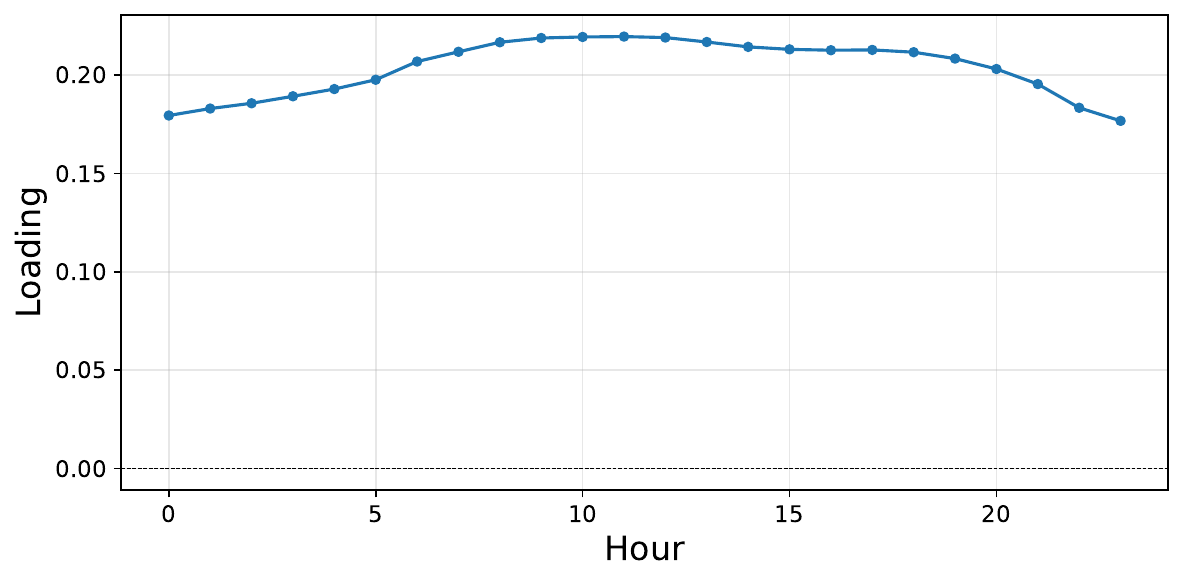}
        \caption{PC1, 85.8\%}
    \end{subfigure}
    \hfill
    \begin{subfigure}[b]{0.32\textwidth}
        \centering
        \includegraphics[width=\linewidth]{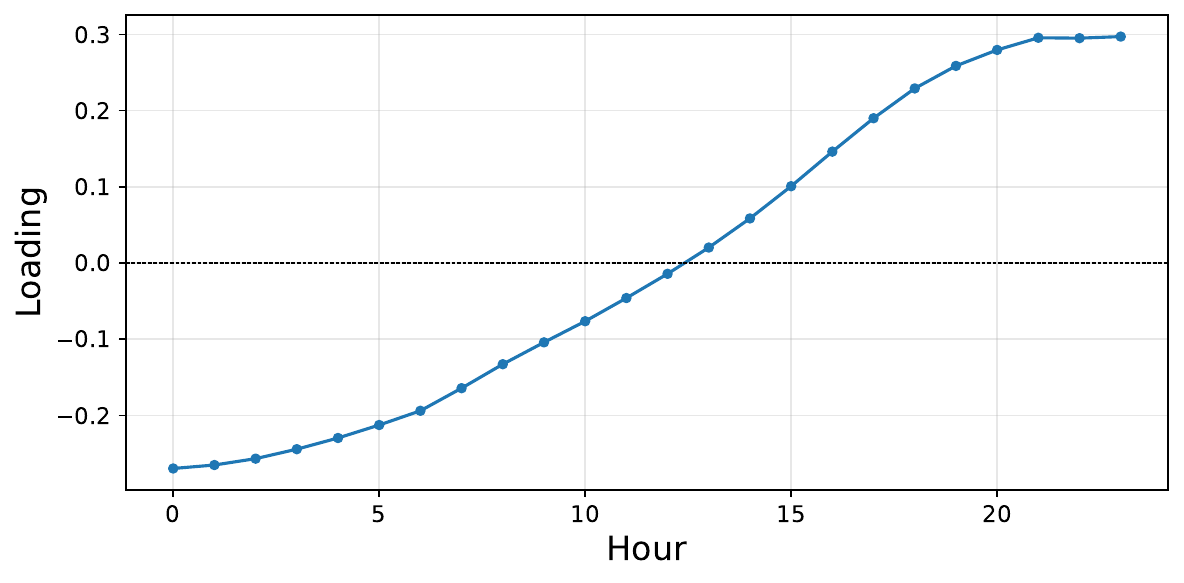}
        \caption{PC2, 9.3\%}
    \end{subfigure}
    \hfill
    \begin{subfigure}[b]{0.32\textwidth}
        \centering
        \includegraphics[width=\linewidth]{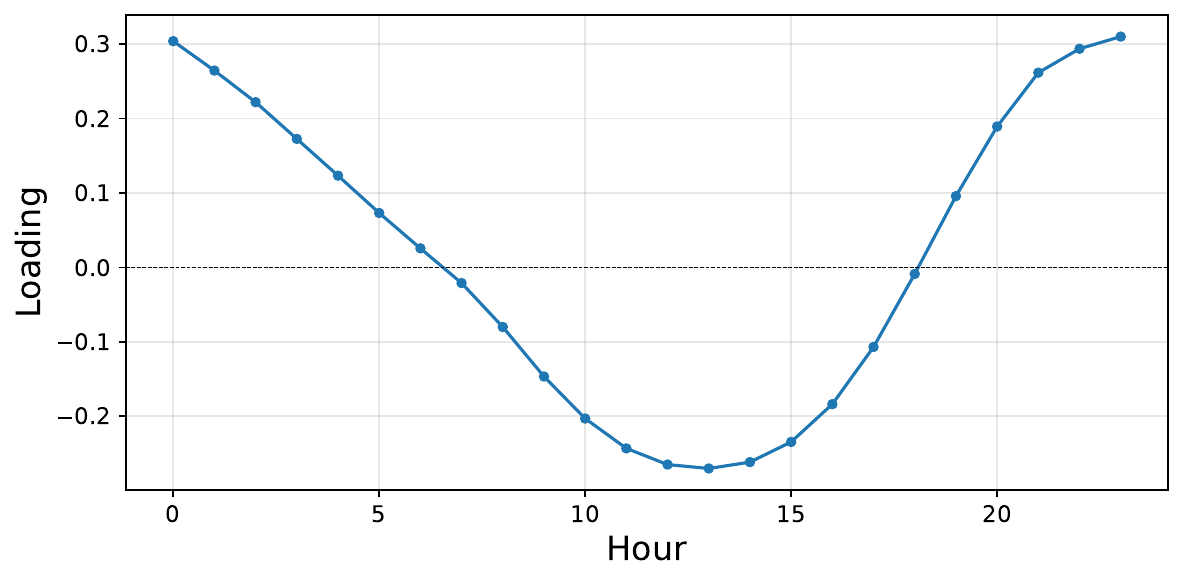}
        \caption{PC3, 3.2\%}
    \end{subfigure}
    \caption{Factor loadings for the first three principal components of the daily wind generation. The subcaptions show the percentage of variance explained by that factor.}
    \label{fig:pca_loadings_wind}
\end{figure}

\begin{figure}
    \centering
    \begin{subfigure}[b]{0.32\textwidth}
        \centering
        \includegraphics[width=\linewidth]{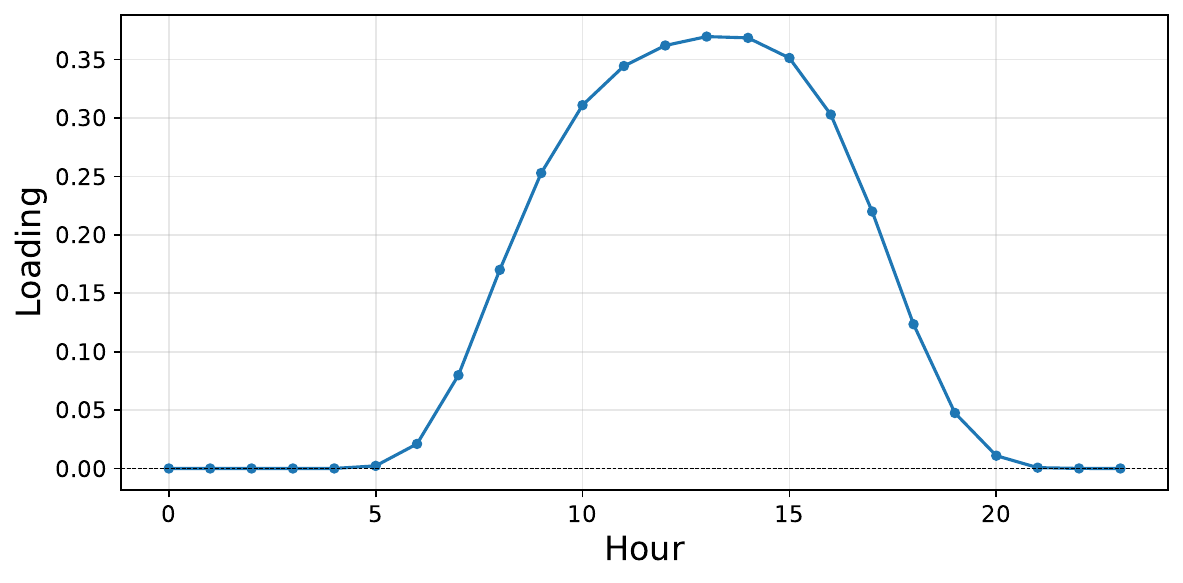}
        \caption{PC1, 94.9\%}
    \end{subfigure}
    \hfill
    \begin{subfigure}[b]{0.32\textwidth}
        \centering
        \includegraphics[width=\linewidth]{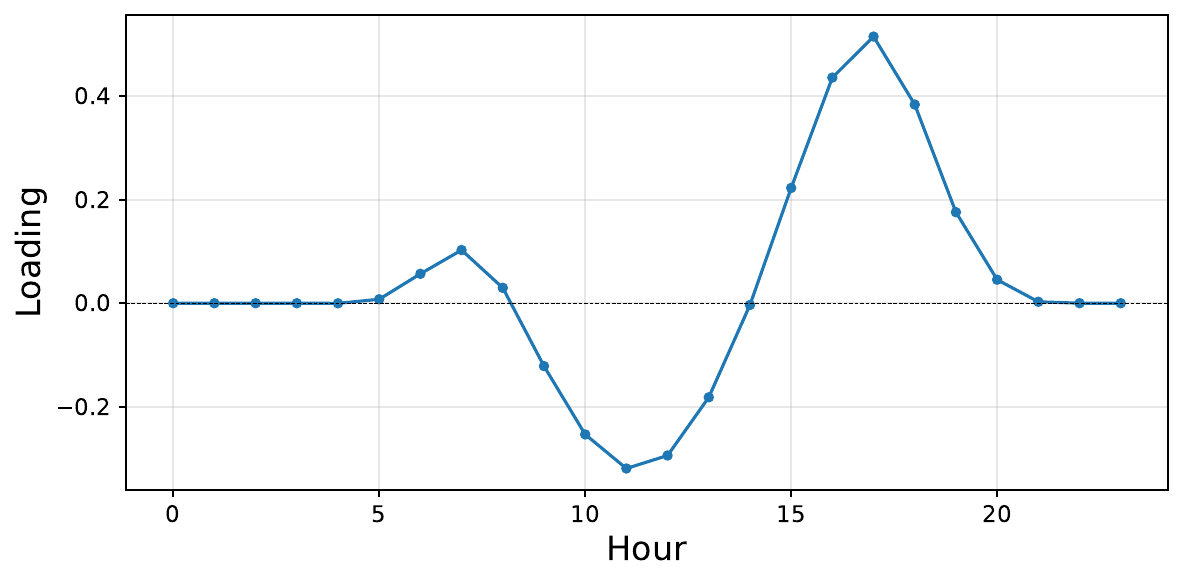}
        \caption{PC2, 3.2\%}
    \end{subfigure}
    \hfill
    \begin{subfigure}[b]{0.32\textwidth}
        \centering
        \includegraphics[width=\linewidth]{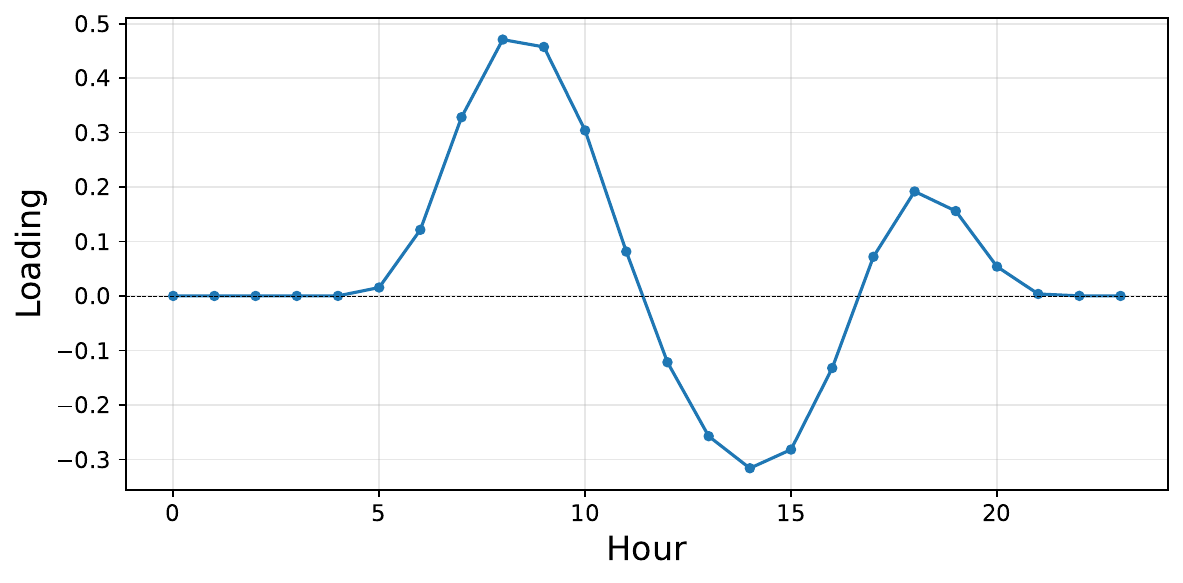}
        \caption{PC3, 1.5\%}
    \end{subfigure}
    \caption{Factor loadings for the first three principal components of the daily solar generation. The subcaptions show the percentage of variance explained by that factor.}
    \label{fig:pca_loadings_solar}
\end{figure}

\clearpage

\section{Stationarity test results}\label{app:stationarity}

\begin{table}[h]
\centering
\caption{Stationarity test results for electricity spot price levels.
Panel~A reports the tests for the German generation zone, Panel~B reports the tests for the NO2 Generation zone, and Panel~C reports the tests for the Spanish generation zone.}
\label{tab:stationarity_levels}
\smallskip
\begin{tabular}{l c c c}
\toprule
& Statistic & $p$-value & Decision \\
\midrule
\multicolumn{4}{l}{\textit{Panel A: Multivariate tests on daily price levels in Germany}} \\[3pt]
\quad Multivariate KPSS & 21.2358 & $<0.0001$ & Reject \\
\quad ADF (component-wise) & & & 19/24 reject \\
\qquad Median & $-3.305$ & 0.0147 & \\
\qquad Range & $[-3.650,\, -2.677]$ & $[0.005,\, 0.078]$ & \\
\quad KPSS (component-wise) & & & 24/24 reject \\
\qquad Median & 1.740 & $<0.01$ & \\
\qquad Range & $[1.061,\, 2.012]$ & & \\
[6pt]
\multicolumn{4}{l}{\textit{Panel B: Multivariate tests on daily price levels in Norway (NO2)}} \\[3pt]
\quad Multivariate KPSS & 15.4141 & $<0.0001$ & Reject \\
\quad ADF (component-wise) & & & 23/24 reject \\
\qquad Median & $-3.231$ & 0.0184 & \\
\qquad Range & $[-3.640,\, -2.804]$ & $[0.005,\, 0.058]$ & \\
\quad KPSS (component-wise) & & & 24/24 reject \\
\qquad Median & 2.747 & $<0.01$ & \\
\qquad Range & $[2.562,\, 2.872]$ & & \\
[6pt]
\multicolumn{4}{l}{\textit{Panel C: Multivariate tests on daily price levels in Spain}} \\[3pt]
\quad Multivariate KPSS& 29.0122 & $<0.0001$ & Reject \\
\quad ADF (component-wise) & & & 0/24 reject \\
\qquad Median & $-2.307$ & 0.1698 & \\
\qquad Range & $[-2.624,\, -2.125]$ & $[0.088,\, 0.234]$ & \\
\quad KPSS (component-wise) & & & 24/24 reject \\
\qquad Median & 1.390 & $<0.01$ & \\
\qquad Range & $[0.958,\, 1.929]$ & & \\
\bottomrule
\end{tabular}
\end{table}

\begin{table}
\centering
\caption{Stationarity test results for electricity spot price differences/increments.
Panel~A reports the tests for the German generation zone, Panel~B reports the tests for the NO2 Generation zone, and Panel~C reports the tests for the Spanish generation zone.}
\label{tab:stationarity_differences}
\smallskip
\begin{tabular}{l c c c}
\toprule
& Statistic & $p$-value & Decision \\
\midrule
\multicolumn{4}{l}{\textit{Panel A: Multivariate tests on daily price levels in Germany}} \\[3pt]
\quad Multivariate KPSS & 0.1544 & 1.0000 & Fail to reject \\
\quad ADF (component-wise) & & & 24/24 reject \\
\qquad Median & $-13.611$ & $<0.0001$ & \\
\qquad Range & $[-14.526,\, -13.044]$ & & \\
\quad KPSS (component-wise) & & & 0/24 reject \\
\qquad Median & 0.034 & $>0.10$ & \\
\qquad Range & $[0.025,\, 0.066]$ & & \\
[6pt]
\multicolumn{4}{l}{\textit{Panel B: Multivariate tests on daily price levels in Norway (NO2)}} \\[3pt]
\quad Multivariate KPSS & 0.0951 & 1.0000 & Fail to reject \\
\quad ADF (component-wise) & & & 24/24 reject \\
\qquad Median & $-9.480$ & $<0.0001$ & \\
\qquad Range & $[-11.441,\, -8.582]$ & & \\
\quad KPSS (component-wise) & & & 0/24 reject \\
\qquad Median & 0.046 & $>0.10$ & \\
\qquad Range & $[0.025,\, 0.070]$ & & \\
[6pt]
\multicolumn{4}{l}{\textit{Panel C: Multivariate tests on daily price levels in Spain}} \\[3pt]
\quad Multivariate KPSS & 0.1333 & 1.0000 & Fail to reject \\
\quad ADF (component-wise) & & & 24/24 reject \\
\qquad Median & $-14.887$ & $<0.0001$ & \\
\qquad Range & $[-15.738,\, -13.988]$ & & \\
\quad KPSS (component-wise) & & & 0/24 reject \\
\qquad Median & 0.058 & $>0.10$ & \\
\qquad Range & $[0.039,\, 0.219]$ & & \\
\bottomrule
\end{tabular}
\end{table}

\clearpage

\section{Plots and tables for Norway and Spain}\label{app:plots}

\begin{table}[b]
\footnotesize
\centering
\caption{Regressions of $\log(S_1(t))$ on same-week fundamentals in the Norwegian zone (NO-2), using the propagation-adjusted RCV. Entries report coefficients with HAC $t$-statistics in parentheses.}
\label{tab:contemp_logIV_norway}
\setlength{\tabcolsep}{5pt}
\begin{tabular}{@{}lcc@{}}
\toprule
 & RES + Price & All sources \\
\midrule
\grp{\textbf{Price level}} \\
$\overline{P}$     
    & \ct{0.0440$^{***}$}{11.27}
    & \ct{0.0270$^{***}$}{10.57} \\
$\overline{P}^2$   
    & \ct{$-$6.6e-5$^{***}$}{$-$6.18}
    & \ct{$-$3.3e-5$^{***}$}{$-$6.36} \\

\grp{\textbf{Forecast PC1 scores}} \\
Load PC1
    & \na
    & \ct{1.3e-4$^{***}$}{6.33} \\
Wind PC1
    & \ct{6.4e-4$^{***}$}{7.33}
    & \ct{5.1e-3}{1.08} \\

\grp{\textbf{Wind actuals and forecast error}} \\
Wind actual
    & \na
    & \ct{$-$2.5e-2}{$-$1.07} \\
Wind error
    & \ct{$-$3.1e-3$^{***}$}{$-$4.20}
    & \ct{$-$2.5e-2}{$-$1.06} \\

\grp{\textbf{Hydro generation}} \\
Hydro pump.\ stor.
    & \na
    & \ct{1.7e-3$^{**}$}{2.25} \\
Hydro run-of-river
    & \na
    & \ct{3.9e-3$^{***}$}{6.66} \\
Hydro reservoir
    & \na
    & \ct{$-$5.4e-4$^{***}$}{$-$5.50} \\

\grp{\textbf{Thermal and other}} \\
Fossil gas
    & \na
    & \ct{$-$7.3e-3}{$-$0.51} \\
Waste
    & \na
    & \ct{$-$0.100$^{***}$}{$-$2.71} \\
\midrule
$R^2$ & 0.592 & 0.692 \\
$N$   & 2{,}100 & 1{,}835 \\
\bottomrule
\end{tabular}
\end{table}

\begin{table}
\footnotesize
\centering
\setlength{\tabcolsep}{3pt}
\renewcommand{\arraystretch}{0.9}
\caption{Regressions of $\log(S_1(t))$ on same-week fundamentals in Spain, using the propagation-adjusted RCV. Entries report coefficients with HAC $t$-statistics in parentheses.}
\label{tab:contemp_logIV_ES}
\begin{tabular}{@{}lcc@{}}
\toprule
 & RES + Price & All sources \\
\midrule
\grp{\textbf{Price level}} \\
$\overline{P}$     & \ct{0.0228$^{***}$}{9.21} & \ct{0.0301$^{***}$}{13.50} \\
$\overline{P}^2$   & \ct{$-$2.9e-5$^{***}$}{$-$3.26} & \ct{$-$3.6e-5$^{***}$}{$-$4.78} \\

\grp{\textbf{RES forecasts, PC1 scores}} \\
Load PC1           & \na & \ct{$-$3.0e-5$^{***}$}{$-$4.58} \\
Wind PC1           & \ct{6.1e-5$^{***}$}{15.06} & \ct{4.0e-4$^{*}$}{1.94} \\
Solar PC1          & \ct{5.0e-5$^{***}$}{13.07} & \ct{$-$1.0e-4$^{**}$}{$-$2.44} \\

\grp{\textbf{RES actuals and forecast errors}} \\
Wind actual        & \na & \ct{$-$1.8e-3$^{*}$}{$-$1.78} \\
Wind error         & \ct{8.8e-5}{0.63} & \ct{$-$1.9e-3$^{*}$}{$-$1.90} \\
Solar actual       & \na & \ct{7.9e-4$^{***}$}{2.87} \\
Solar error        & \ct{$-$6.8e-5}{$-$0.31} & \ct{5.5e-4$^{*}$}{1.77} \\

\grp{\textbf{Thermal and nuclear generation}} \\
Lignite            & \na & \ct{$\sim$0$^{***}$}{2.59} \\
Hard coal          & \na & \ct{1.9e-4$^{*}$}{1.65} \\
Fossil gas         & \na & \ct{$-$2.1e-5}{$-$0.67} \\
Fossil oil         & \na & \ct{$-$3.3e-3$^{***}$}{$-$3.88} \\
Nuclear            & \na & \ct{2.3e-5}{0.51} \\

\grp{\textbf{Hydro and other}} \\
Hydro pump.\ stor. & \na & \ct{1.1e-3$^{***}$}{7.44} \\
Hydro run-of-river & \na & \ct{$-$6.5e-4$^{***}$}{$-$3.00} \\
Hydro reservoir    & \na & \ct{3.8e-4$^{***}$}{5.72} \\
Biomass            & \na & \ct{$-$8.0e-4}{$-$1.63} \\
Waste              & \na & \ct{$-$2.0e-3$^{**}$}{$-$2.30} \\
Other              & \na & \ct{1.7e-3}{0.98} \\
Other renewable    & \na & \ct{$-$0.0167$^{***}$}{$-$3.61} \\
\midrule
$R^2$              & 0.637 & 0.869 \\
$N$                & 2{,}118 & 2{,}118 \\
\bottomrule
\end{tabular}
\end{table}

\begin{figure}[htbp]
    \centering
    \begin{subfigure}[b]{0.45\textwidth}
        \centering
        \includegraphics[width=\textwidth]{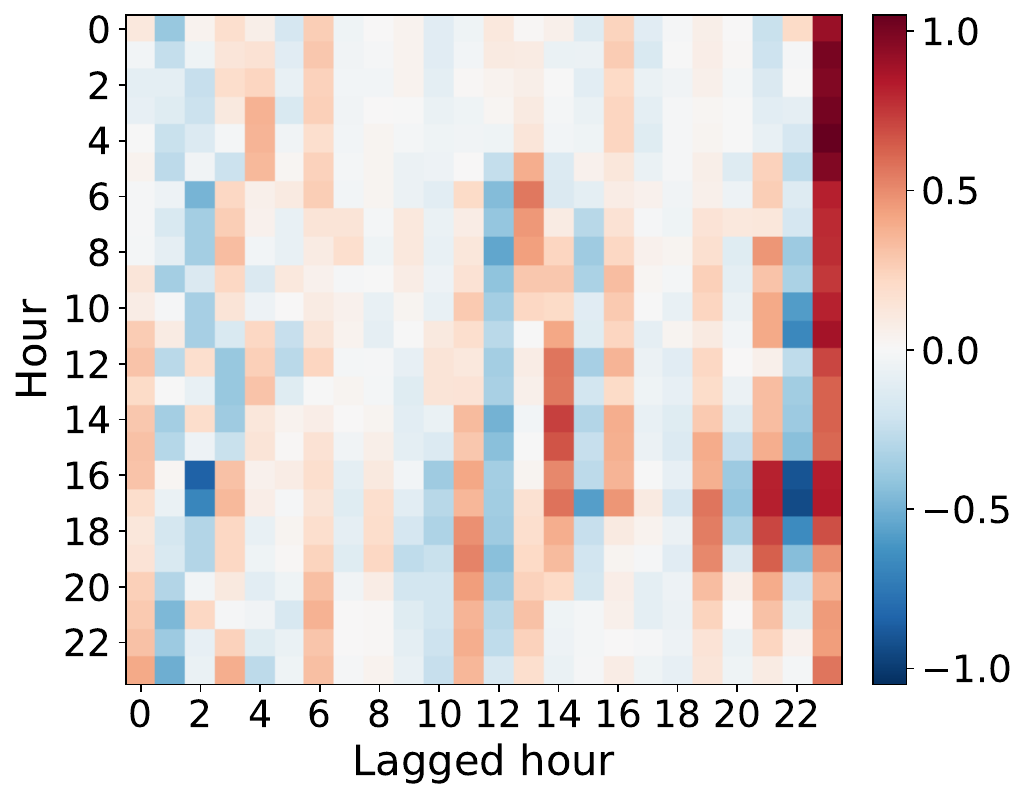}
        \subcaption{Semigroup matrix $\widehat{\mathcal{S}}_\delta$}
    \end{subfigure}
    \hfill
    \begin{subfigure}[b]{0.45\textwidth}
        \centering
        \includegraphics[width=\textwidth]{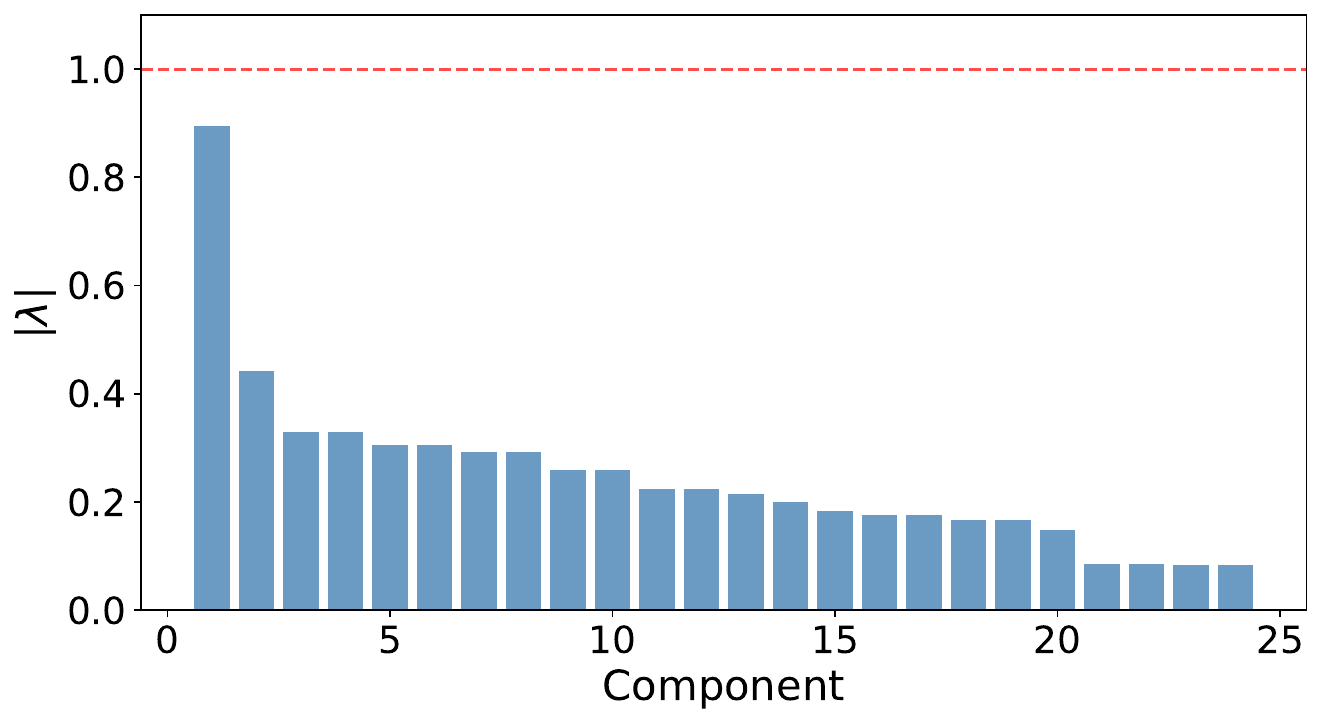}
        \subcaption{Eigenvalue spectrum}
    \end{subfigure}
    \caption{Estimated semigroup matrix $\widehat{\mathcal{S}}_\delta$ in the Norwegian (NO2) generation zone (left) and the corresponding eigenvalue spectrum (right).}
    \label{fig:estimated_semigroup_NO_2}
\end{figure}

\begin{figure}[htbp]
    \centering
    \begin{subfigure}[b]{0.45\textwidth}
        \centering
        \includegraphics[width=\textwidth]{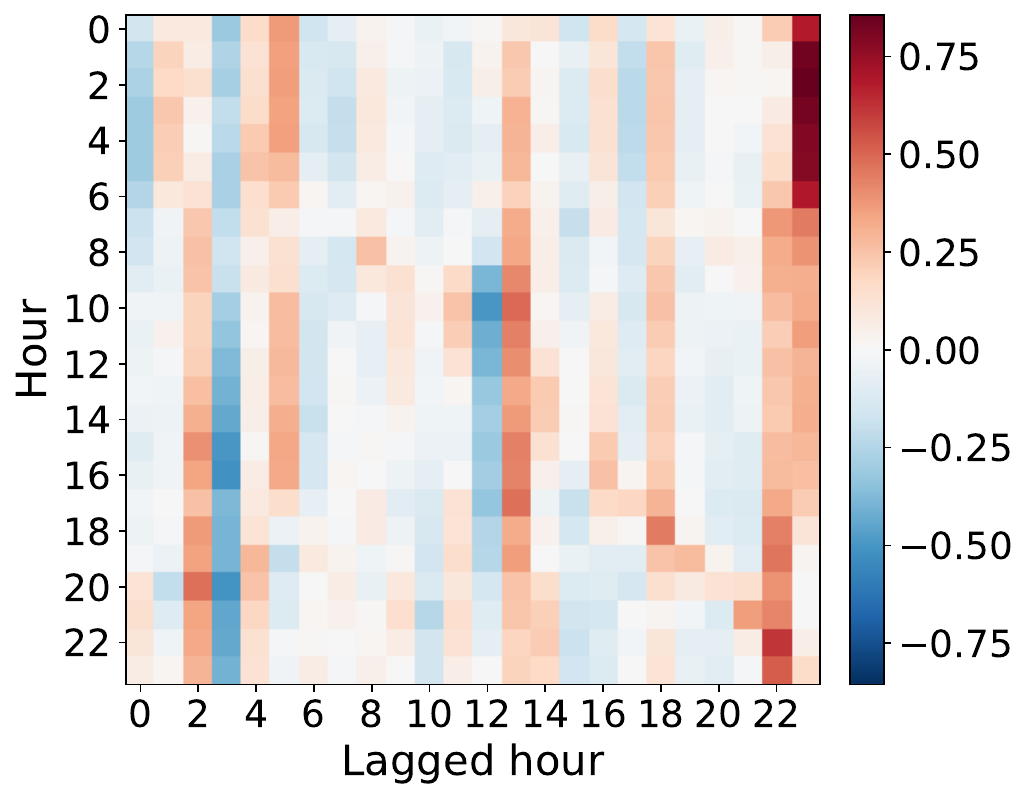}
        \subcaption{Semigroup matrix $\widehat{\mathcal{S}}_\delta$}
    \end{subfigure}
    \hfill
    \begin{subfigure}[b]{0.45\textwidth}
        \centering
        \includegraphics[width=\textwidth]{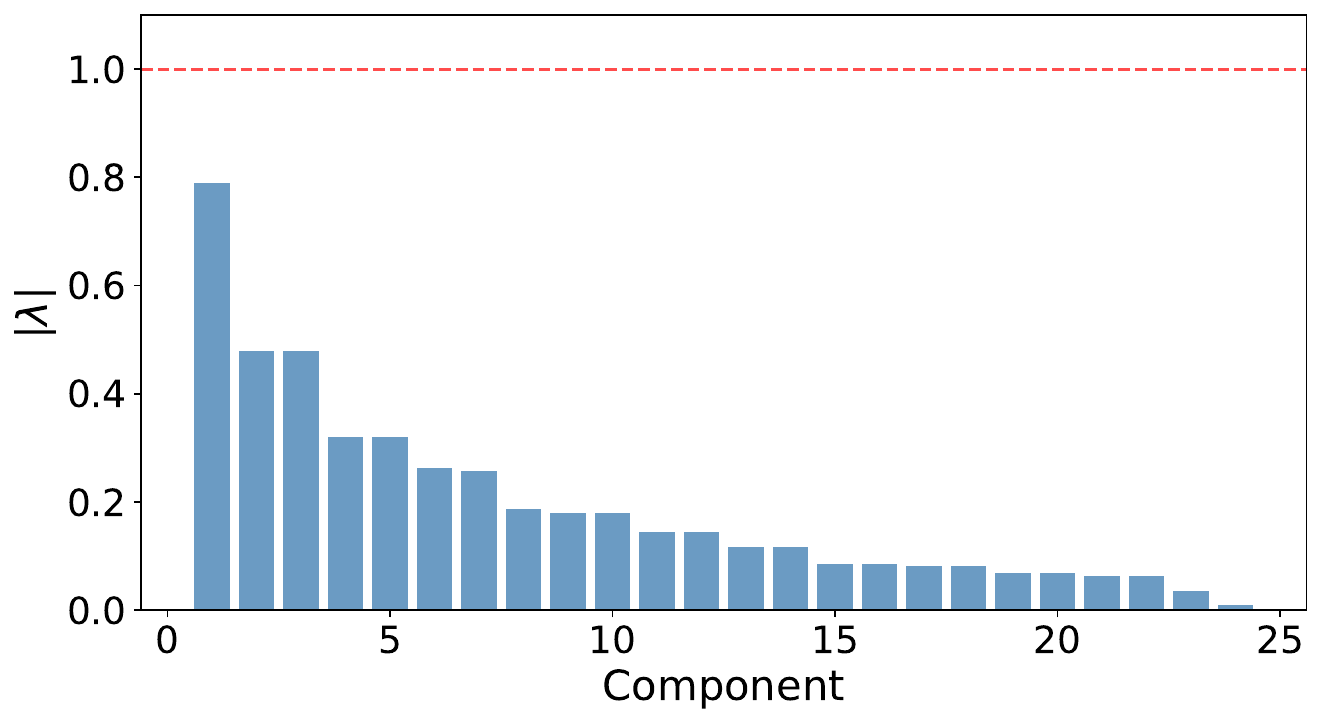}
        \subcaption{Eigenvalue spectrum}
    \end{subfigure}
    \caption{Estimated semigroup matrix $\widehat{\mathcal{S}}_\delta$ in the Spanish generation zone (left) and the corresponding eigenvalue spectrum (right).}
    \label{fig:estimated_semigroup_ES}
\end{figure}

\begin{figure}
    \centering
    \includegraphics[width=\linewidth]{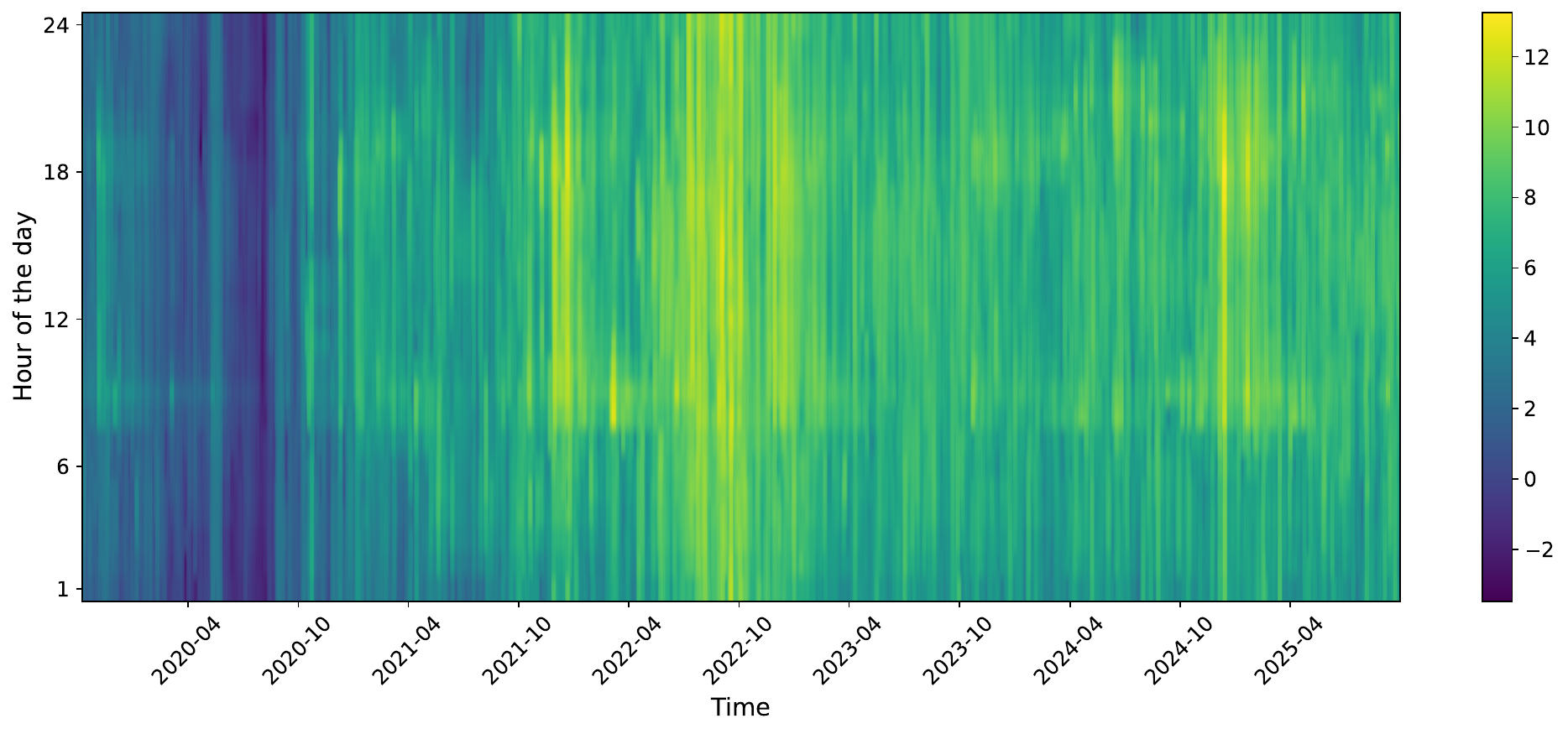}
    \caption{Heatmap of logarithm of estimated weekly RCV, $\mathrm{diag}(\widehat{\Sigma}^7_t)$, for the Norwegian (NO2) generation zone.}\label{fig:log_vol_heatmap_NO_2}
\end{figure}

\begin{figure}
    \centering
    \includegraphics[width=\linewidth]{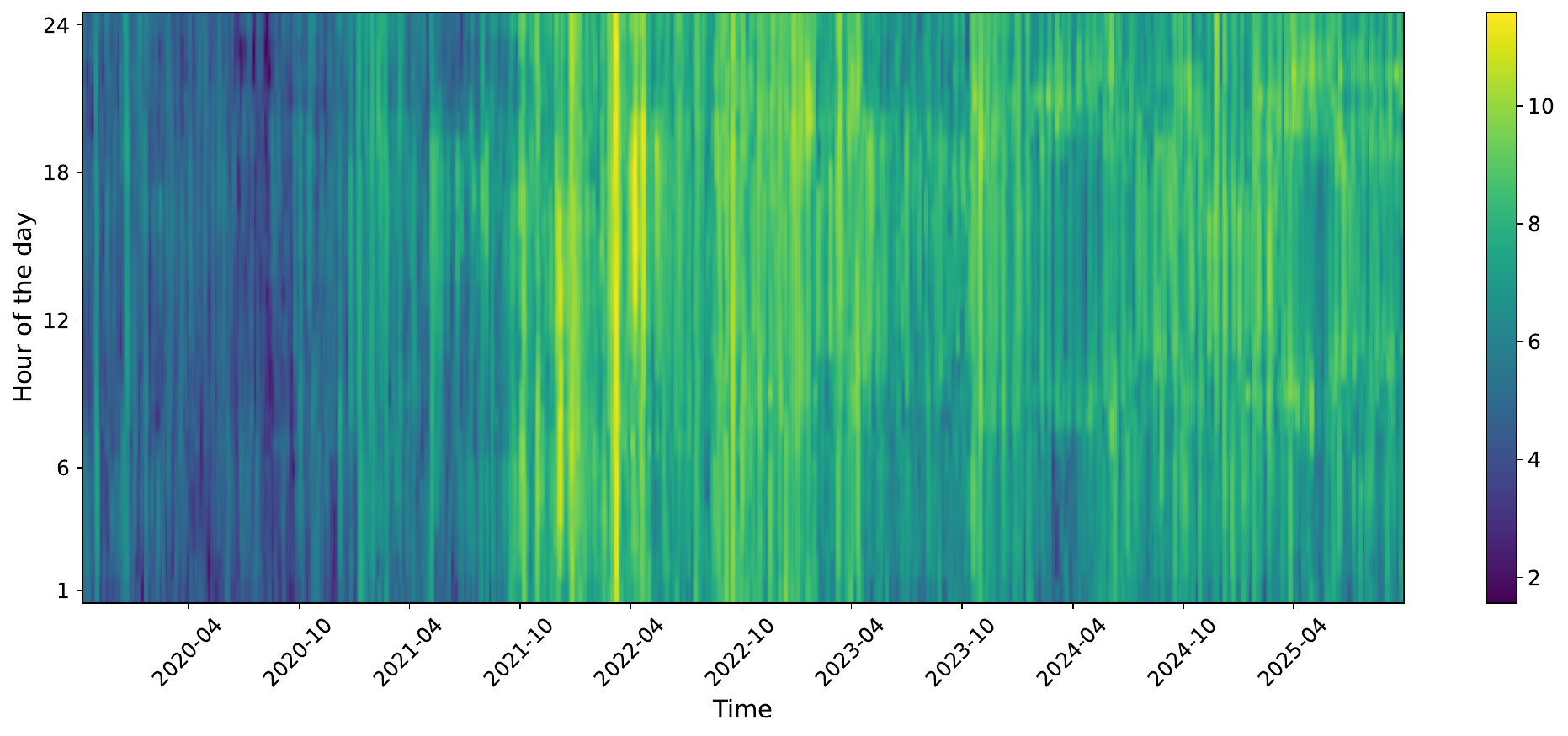}
    \caption{Heatmap of logarithm of estimated weekly RCV, $\mathrm{diag}(\widehat{\Sigma}^7_t)$, for the Spanish generation zone.}\label{fig:log_vol_heatmap_ES}
\end{figure}

\begin{figure}
    \centering
    \includegraphics[width=0.9\linewidth]{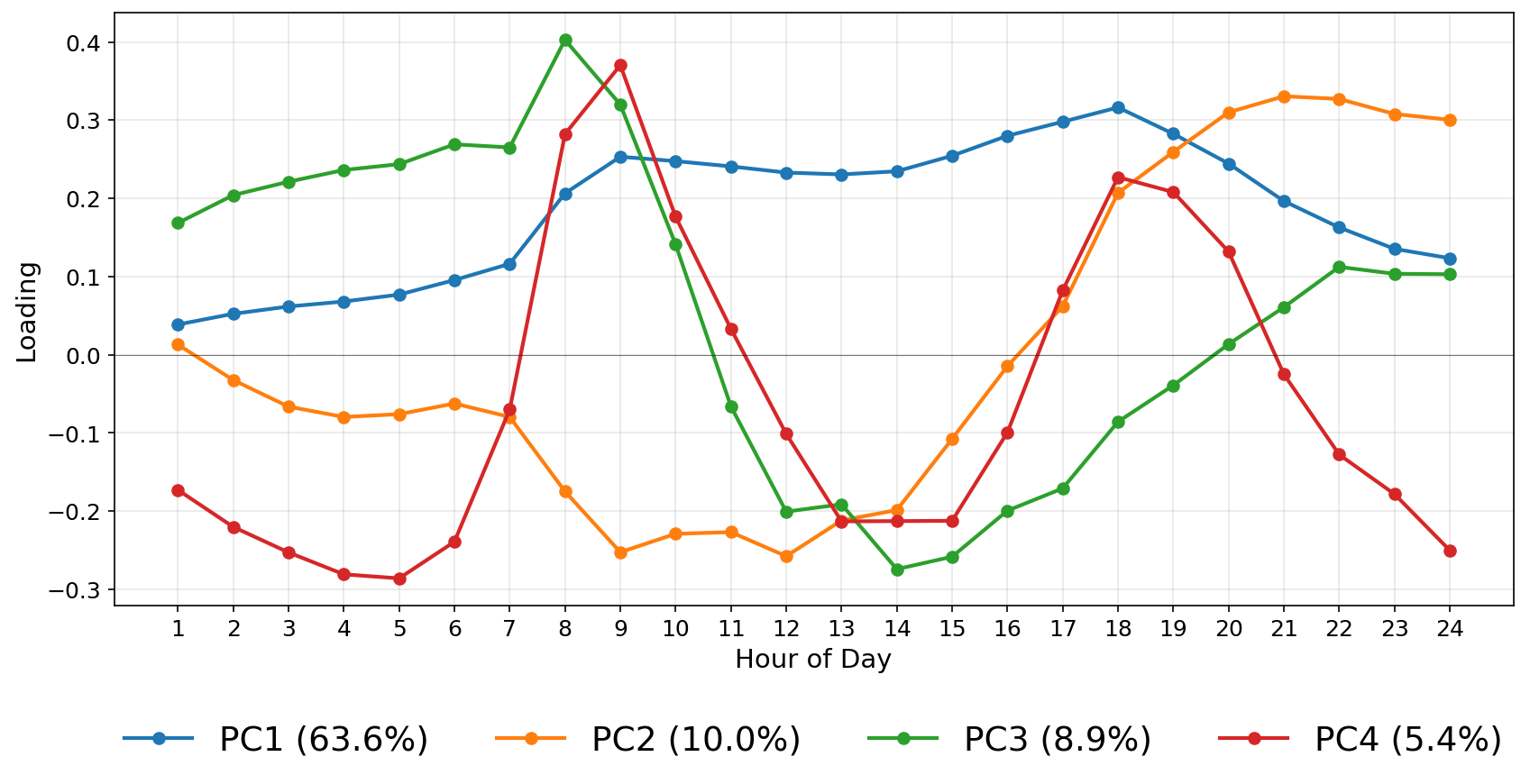}
    \caption{First four factor loadings in the Norwegian (NO2) generation zone. Note that due to rotation invariance, the loadings are only identified up to a sign. The legends also denote the percentage of variance explained by each component.}
    \label{fig:factor_loadings_NO_2}
\end{figure}
\begin{figure}
    \centering
    \includegraphics[width=0.9\linewidth]{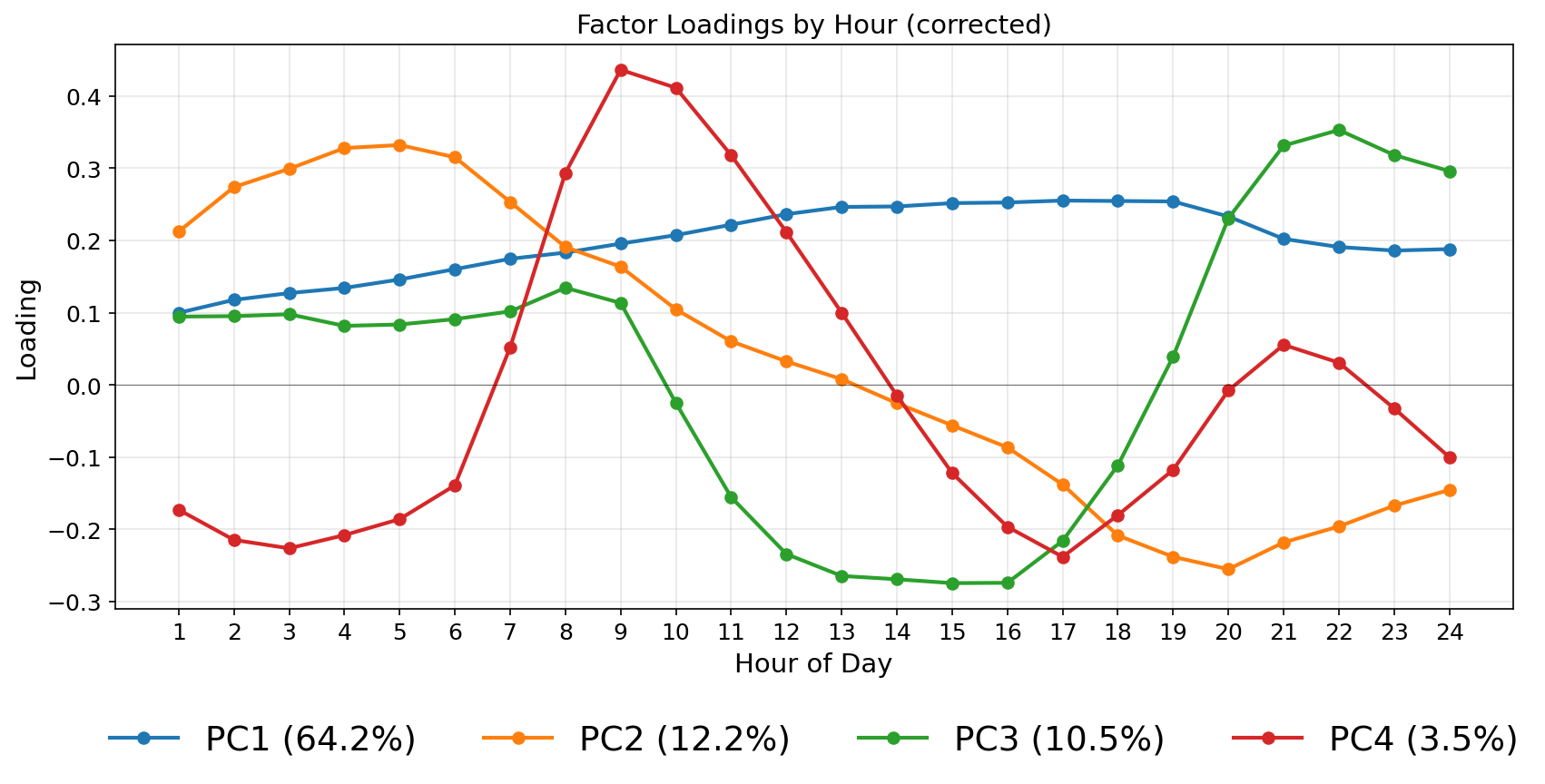}
    \caption{First four factor loadings in the Spanish generation zone. Note that due to rotation invariance, the loadings are only identified up to a sign. The legends also denote the percentage of variance explained by each component.}
    \label{fig:factor_loadings_ES}
\end{figure}

\begin{figure}
    \centering
    \begin{subfigure}[b]{0.4\textwidth}
        \includegraphics[width=\linewidth]{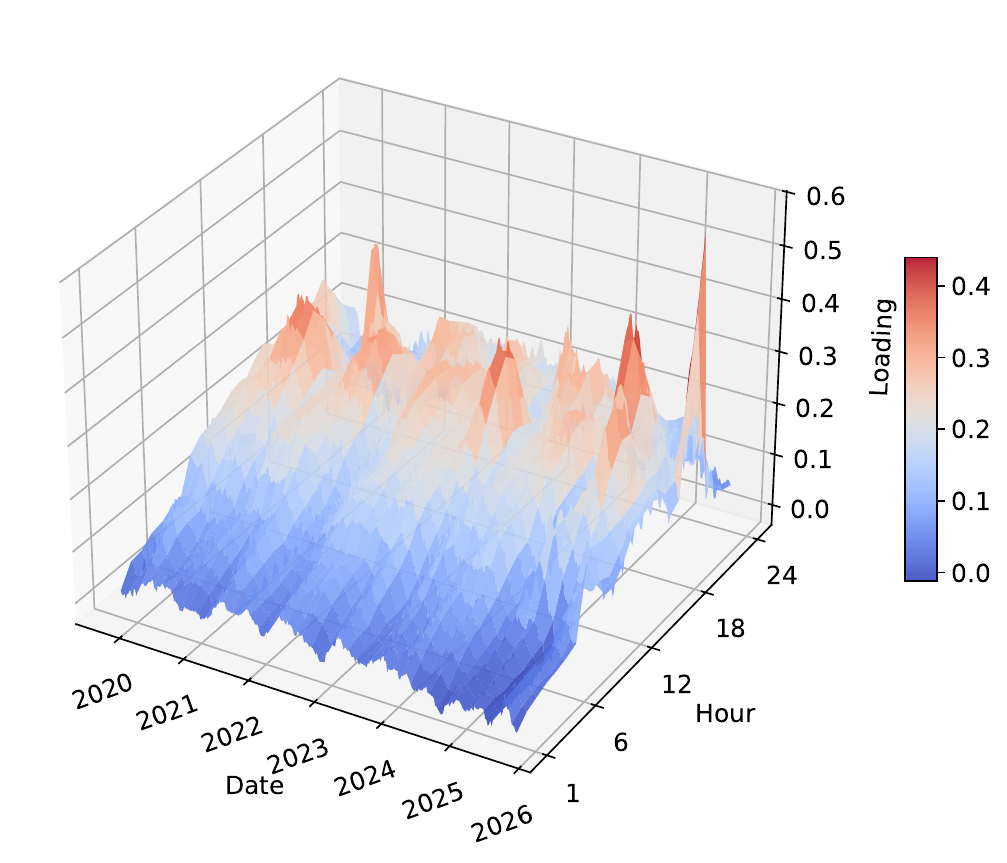}
        \caption{PC1}
    \end{subfigure}
    \hspace{0.04\linewidth}
    \begin{subfigure}[b]{0.4\textwidth}
        \includegraphics[width=\linewidth]{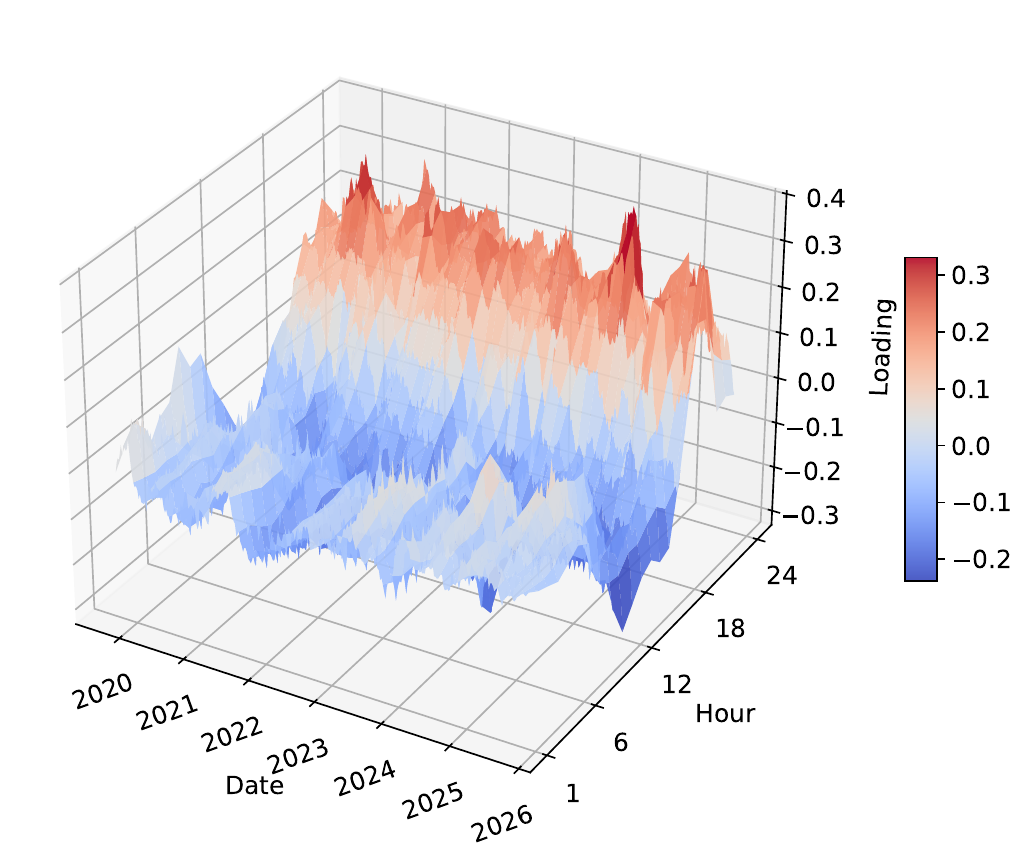}
        \caption{PC2}
    \end{subfigure}

    \vspace{0.3cm}

    \begin{subfigure}[b]{0.4\textwidth}
        \includegraphics[width=\linewidth]{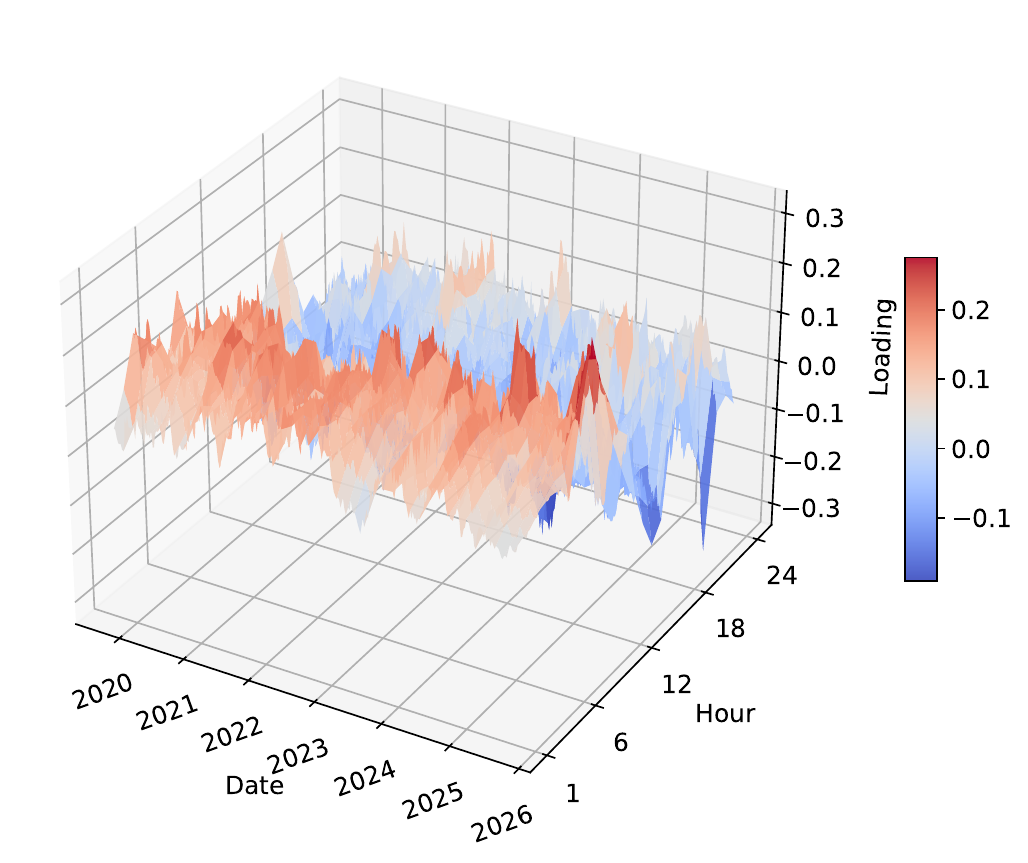}
        \caption{PC3}
    \end{subfigure}
    \hspace{0.04\linewidth}
    \begin{subfigure}[b]{0.4\textwidth}
        \includegraphics[width=\linewidth]{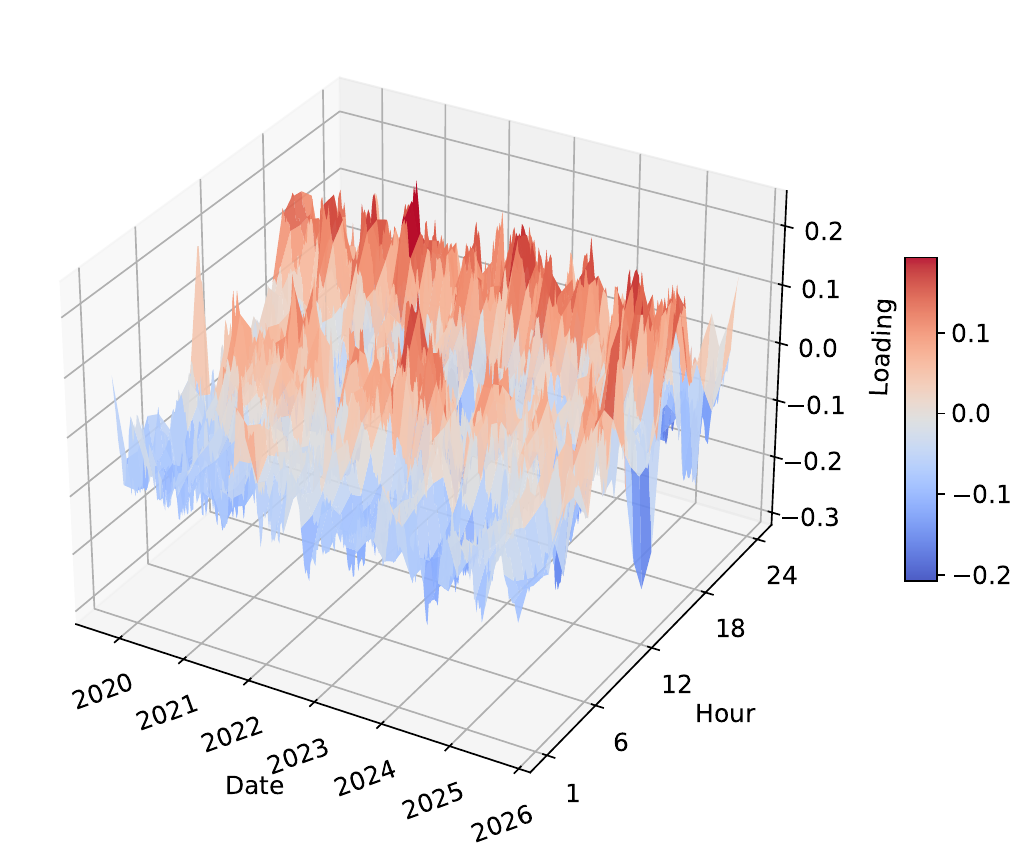}
        \caption{PC4}
    \end{subfigure}
    \caption{Factor loadings over time for the first four principal components in the Norwegian (NO2) generation zone.}\label{fig:loading_stability_NO_2}
\end{figure}

\begin{figure}
    \centering
    \begin{subfigure}[b]{0.4\textwidth}
        \centering
        \includegraphics[width=\textwidth]{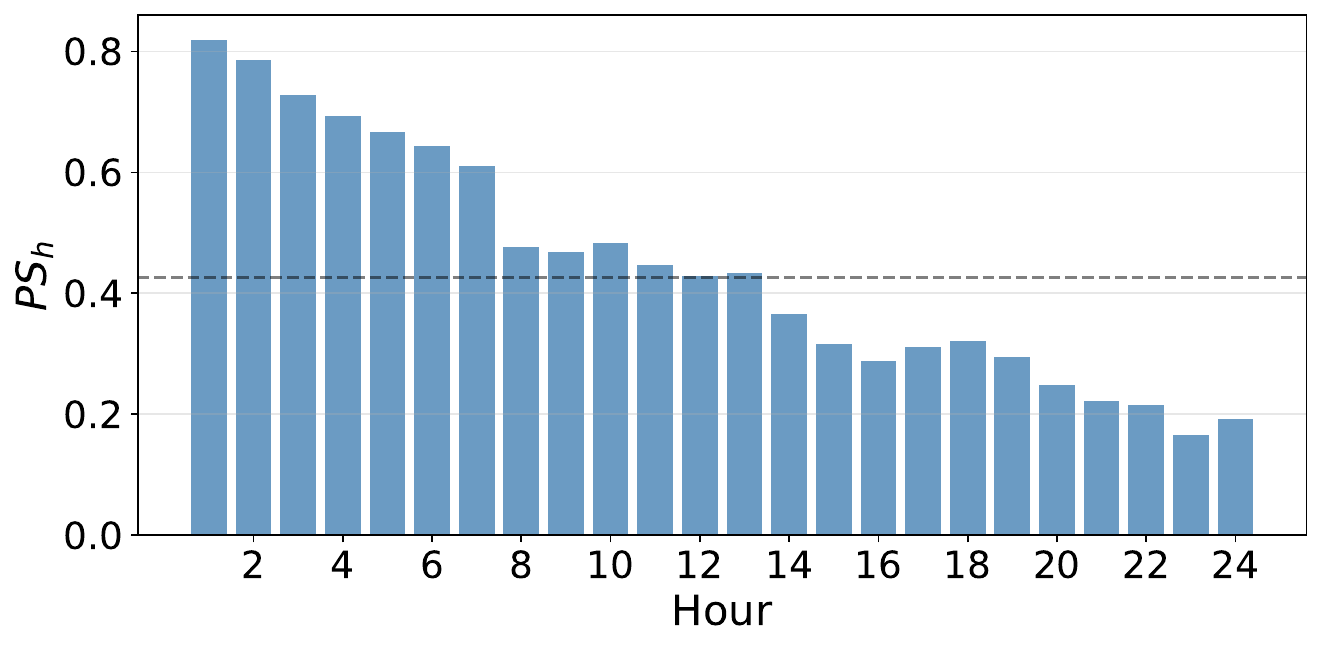}
        \subcaption{Propagation share per hour.}
    \end{subfigure}
    \hspace{0.04\linewidth}
    \begin{subfigure}[b]{0.4\textwidth}
        \centering
        \includegraphics[width=\textwidth]{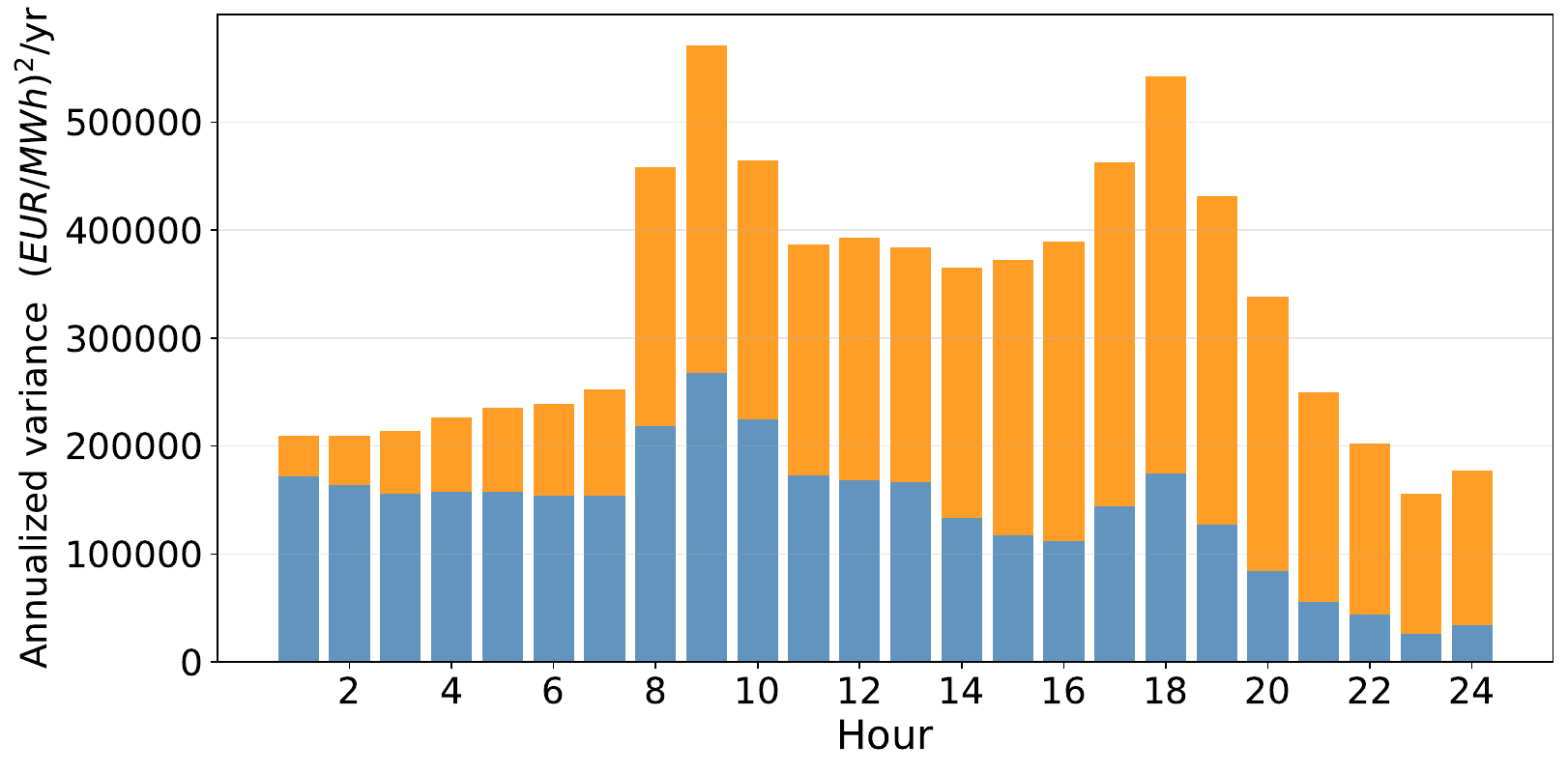}
        \subcaption{Propagation versus innovation.}
    \end{subfigure}
    \caption{Propagation share per hour in the Norwegian (NO2) generation zone, with the total average overlayed as a black line (left). Annualized total RCV in the Norwegian (NO2) generation zone with the blue bottom bar indicating the propagation effect and the orange top bar indicating the innovation effect (right).}
    \label{fig:propagation_NO_2}
\end{figure}

\begin{figure}
    \centering
    \begin{subfigure}[b]{0.4\textwidth}
        \includegraphics[width=\linewidth]{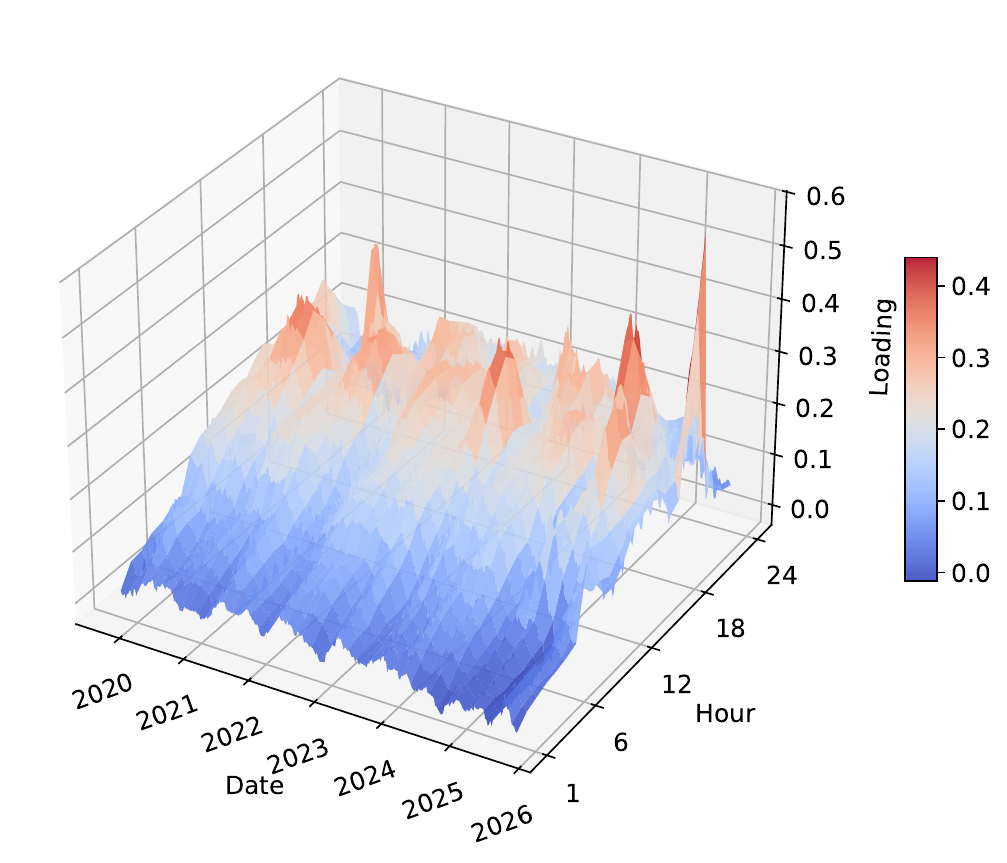}
        \caption{PC1}
    \end{subfigure}
    \hspace{0.04\linewidth}
    \begin{subfigure}[b]{0.4\textwidth}
        \includegraphics[width=\linewidth]{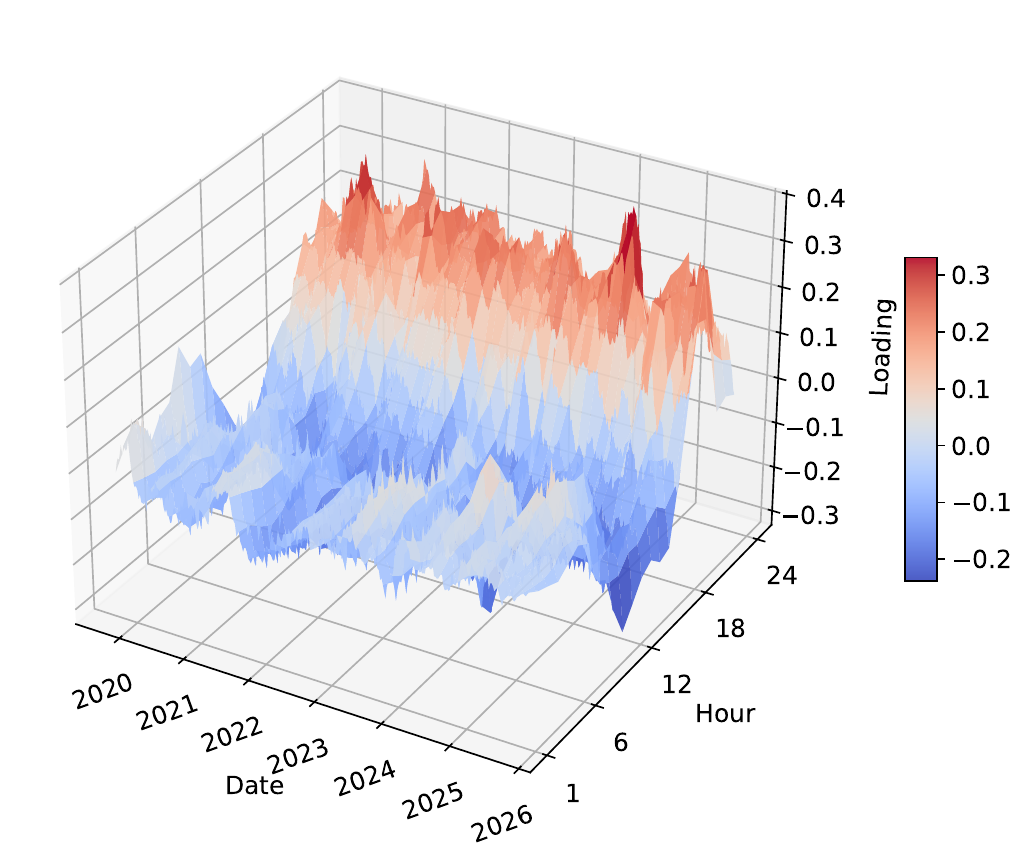}
        \caption{PC2}
    \end{subfigure}

    \vspace{0.3cm}

    \begin{subfigure}[b]{0.4\textwidth}
        \includegraphics[width=\linewidth]{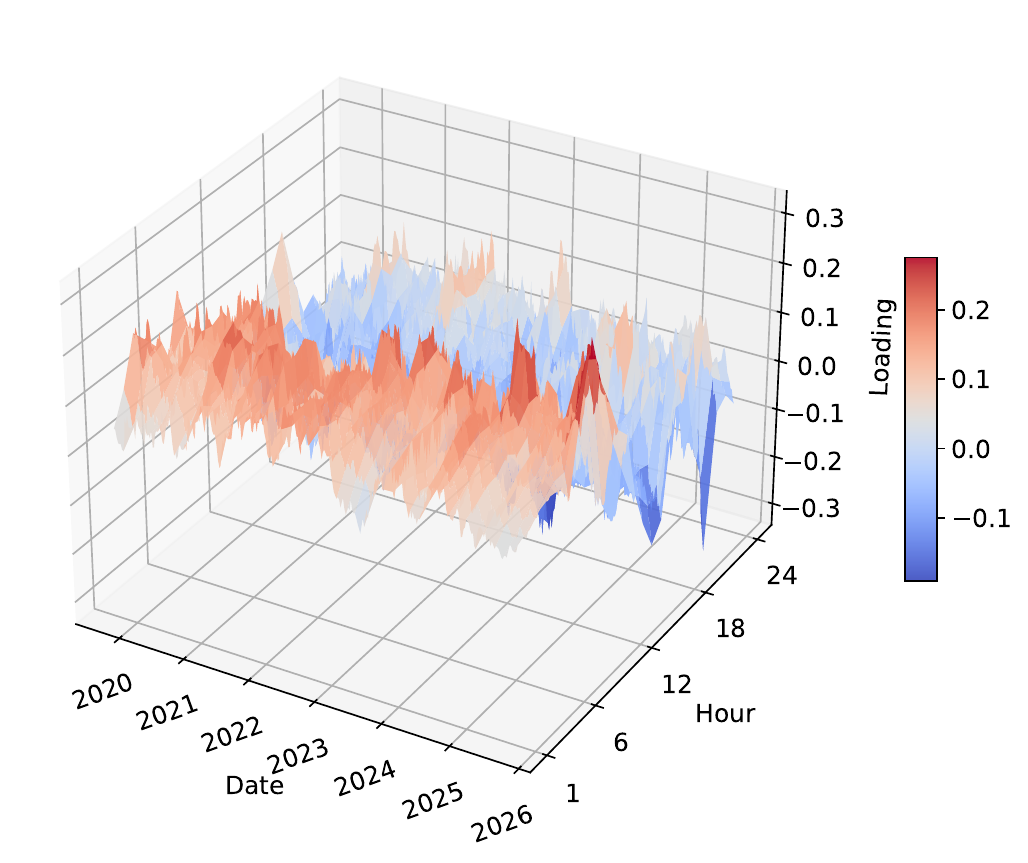}
        \caption{PC3}
    \end{subfigure}
    \hspace{0.04\linewidth}
    \begin{subfigure}[b]{0.4\textwidth}
        \includegraphics[width=\linewidth]{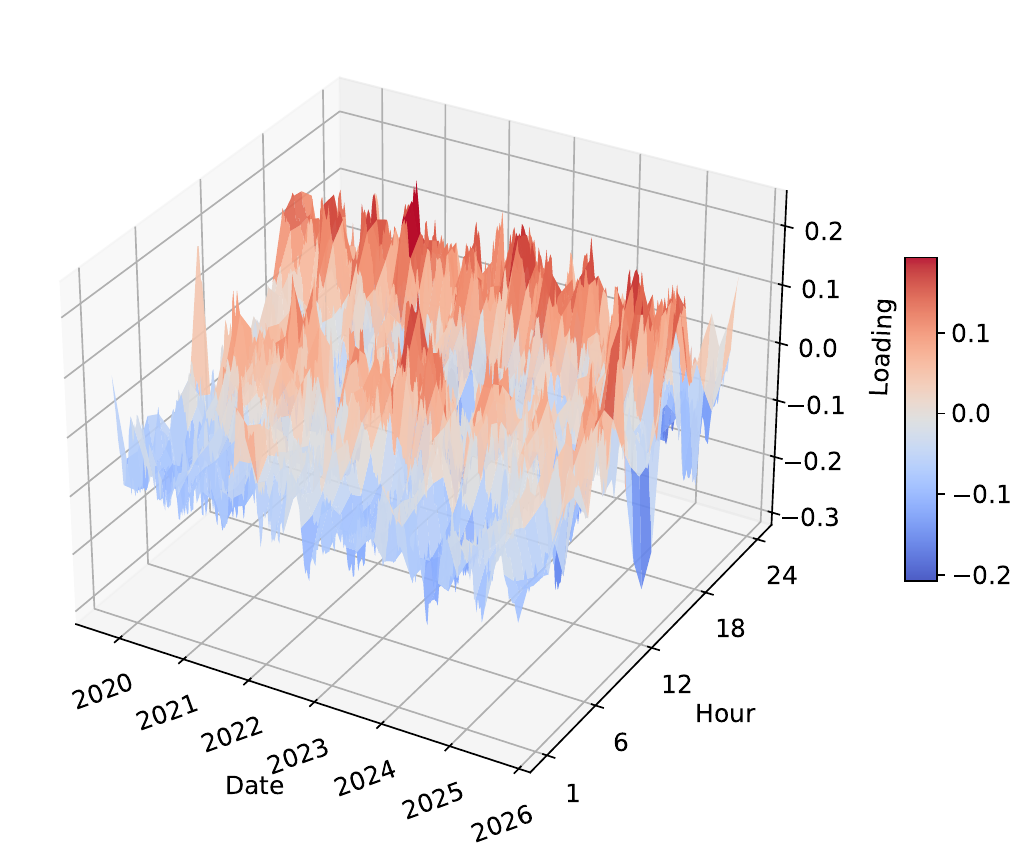}
        \caption{PC4}
    \end{subfigure}
    \caption{Factor loadings over time for the first four principal components in the Spanish generation zone.}\label{fig:loading_stability_ES}
\end{figure}

\begin{figure}
    \centering
    \begin{subfigure}[b]{0.4\textwidth}
        \centering
        \includegraphics[width=\textwidth]{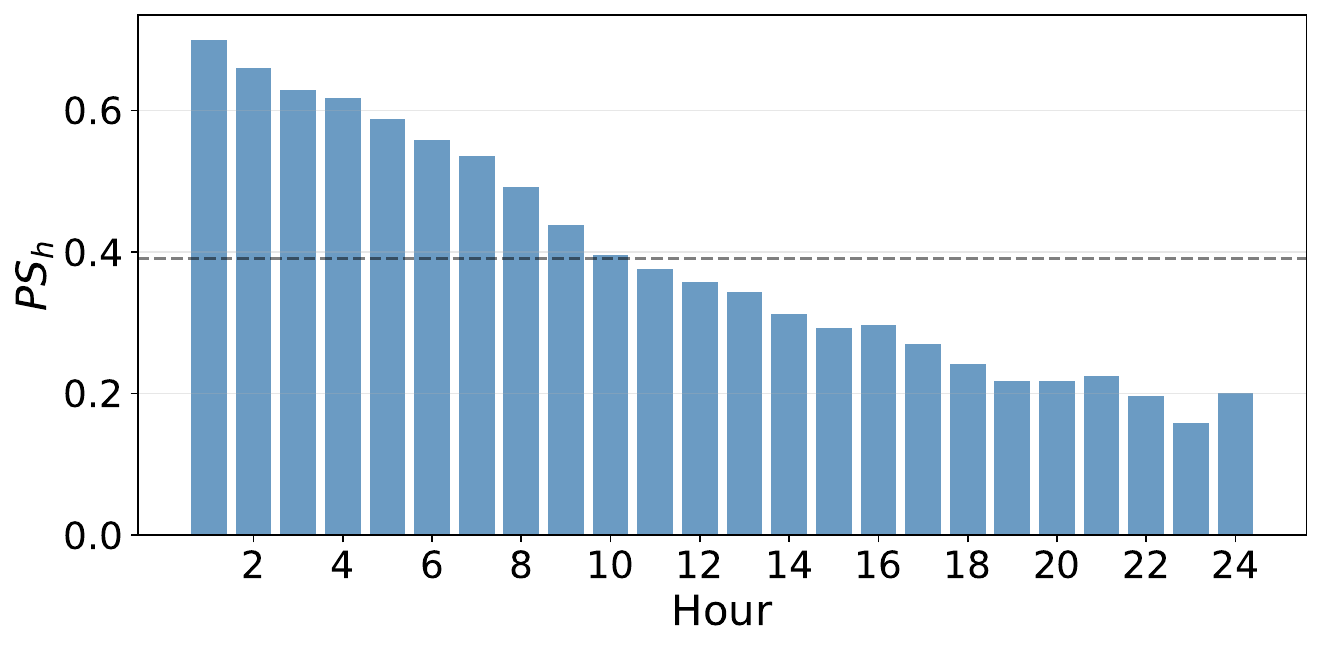}
        \subcaption{Propagation share per hour.}
    \end{subfigure}
    \hspace{0.04\linewidth}
    \begin{subfigure}[b]{0.4\textwidth}
        \centering
        \includegraphics[width=\textwidth]{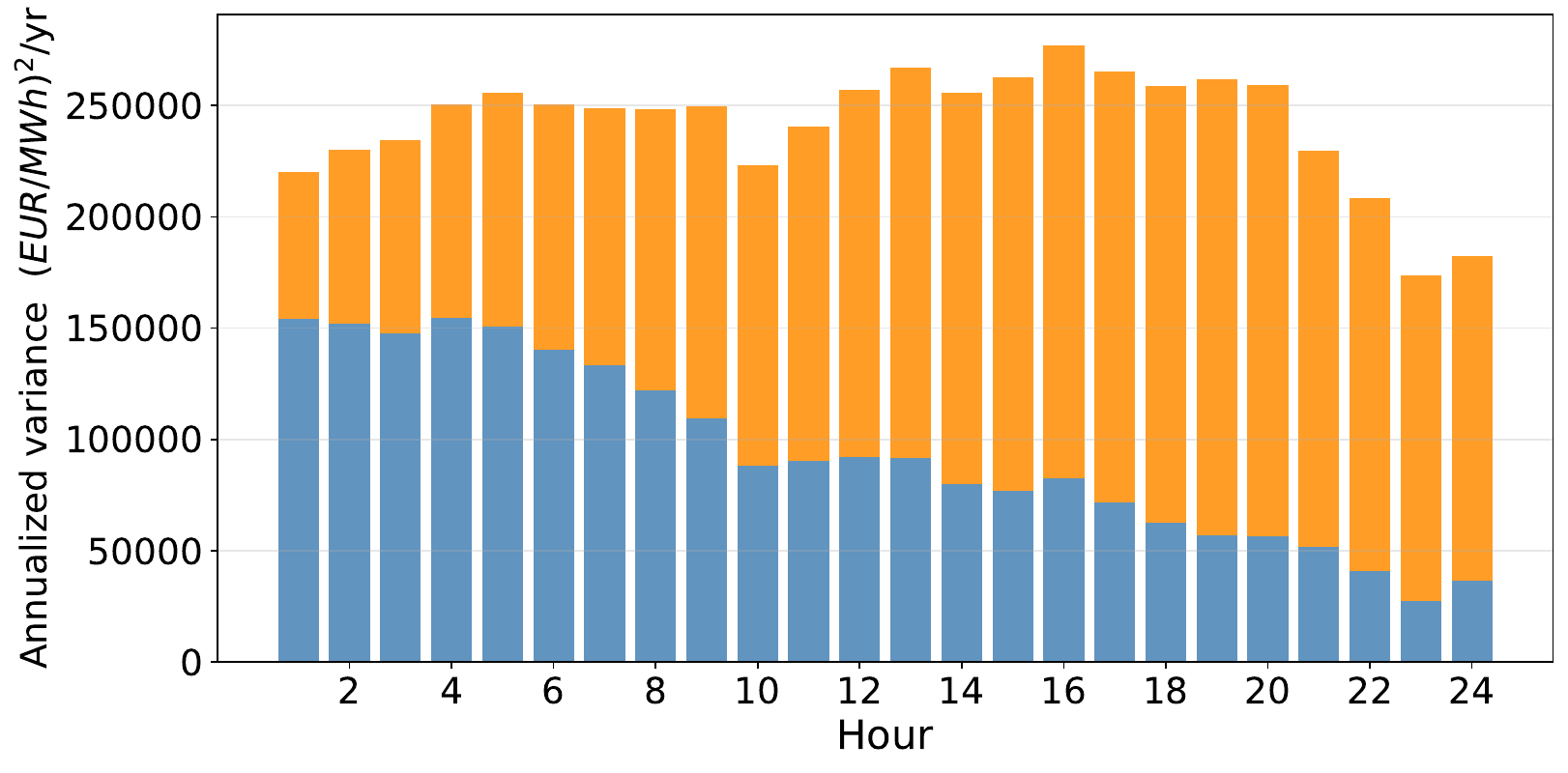}
        \subcaption{Propagation versus innovation.}
    \end{subfigure}
    \caption{Propagation share per hour in the Spanish generation zone, with the total average overlayed as a black line (left). Annualized total RCV in the Spanish generation zone with the blue bottom bar indicating the propagation effect and the orange top bar indicating the innovation effect (right).}
    \label{fig:propagation_ES}
\end{figure}

\begin{figure}
    \centering
    \begin{subfigure}[t]{0.4\textwidth}
        \includegraphics[width=\linewidth]{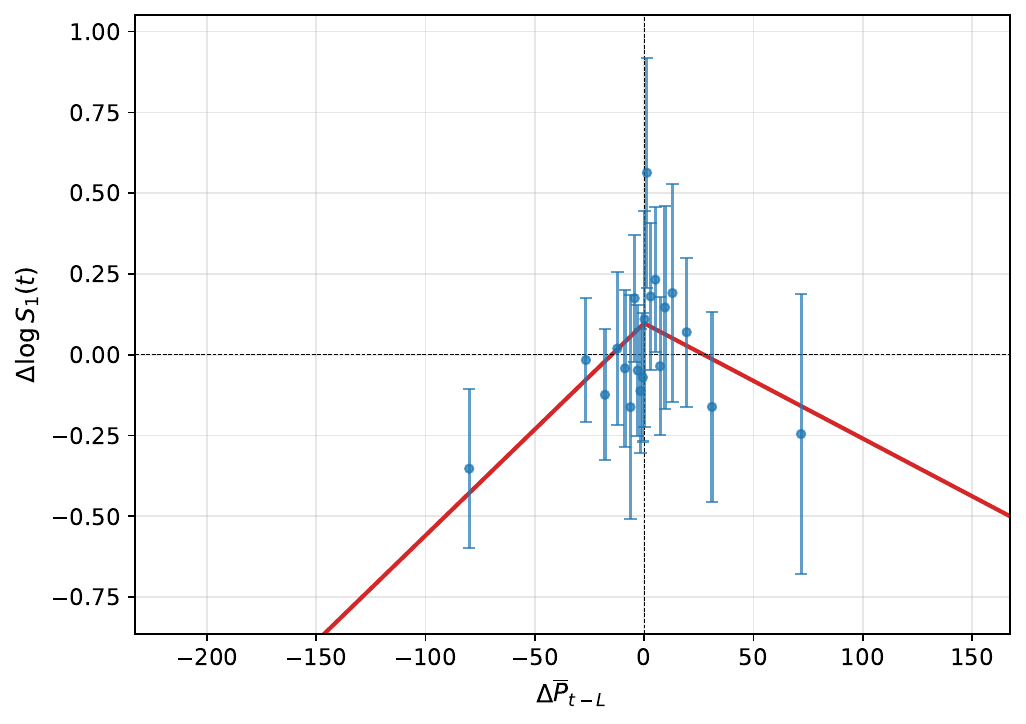}
        \caption{Unconditional regression, Wald test p-value for $\beta_1=\beta_2$ of $p=0.0056$.}
    \end{subfigure}
    \hspace{0.04\linewidth}
    \begin{subfigure}[t]{0.4\textwidth}
        \includegraphics[width=\linewidth]{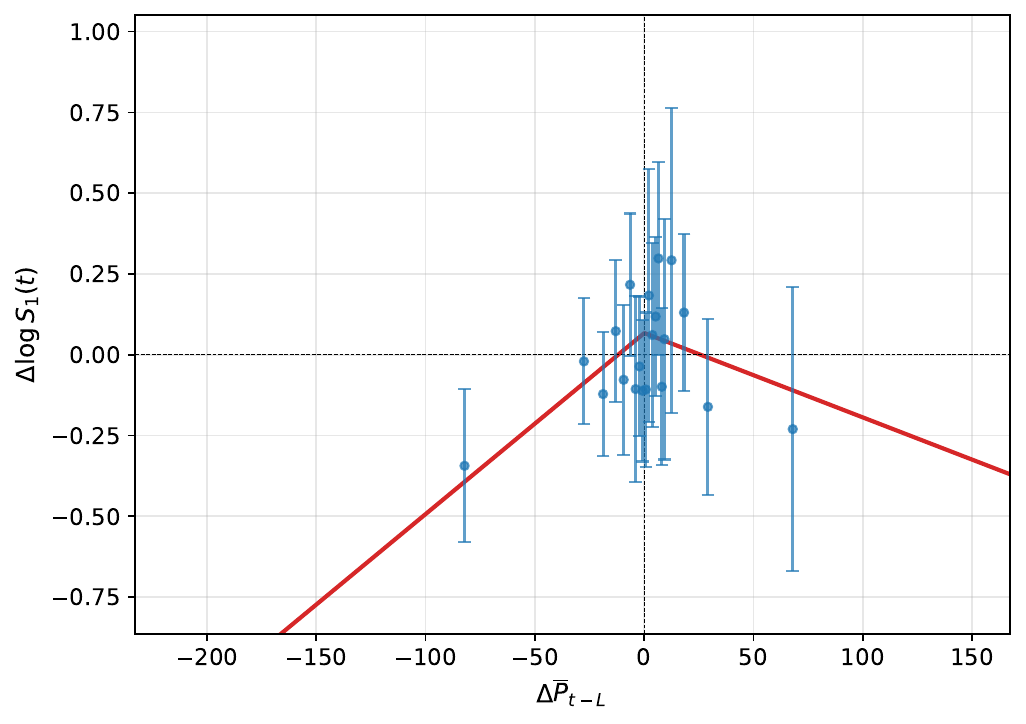}
        \caption{Price/level-scaling control regression, Wald test p-value for $\beta_1=\beta_2$ of $p=0.0310$.}
    \end{subfigure}

    \vspace{0.3cm}

    \begin{subfigure}[t]{0.4\textwidth}
        \includegraphics[width=\linewidth]{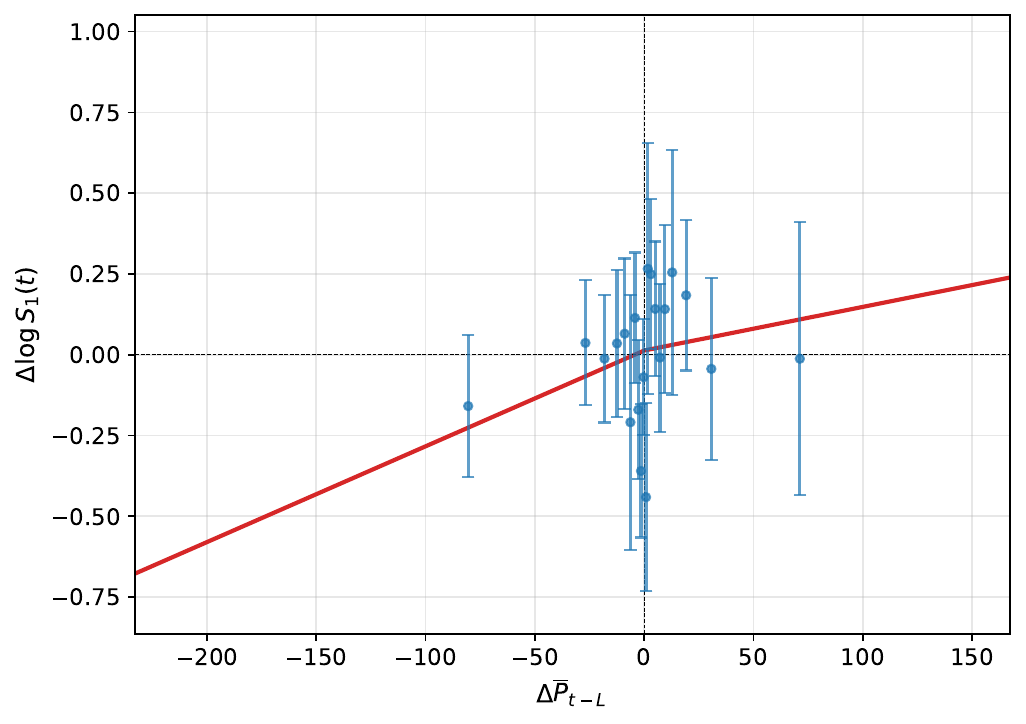}
        \caption{Mean-reversion control regression, Wald test p-value for $\beta_1=\beta_2$ of $p=0.6331$.}
    \end{subfigure}
    \hspace{0.04\linewidth}
    \begin{subfigure}[t]{0.4\textwidth}
        \includegraphics[width=\linewidth]{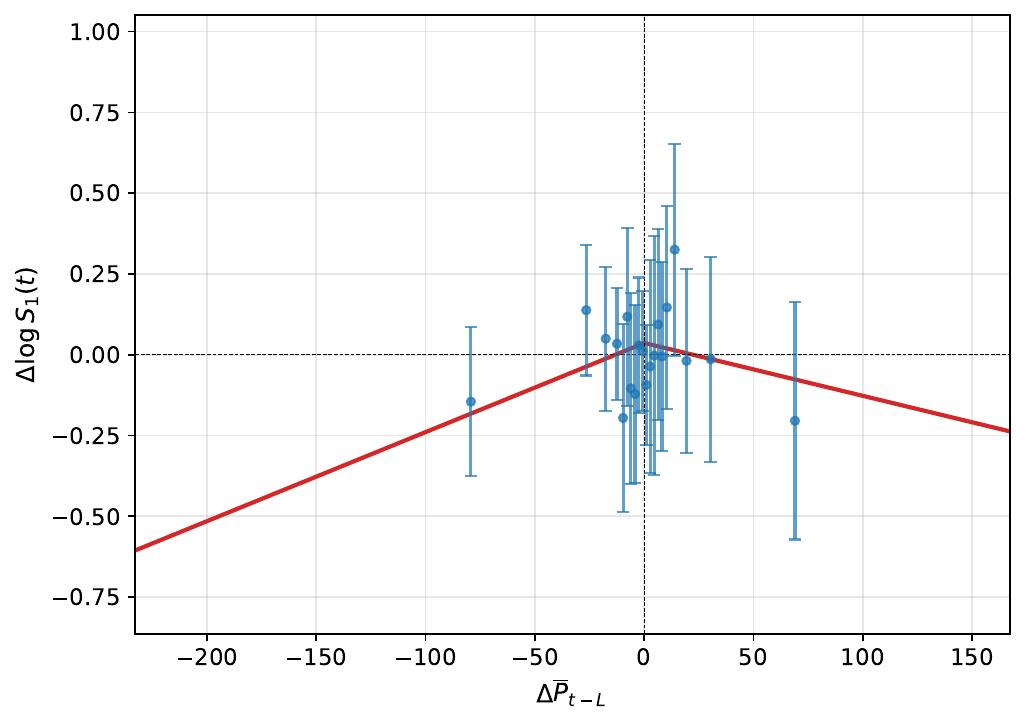}
        \caption{Joint control regression, Wald test p-value for $\beta_1=\beta_2$ of $p=0.1584$.}
    \end{subfigure}
    \caption{Estimated impact of price changes on $\Delta \log(S_1(t))$ in the Norwegian (NO2) generation zone. Blue dots represent the $\mu_b$ coefficients with $95\%$ confidence bands constructed with HAC standard errors. The red line is the regression line.}\label{fig:leverage_nic_NO2}
\end{figure}

\begin{figure}
    \centering
    \begin{subfigure}[t]{0.4\textwidth}
        \includegraphics[width=\linewidth]{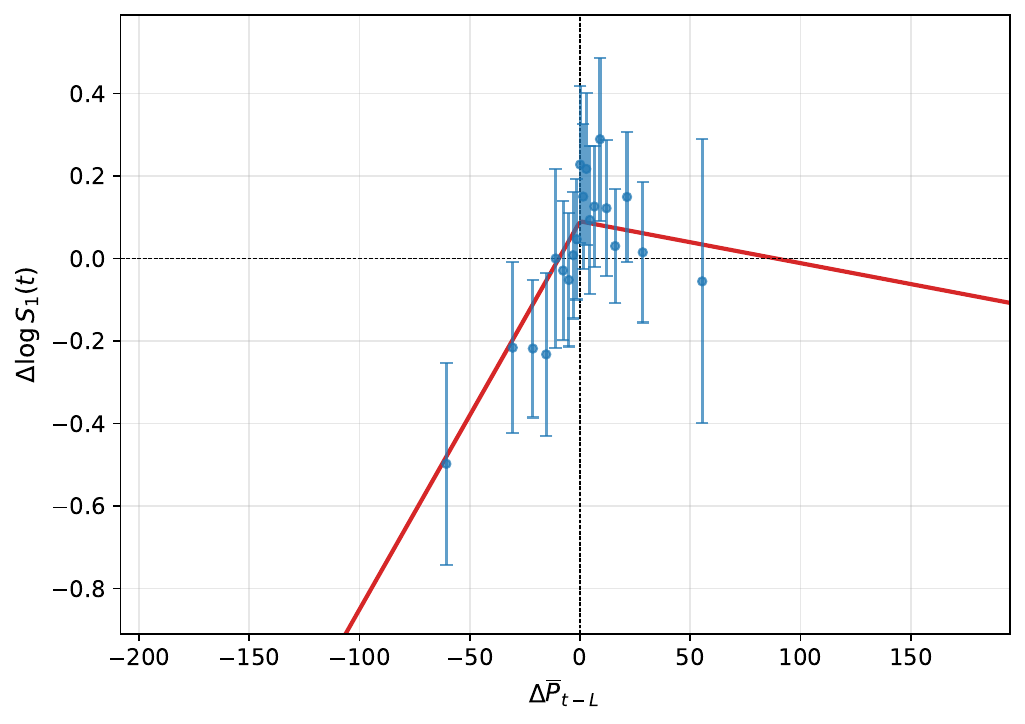}
        \caption{Unconditional regression, Wald test p-value for $\beta_1=\beta_2$ of $p=0.0049$.}
    \end{subfigure}
    \hspace{0.04\linewidth}
    \begin{subfigure}[t]{0.4\textwidth}
        \includegraphics[width=\linewidth]{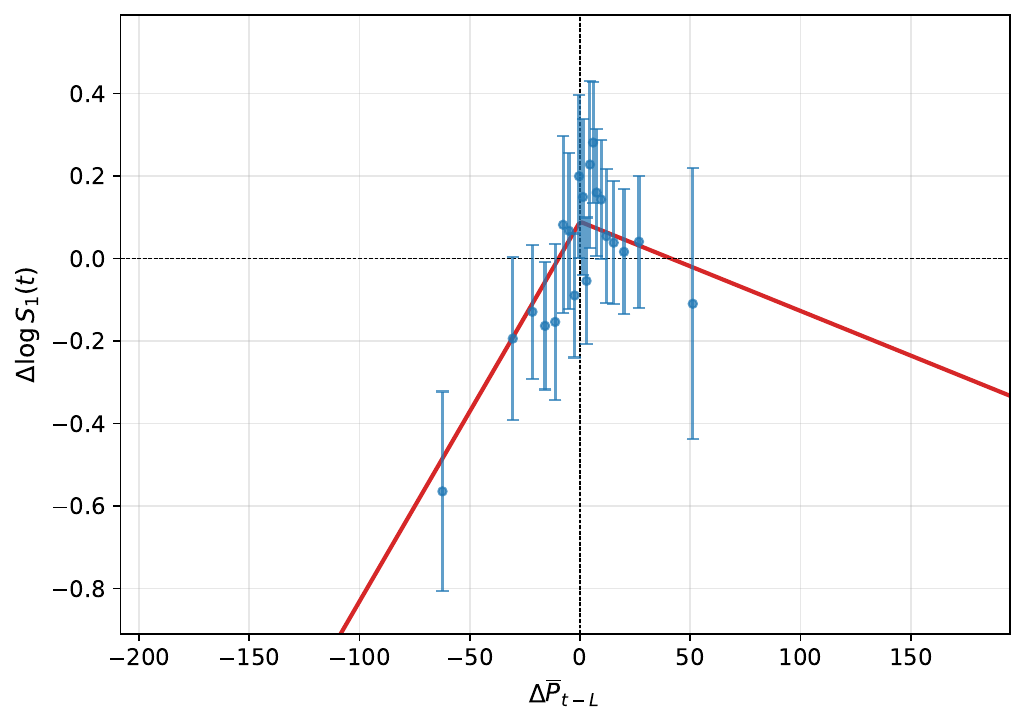}
        \caption{Price/level-scaling control regression, Wald test p-value for $\beta_1=\beta_2$ of $p=0.0030$.}
    \end{subfigure}

    \vspace{0.3cm}

    \begin{subfigure}[t]{0.4\textwidth}
        \includegraphics[width=\linewidth]{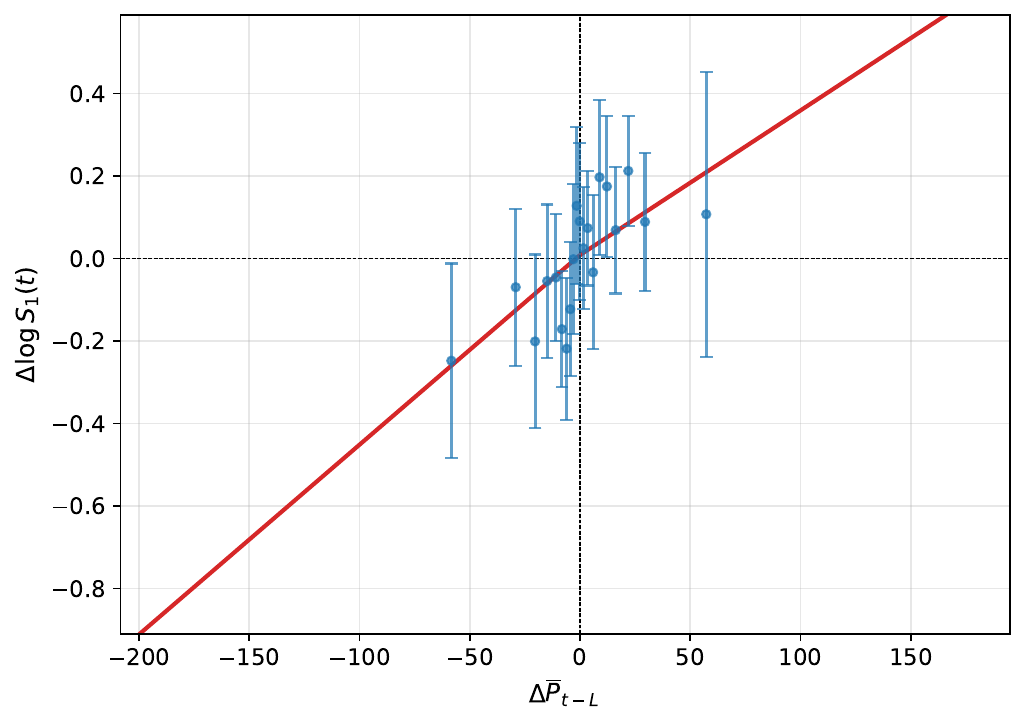}
        \caption{Mean-reversion control regression, Wald test p-value for $\beta_1=\beta_2$ of $p=0.7591$.}
    \end{subfigure}
    \hspace{0.04\linewidth}
    \begin{subfigure}[t]{0.4\textwidth}
        \includegraphics[width=\linewidth]{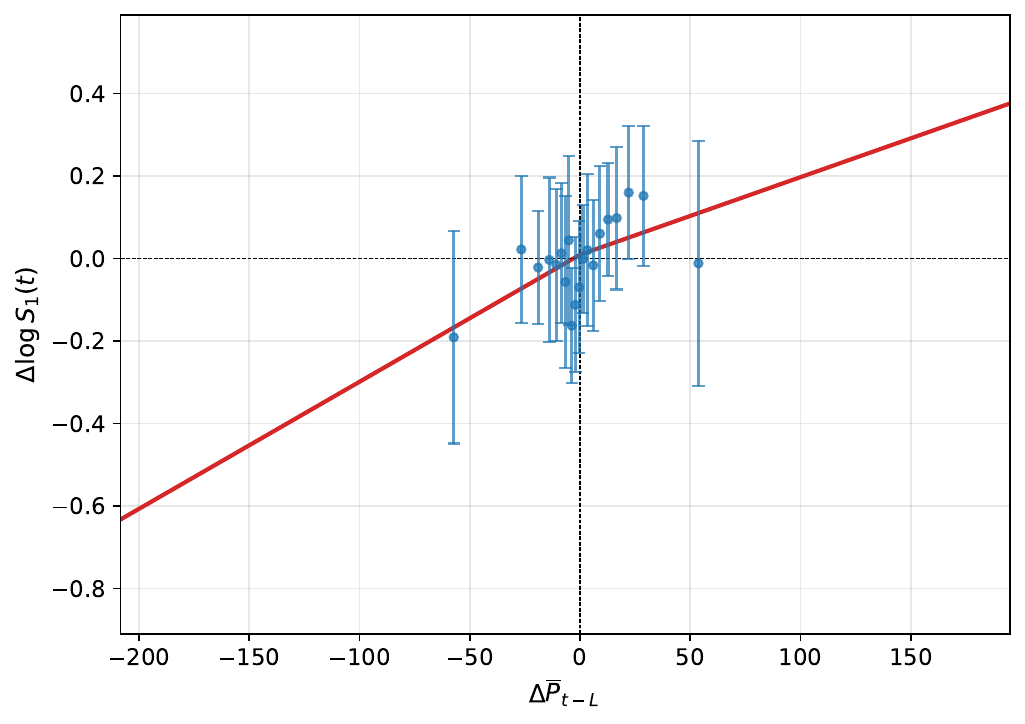}
        \caption{Joint control regression, Wald test p-value for $\beta_1=\beta_2$ of $p=0.7488$.}
    \end{subfigure}
    \caption{Estimated impact of price changes on $\Delta \log(S_1(t))$ in the Spanish generation zone. Blue dots represent the $\mu_b$ coefficients with $95\%$ confidence bands constructed with HAC standard errors. The red line is the regression line.}\label{fig:leverage_nic_ES}
\end{figure}

\end{document}